\documentclass[final,5p,times,twocolumn]{elsarticle}

\usepackage[T1]{fontenc}
\usepackage{lmodern}
\usepackage{microtype}
\usepackage{amsmath,amssymb}
\usepackage{amsfonts}
\usepackage{bm}
\usepackage[colorlinks=true, allcolors=blue]{hyperref}
\usepackage{geometry}
\usepackage{amsthm}
\usepackage{chngcntr}
\usepackage{slashed}
\usepackage{dsfont}
\usepackage{algorithm}
\usepackage{algorithmic}
\usepackage{physics}

\newcommand{\1}{\mathds{1}}
\newtheorem{theorem}{Theorem}[section]
\newtheorem{corollary}{Corollary}[section]
\newtheorem{remark}{Remark}[section]
\newtheorem{lemma}{Lemma}[section]

\counterwithin{equation}{section}

\newcommand{\pd}{\partial}
\newcommand{\diag}{\operatorname{diag}}
\newcommand{\Ra}{\Rightarrow}
\newcommand{\8}{\infty}

\newcommand{\I}{\mathrm{i}}

\newcommand{\VEV}[1]{\left\langle #1 \right\rangle}
\newcommand{\ff}[1]{\frac{\delta}{\delta #1}}
\newcommand{\fbyf}[2]{\frac{\delta #1}{\delta #2}}

\newcommand{\R}{\mathbb{R}}
\newcommand{\C}{\mathbb{C}}

\newcommand{\bz}{\bar{z}}
\newcommand{\bpsi}{\bar{\psi}}
\newcommand{\Mathematica}{\textit{Mathematica\textsuperscript{\resizebox{!}{0.8ex}{\textregistered}}}}

\newcommand{\cL}{\mathcal{L}}
\newcommand{\D}{\mathcal{D}}

\def\pbyp#1#2{\frac{\partial#1}{\partial#2}}

\let\oldexp\exp
\renewcommand{\exp}[1]{\oldexp\left(#1\right)}
\def\INT#1#2{\int_{#1}^{#2}}
\def\norm#1{\lVert #1\rVert}

\allowdisplaybreaks[1]

\newenvironment{smalleq}{%
 \begingroup
 \small
 \addtolength{\jot}{.2em}%
 \setlength{\arraycolsep}{1.2pt}%
}{%
 \endgroup
}
\def\EQ#1{\begin{smalleq}\begin{align}#1\end{align}\end{smalleq}}
\def\br{\nonumber\\}

\def\lrb#1{\left(#1\right)}

\def\INT#1#2{\int_{#1}^{#2}}
\def\oh{\frac{1}{2}}
\def\Xint#1{\mathchoice
   {\XXint\displaystyle\textstyle{#1}}%
   {\XXint\textstyle\scriptstyle{#1}}%
   {\XXint\scriptstyle\scriptscriptstyle{#1}}%
   {\XXint\scriptscriptstyle\scriptscriptstyle{#1}}%
   \!\int}
\def\XXint#1#2#3{{\setbox0=\hbox{$#1{#2#3}{\int}$}
     \vcenter{\hbox{$#2#3$}}\kern-.5\wd0}}

\def\dashint{\Xint-}

\def\const{\textit{ const }}

\begin{document}

\begin{frontmatter}

\title{Geometric QCD II: \\
The Confining Twistor String and Meson Spectrum}

\author{Alexander Migdal}
\address{Institute for Advanced Study, Princeton, NJ 08540, USA}
\ead{amigdal@ias.edu}

\begin{abstract}
We present a local, asymptotically free solution of the planar MM loop equations in the continuum limit with full Lorentz invariance. The solution is constructed by quantizing internal Majorana fermions (referred to here as ``elves'') on a rigid Hodge-dual minimal surface. These worldsheet degrees of freedom provide the mechanism to satisfy the unintegrated vector loop equations, with the Pauli principle enforcing planar factorization. In the local limit, the theory reduces to a confining analytic twistor-string representation. By analyzing the monodromy structure of the complexified effective action, we show that the discrete mass spectrum is organized by topological data associated with twistor singularities. The simplest sector with one branch point yields parametric Regge trajectories expressed by a universal formula with trigonometric functions. These trajectories are approximately linear for light mesons over a broad range and are in agreement with experimental data for \textbf{36 light, strange, charmed, and bottom mesons}. The asymptotic behavior of the trajectory $J = \alpha(M^2)$ and its subleading corrections arise from the interaction between the Liouville term and the twistor monodromy, without introducing additional assumptions about string excitations. In our solution, the QCD mass spectrum follows from a generalized eigenvalue problem in complexified phase space, reducing the problem to \emph{classical geometry}. Within this framework, the large-$N_c$ Master Field is realized as a trajectory in twistor space.
\end{abstract}

\end{frontmatter}
\section{Introduction}

The identification of a string description of large-$N_c$ QCD has been a long-standing problem since the formulation of the MM loop equations\footnote{The \textbf{Makeenko--Migdal} equations, referred to hereafter as the MM equations, exhibit factorization properties characteristic of a string description. The foundational derivation is due jointly to the author and Yuri Makeenko.}. A continuum realization remains nontrivial. A \textbf{technical challenge} is the ultraviolet singular behavior of the coordinate-space Wilson loop functional $W[C]$. Even after renormalization, $W[C]$ exhibits cusp and perimeter divergences, and the loop equation contains a \textbf{contact term} $\delta^{(4)}(x-y)$ at self-intersections. In the coordinate representation, the loop equation is local, while perturbative solutions are nonlocal and develop logarithmic singularities at cusps and intersections. 

In this work, we adopt a different approach. Rather than attempting to renormalize the coordinate-space Wilson loop as an observable, we formulate the dynamics in \textbf{momentum loop space}. The momentum loop functional $W[P]$, introduced in \cite{SecQuanM95}, is the functional Fourier transform of $W[C]$. In this representation, the planar loop equation becomes an algebraic--differential functional equation without coordinate-space delta functions. The geometric singularities of $W[C]$ are integrated out, and the continuum equations admit a local formulation appropriate for quark-loop amplitudes. 

The aim of this second paper in the \textit{Geometric QCD} series is to provide the dynamical construction complementing the geometric framework developed in Part I \cite{migdal2025geometric}. In that work, the Hodge-dual additive minimal surface $S_\chi[C]$ was shown to generate a confining factor $\exp{-\kappa S[C]}$ as a zero mode of the loop operator. Here, we introduce worldsheet degrees of freedom on this rigid surface and construct a functional $W[C]$ that satisfies the planar loop equations. Our framework combines four elements:

\begin{enumerate}
\item \textbf{Fermi string (``Elfin theory'').}  
Internal Majorana degrees of freedom on the worldsheet implement planar topology. The Pauli principle cancels nonplanar intersection configurations, leaving planar contributions. 
\item \textbf{Rigid Hodge-dual minimal surface.}  
The fermions propagate on a surface determined by the boundary loop. There is no integration over worldsheet metrics, and the geometry is fixed by the loop data. 
\item \textbf{Momentum loop dynamics.}  
In momentum loop space, the loop equation takes an algebraic--differential form. A representation in terms of the Cuntz algebra exists \cite{SecQuanM95}, but a one-dimensional Master Field construction exhibits a rank deficiency at order $\mathcal{W}^{(8)}$ in the Voiculescu expansion, indicating the need for additional degrees of freedom. 
\item \textbf{Twistor parametrization.}  
The reduced loop-space measure is parametrized by boundary twistor variables. This yields a twistor parametrization in which a boundary sigma model is coupled to a Liouville field determined by analytic continuation from the boundary.
 
\end{enumerate}

\begin{remark}{\textbf{Equation solving vs.\ model building.}}
The starting point of this work is the planar MM loop equations, treated as defining equations of large-$N_c$ Yang--Mills theory. The goal is to identify variables and a functional measure in which these equations admit a finite continuum realization. The ingredients of the construction are constrained by this requirement. The Hodge-dual minimal surface appears because its area satisfies the loop equation, while the Majorana determinant satisfies planar factorization, reproducing planar graphs of QCD. 
\end{remark}

\begin{remark}{\textbf{Role of vector loop equations and the extrinsic surface geometry.}}
A central technical point underlying this construction is that the full, unintegrated \textit{vector} form of the MM loop equation contains essential information that is not captured by scalar or integrated formulations. In particular, when the momentum loop functional $W[P]$ is expanded in a one-dimensional Taylor--Magnus series, the resulting algebraic system becomes overconstrained at order $\mathcal{W}^{(8)}$. This is not a limitation of the expansion technique but a genuine inconsistency: the number of independent tensor constraints exceeds the number of available scalar degrees of freedom. 

This obstruction indicates that a one-dimensional loop functional does not possess sufficient internal structure to represent a solution of the vector loop equations. The missing tensorial degrees of freedom are associated with the extrinsic geometry of the embedding surface. In the present work, these are provided by the boundary twistor variables and the fermionic determinant on the Hodge-dual minimal surface. From this perspective, the construction
 provides the geometric structure required for a consistent local solution of the planar loop equations in the continuum limit. 
\end{remark}

\section{Zero modes of Loop equation}
\label{sec:zeromodes}
The theory of Hodge-dual minimal surfaces and its application to the MM loop equation is summarized here. For detailed discussion and proofs of the relevant theorems, see \cite{migdal2025geometric}; further details on the loop calculus without divergences are provided in \cite{migdal2025SQYMflow,ReviewPaperAM}.

\subsection{Self-dual area derivative and loop equation}
As suggested in \cite{migdal2025geometric}, we examine the self-dual minimal surface, which we will later treat as a background geometry for the QCD string with attached Fermi degrees of freedom. The area derivative for a functional $W[C]$ is defined as the discontinuity of the second variation of the area:
\EQ{
\label{areaderDef}
&\fbyf{W}{\sigma_{\mu\nu}(\theta)} = \frac{\delta^2W}{\delta \dot{C}_\mu(\theta-0)\delta \dot{C}_\nu(\theta+0)}- \mu \leftrightarrow \nu
}
This implements the geometric definition of \cite{MMEq79, Mig83} and is formally equivalent to the functional derivative implementation in \cite{PolyakovRychkov}. A significant feature of this definition is the absence of divergences and its parametric invariance. In previous implementations, the zero modes were often complicated by kinematic divergences.

The loop equation for the Yang--Mills gradient flow reduces to the loop space diffusion equation \cite{migdal2025SQYMflow}, admitting the solution $W = \exp{-\kappa S[C]}$ provided $S[C]$ is a zero mode:
\EQ{
&\mathcal L_\nu(S) = 0;\\
& \mathcal L_\nu = \left(\ff{\dot{C}_\mu(\theta+0)}-\ff{\dot{C}_\mu(\theta-0)}\right)\ff{\sigma_{\mu\nu}(\theta)}
\label{loopEq}
}
There exists a Hodge-dual solution analogous to the Yang--Mills multi-instanton. The loop space diffusion equation \eqref{loopEq} is satisfied for any functional $S[C]$ with a self-dual (SD) or anti-dual (ASD) area derivative:
\EQ{
&\fbyf{S_\pm}{\sigma_{\mu\nu}(\theta)} = \pm\ast\fbyf{S_\pm}{\sigma_{\mu\nu}(\theta)};\\
&e_{\alpha\mu\nu\lambda}\left(\ff{\dot{C}_\alpha(\theta+0)}-\ff{\dot{C}_\alpha(\theta-0)}\right)\fbyf{S_\pm}{\sigma_{\mu\nu}(\theta)} \equiv 0;
\label{Bianchi}
}
The Bianchi identity \eqref{Bianchi} holds as a Jacobi identity for a triple commutator \cite{migdal2025SQYMflow}. The exponential of the Hodge-dual minimal area serves as a factor in a general solution to the loop equation. This zero mode relies on the Leibniz property of the loop operator:
\smalleq{
    \begin{align}
       & \mathcal L_\nu f(\Phi[C]) = f'(\Phi[C]) \mathcal L_\nu \Phi[C];\\
       & \mathcal L_\nu (A[C] B[C]) = \mathcal L_\nu( A[C]) B[C] +  A[C] \mathcal L_\nu (B[C])
    \end{align}
}
This property leads to:
\smalleq{
    \begin{align}
        &\mathcal L_\nu \lrb{\exp{- \kappa S[C]} W_0[C]}=\br
        &   -\kappa \mathcal L_\nu(S[C])\exp{- \kappa S[C]} W_0[C]\br
        & + \exp{- \kappa S[C]} \mathcal L_\nu(W_0[C]) = \exp{- \kappa S[C]} \mathcal L_\nu(W_0[C])
    \end{align}
}
We identify a self-dual minimal surface that is additive at self-intersections. This additivity ensures the confining factor $\exp{-\kappa S[C]}$ is compatible with the chain of loop equations.

\subsection{Compatibility with the MM loop equations}
The chain of loop equations for Wilson loops at finite $N_c$ \cite{MM1981NPB, Mig83} utilizes the same loop operator \eqref{loopEq} as the Yang--Mills gradient flow. We consider the leading order of the $1/N_c$ expansion using the loop calculus developed in \cite{migdal2025SQYMflow}. The MM loop equation is structured as follows:

\begin{figure}[htbp]
    \centering
    \includegraphics[width=0.9\linewidth]{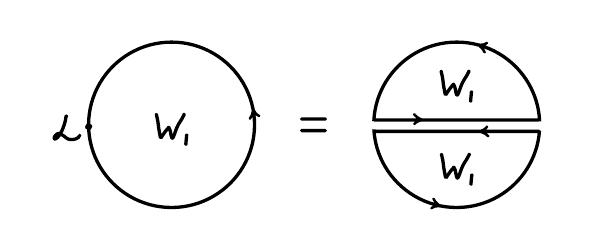}
    \caption{The MM equation, with the loop diffusion operator on the left side and the integral term on the right side. The double line represents the delta function $\delta^4(x-y)$ at self-intersections.}
    \label{fig:MMEquation}
\end{figure}

\EQ{
    & W[C] = 1/N_c \VEV{\tr \hat P \exp{\oint _C d x_\mu  A_\mu(x)}} ;\\
    \label{MMeq}
    &\mathcal L_\nu(W[C]) = \lambda \int_C d y_\nu\delta^4(x-y) W[C_{xy}]W[C_{yx}]
}
where $\lambda = N_c g_0^2$. This equation is understood in the sense of distributions. The confining factor $\exp{- \kappa S[C]}$ factors out by additivity and cancels on both sides of the equation independently of the splitting points $x,y \in C$.
\section{Hodge-dual minimal surface}
\label{sec:HDsurface}
We now define the functional $S[C]$ that satisfies the requirements of both duality and additivity. We follow the definitions and derivations from \cite{migdal2025geometric}, adapting them for the present purpose of a base for the fermions living on that minimal surface.

\subsection{The external and internal geometry}
We define the surface coordinates $X^i_\mu(\xi)$ (for $\xi= (\xi_1, \xi_2),i=1,2,3$. The field $X^i_\mu(\xi)$ maps the disk $\Sigma$ to $\R^3\otimes \R^4$.

\begin{figure}[htbp]
    \centering
    \includegraphics[width=1.\linewidth]{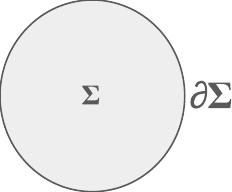}
    \caption{The unit disk $\Sigma$ is mapped to $\R^3\otimes \R^4$ and then projected to $\C^4$ by a holomorphic map, minimizing the area functional.}
    \label{fig:QCDSigma}
\end{figure}

Our surface area is defined as 
\begin{smalleq}     \begin{align}
\label{areaDef}
    &S_\chi[C] = \min_{X}\int_{\Sigma} d^2 \xi\sqrt{\Sigma_{\mu\nu}^2/2};
\end{align} \end{smalleq}
 The area element is defined as
\begin{smalleq}     \begin{align}
\label{sigmaDef}
    \Sigma_{\mu\nu} = \epsilon_{a b}\partial_a X^i_\mu\partial_b X^i_\nu;
\end{align} \end{smalleq}
 The Hodge chirality $\chi = \pm 1$ enters through the boundary conditions
 \begin{smalleq}
    \begin{align}
    \label{BCX}
        &X^i_\mu(\partial \Sigma) = \eta^{\chi,i}_{\mu\nu} C_\nu ;
\end{align}
\end{smalleq}
Here $\eta^{\chi,i}_{\mu\nu}$ are 't Hooft's matrices corresponding to Hodge duality $\chi = \pm 1$
\begin{smalleq}
    \begin{align}
      \eta^{\chi,i}_{\mu\nu} =  \delta_{i\mu} \delta_{\nu 4} - \delta_{i\nu}\delta_{\mu 4}  + \chi e_{i\mu\nu 4} ;
\end{align}
\end{smalleq}
The coordinate functions $X_\mu^a(\xi)$ satisfy the general Euler-Lagrange equations for the area density $\mathcal{L} = \sqrt{\Sigma^2/2}$:
\begin{smalleq}
    \begin{align}
   \left[ \text{E-L Equations} \right]_\mu^a =  \epsilon_{lm} \partial_l\left(  \frac{\Sigma_{\mu\nu} \partial_m X^a_\nu}{\sqrt{\Sigma^2}}\right)=0.
\end{align}
\end{smalleq}
Assuming the bulk variation vanishes by the Euler-Lagrange equations, we use Stokes theorem and get the boundary term
\begin{smalleq}
    \begin{align}
    \label{deltaSBC}
        \delta S = 2 \int d \theta   \frac{\delta X^i_\mu\Sigma_{\mu\nu}\pd_{\theta}X^{i}_\nu}{\sqrt{\Sigma^2}}
    \end{align}
\end{smalleq}
Substituting the boundary conditions \eqref{BCX} for $\delta X, \pd_{\theta}X$ and summing over $a = 1,2,3$ we find
\EQ{
    & \delta S = -2 \int d \theta  \frac{\delta C_\mu \Sigma_{\mu\nu}\pd_{\theta}C_\nu}{\sqrt{\Sigma^2}};\\
    \label{arderHodge}
    & \fbyf{S}{\sigma_{\mu\nu}} = - 2 T_{\mu\nu};\\
    & T_{\mu\nu} = \frac{\Sigma_{\mu\nu}}{\sqrt{\Sigma^2}}
}
Note the change of sign, compared with the ordinary area derivative of the minimal surface in four Euclidean dimensions. Regardless of this sign change, the Hodge duality of the area element $\Sigma$ with these boundary conditions for a generic field $X^a_\mu(z,\bar z)$ leads to the same Hodge duality of the area derivative. This self-duality of the area derivative makes our Hodge-dual minimal area a zero mode of the loop diffusion operator, leading to the solution of the MM loop equations. The area functional is invariant under the reparametrizations of the boundary loop $\delta_{\text{param}} S = 0$ by the skew symmetry of $\Sigma_{\mu\nu}$.

\subsection{The Holomorphic Map}
We resolve the duality constraint using the following holomorphic Ansatz:
\begin{smalleq}
    \begin{align}
    \label{HDsurface}
        &X^i_\mu = \eta^{\chi,i}_{\mu\nu} Y_\nu\\
        & Y_\mu = f_\mu(z) + \bar{f}_\mu(\bar{z});\\
        & z = \xi_1 + \I \xi_2;
\end{align}
\end{smalleq}
Geometrically, this is a projection from $\R^3\otimes \R^4$ to $\C^4$. The boundary conditions for the Hodge-dual surface become a conventional boundary condition for the Dirichlet problem
\begin{smalleq}     \begin{align}
\label{BCf}
    2\Re f_\mu(e^{i \theta}) = C_\mu(\theta)
 \end{align} \end{smalleq}
Our surface area \eqref{areaDef} reduces to the induced metric in this space
\begin{smalleq}  
\begin{align}
\label{SigmaChi}
        &\Sigma_{\mu\nu}= 2  (F_{\mu\nu} + \chi \ast F_{\mu\nu});\\
        & F_{\mu\nu} = \I(f'_\mu \bar{f}'_\nu - \bar{f}'_\mu f'_\nu);\\
    &\sqrt{\Sigma_{\mu\nu}^2/2}  = 2\sqrt{2} \sqrt{-\det g};\\
    & g_{ab}= \pd_{z_a} Y_\mu \pd_{z_b}  Y_\mu; \quad z_a,z_b = (z, \bar z)
\end{align}
\end{smalleq}
This area element is Hodge-dual $\ast \Sigma_{\mu\nu} = \chi\Sigma_{\mu\nu}$ for arbitrary functions $f_\mu(z)$. This duality condition provides the self-duality of the area derivative of the minimal area.

\subsection{The Virasoro Constraint and Uniformization}
We further impose a Virasoro constraint on the holomorphic maps derivatives
\begin{smalleq}     \begin{align}
\label{Virasoro}
    f'_\mu(z)^2 =0
 \end{align} \end{smalleq}
The imposition of the null constraint $(f')^2 = 0$ is a  gauge fixing procedure rooted in the geometric symmetries of the problem. This allows us to work in isothermal (conformal) coordinates, where the induced metric is diagonal: $g_{zz} = (f'_\mu)^2 = 0$. The existence of such a coordinate system is guaranteed by the Uniformization Theorem. In this gauge, the non-linear geometric area density simplifies to the quadratic Lagrangian:
\begin{smalleq}
    \begin{align}
    \label{Dirichlet}
       & S_\chi[C] = 2 \sqrt{2} \int_D |f'(z)|^2 d^2 z ;\\
       \label{areader}
       & \fbyf{S_\chi[C]}{\sigma_{\mu\nu}} = -2 \frac{\Sigma_{\mu\nu}}{\sqrt{\Sigma_{\mu\nu}^2}}
    \end{align}
\end{smalleq}
The solution of the linear boundary problem \eqref{BCf} for a holomorphic vector function is given by the Hilbert transform:
\begin{smalleq}
    \begin{align}
       & f_\mu = \oh(1 + \I\mathcal H) C_\mu;\\
       & H[u](\theta)\;=\;\frac{1}{2\pi}\,\mathrm{P.V.}\!\int_{0}^{2\pi} d\theta'\;
\cot\!\Big(\frac{\theta-\theta'}{2}\Big)\,u(\theta')  \label{HilbertTrans}\\
       & \int_D |f'(z)|^2 d^2 z \;=\;\int_{0}^{2\pi} d\theta\; C'_\mu(\theta)\,H[C_\mu](\theta);
\end{align}
\end{smalleq}
In terms of the Taylor coefficients:
\begin{smalleq}
    \begin{align}
    \label{HilbertFourier}
    & f_\mu(z) = \sum_{n>0} z^n C_{\mu,n};\\
     & C_\mu(\theta) = \sum_{n=-\infty}^\infty e^{\I n \theta} C_{\mu,n}
    \end{align}
\end{smalleq}
The internal geometry of the Hodge-dual surface is identical for the Hodge-dual surface and the Goldschmidt solution for the Plateau problem in 4D. This property is central for establishing the MM equation for the fermionic determinant.

The four-dimensional solution for the minimal surface uses twistors (see \ref{GenTheory}). This is the solution that we use later to build the Regge trajectories.

\subsection{The Physical Vacuum, Master field and Multi-Instanton Resummation}

Before formulating the dynamics of the Fermi string on the minimal surface, we clarify the origin of the rigid geometric background. The MM loop equations serve as quantum equations of motion for Planar QCD. As with any equations of motion, they admit a family of solutions distinguished by the choice of vacuum. The freedom to multiply the perturbative loop functional by a zero mode reflects this vacuum ambiguity.

In the traditional gauge-field picture, the nonperturbative QCD vacuum is a complex mixed state, often modeled as a distribution in multi-instanton moduli space. Because any individual multi-instanton configuration breaks translation invariance, restoring Poincaré symmetry requires integration over the full moduli space.

In the loop-space formulation, this ensemble is represented geometrically. Just as instantons are characterized by the self-duality of the gauge field strength, the corresponding zero mode is characterized by the self-duality of the loop-space area derivative. The instanton moduli are mapped to the moduli of the twistors parametrizing the four-dimensional minimal surface.

The physical meaning of this geometric representation is analogous to the Master Field proposed by Witten \cite{Witten:1980ez}, which describes the large-$N_c$ limit in terms of a single classical configuration in an infinite-dimensional space. In the present formulation, the Hodge-dual minimal surface plays this role. It provides a classical, translation-invariant object encoding the nonperturbative vacuum, without requiring an explicit construction in terms of gauge-field configurations. In this sense, the minimal surface represents the loop-space realization of the Master Field, with its twistor moduli corresponding to the degrees of freedom associated with the multi-instanton ensemble.

Moreover, the whole surface $X^i_\mu(\xi) \in \mathbb{R}^3 \otimes \mathbb{R}^4$ has the same algebraic structure as a gauge field in an $SU(2)$ subalgebra. The observables in our theory are the loop functionals and their area derivatives, related to the left and right Gauss maps of the surface. The twistor construction of the minimal surface is thus analogous to a multi-instanton configuration.

Rotations in the internal $\mathbb{R}^3$ act as $SO(3)$ transformations on $X^i_\mu$, corresponding to $SU(2)$ gauge transformations, and physical observables depend only on $SO(3)$-invariant combinations.

The minimal surface described by these twistors is non-fluctuating, reflecting the classical nature of the Master Field in the planar limit.

Thus, the complex mixed state of multi-instantons in coordinate space is mapped, in loop space, to a single translation-invariant geometric object: a Hodge-dual minimal surface. In this representation, the detailed distribution of topological charge and the explicit multi-instanton moduli are not required. The loop equations can be solved directly, without specifying the underlying measure in coordinate space.

The resulting geometric object defines a translation-invariant nonperturbative vacuum for the worldsheet fermions. The degrees of freedom associated with instanton moduli are encoded in the moduli of the minimal surface. In four dimensions, these moduli admit a natural description in terms of holomorphic twistor data.

\section{The Elfin theory on a flat surface}
\label{Elfintheory}

The elfin theory is defined here independently of the gauge theory; the equivalence is established in subsequent sections.

Let us consider the flat surface bounded by some contour $C$. 
Later, we generalize this to the Hodge-dual minimal surface. All the functionals on the surface will depend on the conformal factor in the mass term
\begin{smalleq}\begin{align} \label{2.5}
e = \pd_+ X^i_\mu \pd_- X^i_\mu.
\end{align}\end{smalleq}
On the minimal surface, we have a conformal map \eqref{HDsurface}, reducing the conformal metric to $e =2 \sqrt{2} |f'_\mu|^2$. This reduction is not needed for the general theory of fermions on the minimal surface, but it will be used in the forthcoming sections when we establish a relation between the fermionic determinants and the MM loop equation.

\begin{remark}[Holes, poles, and global signs]
Multiply connected worldsheets (holes/handles) may require multivalued primitives $f_\mu(z)=\int^z f'_\mu(\zeta)\,d\zeta$ (logarithms/periods), even when the tangents $f'_\mu$ are single-valued meromorphic.
In the applications to the QCD spectrum, we allow such meromorphic $f'_\mu$ with poles in the unit disk; the induced conformal factor
$e(z,\bar z)\propto |f'_\mu(z)|^2$ is single-valued but may diverge at these poles.

This does not invalidate the analysis of the Elfin determinant in the flat gauge: closed fermion paths are weighted by
$\exp{-m\int_\Gamma \sqrt{e}\,|dz|}=\exp{-m\,\ell_\Gamma}$, hence any path approaching a pole (where $\sqrt{e}\sim |z-w|^{-1}$ for a simple pole of $f'$) has $\ell_\Gamma\to\infty$ and is exponentially suppressed.
Since $\ell_\Gamma$ is a local additive functional, the length weights of the locally paired configurations (touching vs.\ intersecting) coincide, while their mod-$2$ intersection signs are opposite, yielding the same cancelation mechanism.

Finally, while handles do not affect the local turning-angle contribution to the loop sign, they can affect global signs of the fermion propagator/determinant (spin-structure/holonomy) and complicate the topological analysis of the corresponding minimal-surface embeddings.
\end{remark}

Let us now return to the flat surface and introduce the Elfin field as a bispinor
\begin{smalleq}\begin{align} \label{2.10}
\psi = \psi^\alpha_\lambda(\xi), \quad \alpha = 1,2, \quad \lambda = \pm 1.
\end{align}\end{smalleq}

There are various equivalent forms of the action. The simplest form reads
\begin{smalleq}\begin{align} \label{2.11}
A_0 = \int \dd^2\xi (\bpsi_\lambda  \sigma^k \pd_k \psi_\lambda +  \bpsi_\lambda \psi_\lambda m \sqrt{e}),
\end{align}\end{smalleq}
where $\sigma_1, \sigma_2$ are the Pauli matrices. We shall also use the complex components
\begin{smalleq}
    \begin{align}
        \sigma_\pm &= \oh(\sigma_1 \pm \I \sigma_2);\\
        \pd_\pm &= \pd_1 \pm \I \pd_2;\\
        \label{2.15}
\psi^\pm_\lambda &= \frac{1}{2} (1 \pm \sigma_3)\psi_\lambda . \\
\label{2.16}
\bpsi^\pm_\lambda &= \psi^\dagger_\lambda \frac{1}{2} (1 \pm \sigma_3). 
    \end{align}
\end{smalleq}This action is covariant with respect to conformal transformations 
\begin{smalleq}\begin{align} 
\label{2.12}
e(\xi_+, \xi_-) &\to f'_+(\xi_+) f'_-(\xi_-) e(f_+, f_-),\\
\label{2.13}
\psi^\pm_\lambda(\xi) &\to \sqrt{f'_\pm}\psi^\pm_\lambda(f),\\
\label{2.14}
\bpsi^\pm_\lambda(\xi) &\to \sqrt{ f'_\mp} \bpsi_\lambda(f);
\end{align}\end{smalleq}
At the boundary $\dot\Sigma$, we require that the spin $\sigma_3$ takes definite values, namely:
\begin{smalleq}\begin{align} 
\label{2.17}
\sigma_3 \psi_\lambda = \lambda \psi_\lambda, \quad \text{at }\partial \Sigma,\\
\label{2.18}
\bpsi_\lambda \sigma_3 = \lambda \bpsi_\lambda, \quad \text{at } \partial \Sigma.
\end{align}\end{smalleq}
In other words, at the boundary
\begin{smalleq}\begin{align} 
\label{2.19}
\psi^-_{+} = \psi^+_{-} = \bpsi^-_{+} = \bpsi^+_{-} = 0.
\end{align}\end{smalleq}
Such "chiral bag" boundary conditions were studied in the mathematical literature \cite{Gilkey:1995} in the context of the Atiyah-Singer Index Theorem. For our purposes, it is sufficient to note that these boundary conditions are consistent with the topology of the Dirac path integral below.

The remaining half of the components is not restricted at the boundary. In the Schr\"odinger picture in Minkowski space this implies that the states with $\sigma_3 = \pm 1$ at both ends of the string are occupied. Such peculiar boundary conditions are necessary to obtain the closed loop equations (see below). Note that these boundary conditions are conformally invariant and do not depend on the form of the boundary.

At the "time" reflection
\begin{smalleq}\begin{align} \label{2.20}
\xi_2 \to -\xi_2 \quad \text{or} \quad \xi_+ \leftrightarrow \xi_-
\end{align}\end{smalleq}
the spin $\sigma_3$ changes the sign. According to our boundary conditions, this should be accompanied by a change of sign of $\lambda$:
\begin{smalleq}\begin{align} \label{2.21}
\psi_\lambda \to \psi_{-\lambda} \sigma_3.
\end{align}\end{smalleq}
One may introduce the oriented states
\begin{smalleq}\begin{align} 
\label{2.22}
\psi^{R,L}_\lambda &=  \psi^{\pm \lambda}_\lambda,\\
 \label{2.23}
\bpsi^{R,L}_\lambda &= \bpsi^{\mp \lambda}_\lambda .
\end{align}\end{smalleq}
The left-hand states are occupied at the boundary. As we shall see, the orientation $\lambda\sigma_3$ is related to the orientation of the loop $C$. The right-hand and left-hand states interchange when the loop is reoriented.

The action can be rewritten in the Left-Right (chiral) form. Based on the transformation properties in equations \eqref{2.22} and \eqref{2.23}, and using the identity $\sigma^k \partial_k = \sigma_+ \partial_+ + \sigma_- \partial_-$, the action becomes:
\begin{smalleq}\begin{align}
A_0 &= \int \dd^2\xi \left[  \bar{\psi}^R \partial_{+} \psi^L + \bar{\psi}^L \partial_{-} \psi^R  + \right. \br
&\left. m\sqrt{e} \left( \bar{\psi}^R \psi^R + \bar{\psi}^L \psi^L \right) \right].
\end{align}\end{smalleq}
In this form, the conformal properties are manifest. Under the transformations $\xi_+ \to f_+(\xi_+)$ and $\xi_- \to f_-(\xi_-)$  the kinetic term transforms as:
\begin{smalleq}\begin{align}
&\left(\sqrt{f'_-}\bar{\psi}^R\right) \left(f'_+\partial_{+}\right) \left(\sqrt{f'_-}\psi^L\right) \br
&= f'_-f'_+ \left(\bar{\psi}^R \partial_{+} \psi^L\right);\\
&\left(\sqrt{f'_+}\bar{\psi}^L\right) \left(f'_-\partial_{-}\right) \left(\sqrt{f'_+}\psi^R\right) \br
&= f'_-f'_+ \left(\bar{\psi}^L \partial_{-} \psi^R\right).
\end{align}\end{smalleq}
The mass term transforms as:
\begin{smalleq}\begin{align}
&\left(\sqrt{f'_+ f'_-}\sqrt{e}\right) \left(\sqrt{f'_-}\bar{\psi}^R\right) \left(\sqrt{f'_+}\psi^R\right) \br
&= (f'_+ f'_-) \sqrt{e} \bar{\psi}^R \psi^R;\br
&\left(\sqrt{f'_+ f'_-}\sqrt{e}\right) \left(\sqrt{f'_+}\bar{\psi}^L\right) \left(\sqrt{f'_-}\psi^L\right) \br
&= (f'_+ f'_-) \sqrt{e} \bar{\psi}^L \psi^L.
\end{align}\end{smalleq}
This factor of $(f'_+ f'_-)$ exactly cancels the Jacobian from the measure $\dd^2\xi$, making the kinetic energy as well as the mass term conformally invariant.

Note that the Lagrangian density is invariant with respect to internal $U(1) \times SU(2)$ transformations:
\begin{smalleq}\begin{align} \label{2.24}
\psi_\lambda &\to S_{\lambda \lambda'} \psi_{\lambda'} e^{\I\alpha},\\
 \label{2.25}
\bpsi_\lambda &\to \bpsi_{\lambda'} (S^{-1})_{\lambda'\lambda} e^{-\I\alpha}.
\end{align}\end{smalleq}

However, the boundary conditions lower this symmetry down to $U_+(1) \times U_-(1)$:
\begin{smalleq}\begin{align} 
\label{2.26}
\psi^\pm &\to \psi^\pm e^{\I\alpha_\pm},\\
\label{2.27}
\bpsi^\pm &\to \bpsi^\pm e^{-\I\alpha_\pm}.
\end{align}\end{smalleq}
Both $U(1)$ currents vanish at the boundary:
\begin{smalleq}\begin{align} \label{2.28}
j_k = \bpsi^\pm \sigma_k \psi^\pm = 0 \quad \text{at } \dot\Sigma; \quad k = 1, 2.
\end{align}\end{smalleq}
So, there is no flow outside. The derivative $\pd_k$ in (2.11) can be replaced by $D_k$, since the boundary terms vanish.

At vanishing mass $m$ the theory possesses chiral $\sigma_3$ invariance, which is not violated by any anomaly.

Let us now discuss the symmetry of the physical $x$-space.
Transformations of the internal coordinates $\xi$ are independent of physical rotations in $x$-space.
From the viewpoint of two-dimensional field theory in $\xi$-space, the rotations in $x$-space represent internal $O(4)$ symmetry transformations of the $C_\mu$ field,
\begin{smalleq}\begin{align} \label{2.29}
\delta C_\mu = \omega_{\mu\nu} C_\nu.
\end{align}\end{smalleq}
The Elf does not transform:
\begin{smalleq}\begin{align} \label{2.30}
\delta\psi_\lambda = \delta\bpsi_\lambda = 0.
\end{align}\end{smalleq}
This follows from the boundary conditions. The full target surface field $X^i_\mu$ transforms as one of the $SO(3)_\pm$ subgroups of $SO(4)$ 
\begin{smalleq}
    \begin{align}
        X^i_\mu \Ra \Omega^{\pm}_{i j}X^j_\nu W_{\nu\mu}; \quad \Omega^{\pm} \in SO(3), \quad W \in SO(4)
    \end{align}
\end{smalleq}

Consider now local variations of the loop $C$. 
The area derivative of $e$ can be found from the known variation of the scalar area
\begin{smalleq}\begin{align} \label{2.40}
&|S| = \int_S \dd^2\xi e  = \int_S d\sigma_{\mu\nu}(\xi_1) d\sigma_{\mu\nu}(\xi_2) \delta_{inv}(\xi_1-\xi_2);\\
&\frac{\delta |S|}{\delta\sigma_{\mu\nu}(\xi_1)} = \int \dd^2\xi_2 \frac{\delta e}{\delta\sigma_{\mu\nu}(\xi_2)} \br
&= 2\int_S t_{\mu\nu}(\xi_2) d^2\xi_2 \delta(\xi_1-\xi_2) 
\end{align}\end{smalleq}
Using the  area derivatives of the Hodge-dual minimal area established in the previous section, we find for the area derivative of
\begin{smalleq}
    \begin{align}
       \frac{\delta e(\xi)}{\delta\sigma_{\mu\nu}(\xi_0)} &= -2 T_{\mu\nu} \delta^2(\xi - \xi_0);\\
       \label{Tdef}
       T &= \frac{F + \chi\ast F}{\sqrt{2 F^2}}\\
       \label{detArder}
       \fbyf{\VEV{\exp{A_0}}}{\sigma_{\mu\nu}} &= -2 m \VEV{ \frac{T_{\mu\nu}\bpsi \psi}{\sqrt{e}}}_{\partial \Sigma}
    \end{align}
\end{smalleq}
where $F$ is given by \eqref{areader} .
According to our boundary conditions, only the right-hand components are present in \eqref{detArder}:
\begin{smalleq}\begin{align} \label{2.43}
\bpsi \psi = \bpsi^R \psi^R \quad \text{at } \dot\Sigma.
\end{align}\end{smalleq}
\section{Conformal metric as a local gauge parameter}
\label{sec:conformalmetric}
We begin by defining the effective action for the elf (Majorana fermion) field on a two-dimensional surface $S$ with metric 
\[
g_{ab} = \exp{2\rho}\delta_{ab},
\]
including the mass term and the dependence on the conformal factor $\rho$. We are reproducing here the construction in Eqs.~(2.57)--(2.63) of~\cite{M81QCDFST}.

The action of the general theory is given by
\EQ{
&A_\rho = \int d^2 \xi \br
&\left( e^{2\rho} \bar{\psi_\lambda} \sigma^k e^{-\rho} \nabla_k \psi_\lambda 
+ m 
\sqrt{e} e^\rho \bar{\psi_\lambda} \psi_\lambda - \frac{\partial_+\rho\partial_-\rho}{3\pi}\right);
}
with the spinor connection $\omega_k$:
\EQ{
\nabla_k = \partial_k + \frac{\I}{2} \omega_k \sigma^3,\quad \omega_k = \oh e_{k l} \pd_l \rho
}
vierbein
\EQ{
E_k^a = \delta^a_k e^{\rho},
}
and curvature
\EQ{
\hat R_{k l} &= [\nabla_k,\nabla_l]= \I \sigma_3 e_{k l} R \det E;\\
R &= -8 e^{-2\rho} \partial_+ \partial_- \rho.
}
At $\rho=0$ this is the old action $A_0$, whereas at $\rho = \ln \sqrt{e}$ this is the action of the Dirac particle with constant mass at the surface $S$ with the induced metric $g_{ab}$.

\begin{theorem}[Conformal Invariance of the Elf Determinant]\label{TheoremConfInv}
The functional integral
\EQ{
Z[S|e, \rho] = \int \D \bpsi \D \psi \exp{A_\rho}
}
is invariant under local variations of the conformal metric field $\rho$.
That is, the theory depends only on the conformal class (moduli) of the surface, and the induced metric $e(\xi)$, remaining independent on the local gauge parameter $\rho(\xi)$.
\end{theorem}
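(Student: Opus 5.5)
The plan is to show that the $\rho$-dependence of the fermionic measure, produced by the conformal anomaly, is exactly compensated by the explicit Liouville term $-\partial_+\rho\partial_-\rho/(3\pi)$ in $A_\rho$. Throughout, $e$ is held fixed as external data.

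\emph{Step 1: remove $\rho$ from the classical action.} Perform the Weyl rescaling $\psi = e^{-\rho/2}\chi$, $\bpsi = e^{-\rho/2}\bar\chi$. In two dimensions the spin connection $\omega_k=\oh e_{kl}\pd_l\rho$ is pure Weyl, so $e^{2\rho}\,\bpsi\sigma^k e^{-\rho}\nabla_k\psi$ becomes $\bar\chi\sigma^k\pd_k\chi$ plus a total derivative. The derivative term from $\pd_k e^{-\rho/2}$ combines with $\tfrac{\I}{2}\omega_k\sigma^3$, using $\sigma^k\sigma^3 \propto e_{kl}\sigma^l$, into an antisymmetric piece that cancels after integrating by parts. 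The boundary term from that integration vanishes by the chiral bag conditions \eqref{2.19}, through the vanishing of the boundary currents \eqref{2.28}. The mass term $m\sqrt e\, e^{\rho}\bpsi\psi$ becomes $m\sqrt e\,\bar\chi\chi$. The classical fermion action therefore reduces exactly to $A_0$, with no dependence on $\rho$.

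\emph{Step 2: compute the Jacobian.} The change of variables $\psi\to\chi$ is local and $\rho$-dependent, so the measure $\D\bpsi\D\psi$, which is defined with the $\rho$-metric, produces a Jacobian. I would compute it infinitesimally. For $\delta\rho$, the variation of $\ln\D\bpsi\D\psi$ is $-\int\delta\rho\,\mathrm{tr}\,\exp(-\epsilon\slashed{D}^2)$ evaluated at coincident points, regularized with the Weyl-covariant heat kernel. The Seeley--DeWitt coefficient $a_1$ gives $\frac{1}{24\pi}\int\sqrt g\,R\,\delta\rho$ for each two-component Dirac field. Counting the $\lambda=\pm$ doublet of Majorana components gives the total coefficient $c$. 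With $\sqrt g R = -8\pd_+\pd_-\rho$, this integrates to $\frac{c}{\ldots}\int\pd_+\rho\pd_-\rho$. The mass term and boundary conditions contribute only $\rho$-independent or boundary $a_{1/2}$ pieces; I would argue the latter cancel between the $\lambda=\pm$ bag components, whose boundary projectors are opposite.

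\emph{Step 3: cancellation.} I then need to check that the anomaly functional equals $+\frac{1}{3\pi}\int\pd_+\rho\pd_-\rho$ with the paper's normalization of $\pd_\pm$ and $d^2\xi$. If so, it cancels the explicit counterterm, and $Z[S|e,\rho]=Z[S|e,0]$. I expect the main obstacle to be the bookkeeping in this step: getting the exact coefficient $1/(3\pi)$ by tracking the factor 2 in $\pd_\pm=\pd_1\pm\I\pd_2$, the number of real components, and the fact that the $\rho$ we integrate over is finite rather than infinitesimal. The last point means the Wess--Zumino consistency must be checked: the anomaly is linear in $\delta\rho$ and integrates to a quadratic Liouville action. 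I would first verify the infinitesimal statement $\delta\ln Z/\delta\rho = 0$ and then integrate along $t\rho$, $t\in[0,1]$. The boundary anomaly terms are the secondary risk.
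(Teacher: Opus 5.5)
Your argument is correct, but it is the Fujikawa form of the computation, whereas the paper works with the Pauli--Villars-regularized effective action.

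The paper never redefines the fields. It uses the multiplicative Weyl covariance of the full operator $\mathrm{i}\hat D_\rho+m\sqrt{e}\,e^{-\rho}$ to show that the formal variation $-\mathrm{tr}\,\delta\rho$ of $\mathrm{tr}\ln$ cancels against the regulators. What remains is $-\sum_i c_i\,\mathrm{tr}(\delta\rho\,M_i^2/(M_i^2+\hat D_\rho^2))$, which the paper evaluates by the tangent-plane WKB expansion of the conformal-anomaly appendix, obtaining $R/24\pi$ per determinant. You use the same covariance as a field redefinition, so the whole $\rho$-dependence of the fermion sector sits in a single Jacobian, computed from the heat-kernel coefficient $a_1$. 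Since $M^2/(M^2+\hat D^2)$ and $e^{-\epsilon\hat D^2}$ have the same finite local part, the key input is identical.

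Each route buys something different. Pauli--Villars gives an explicitly defined regulated determinant and makes it manifest that the position-dependent mass cannot enter the anomaly, because only the constant regulator masses break covariance. You get the same by taking the massless covariant $\hat D_\rho^2$ as your regulator. Your version makes the classical covariance explicit, including the mass term at fixed $e$ and the fact that $\sigma_3\psi=\lambda\psi$ survives a scalar rescaling. Step 1 is in fact a pointwise identity, $e^{-\rho}\sigma^k\nabla_k=e^{-3\rho/2}\sigma^k\partial_k\,e^{\rho/2}$, with no integration by parts needed, provided $\omega_k=e_{kl}\partial_l\rho$. With the printed factor $\tfrac12$, a residual term $\propto\partial_k\rho\,\bar\chi\sigma^k\chi$ would survive.

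The bookkeeping you flagged does close.
\begin{itemize}
\item Count each $\lambda$ as a full Dirac field. The integral over independent $\psi_\lambda,\bar\psi_\lambda$ is the determinant $\Delta_\pm$, not a Pfaffian; Majorana counting would halve the anomaly.
\item With the standard curvature, $\sqrt{g}\,R=-8\partial_z\partial_{\bar z}\rho$. Hence $\frac{1}{24\pi}\int\sqrt{g}\,R\,\delta\rho=\delta\,\frac{1}{6\pi}\int\partial_z\rho\,\partial_{\bar z}\rho$ per $\lambda$.
\item The pair gives $+\frac{1}{3\pi}$, which cancels the counterterm in $A_\rho$ with the correct sign. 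This is exactly the paper's ``$\frac{1}{6\pi}$ per trace, two traces''.
\item Because this holds at arbitrary $\rho$, integrating along $t\rho$ needs no separate Wess--Zumino check.
\end{itemize}
This requires reading $\partial_\pm$ as $\partial_z,\partial_{\bar z}$. With the literal $\partial_\pm=\partial_1\pm\mathrm{i}\partial_2$, the anomaly is $\frac{1}{12\pi}\int\partial_+\rho\,\partial_-\rho$, and the paper's $R=-8e^{-2\rho}\partial_+\partial_-\rho$ would be four times the curvature entering $a_1$. So the factor you suspected is real, but it is a matter of notation.

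Two things remain unresolved in both arguments.
\begin{itemize}
\item The $a_0$ term $\propto\epsilon^{-1}\int\sqrt{g}\,\delta\rho$ is a $\rho$-dependent cosmological term that the Liouville counterterm does not remove. In Pauli--Villars it appears as the $\sum_i c_iM_i^2\ln M_i^2$ term of the appendix. Invariance holds only modulo a local counterterm $\propto\int e^{2\rho}$, which the paper absorbs into the area factor.
\item The boundary. The paper's WKB is purely a bulk computation. Your proposed $\lambda=\pm$ cancellation relies on heat-kernel formulas for mixed boundary conditions that presuppose Lopatinskii--Shapiro ellipticity. The condition $\sigma_3\psi=\lambda\psi$ fails this: on a half-plane, for one sign of tangential momentum, the decaying solution of the model problem lies entirely in the unconstrained chirality. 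Moreover, even for elliptic conditions the $O(\epsilon^0)$ coefficient carries a $\oint k_g\,\delta\rho\,ds$ term that the bulk counterterm cancels only partially.
\end{itemize}
What both arguments actually establish is invariance under $\delta\rho$ supported away from $\partial\Sigma$. So the finite statement $Z[S|e,\rho]=Z[S|e,0]$ of your Step 3 follows only for $\rho$ vanishing near the boundary.
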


\begin{proof}[Proof]
The main point here is the covariant regularization of the Dirac theory at the surface. We have to define the divergent trace in the vacuum amplitude
\EQ{
&\int \mathcal{D} \bar{\psi} \mathcal{D} \psi \exp{A_\rho} =
\exp{ - \int d^2 \xi \frac{(\partial\rho)^2}{3\pi} \right.\br
&\left.+\tr_+ \log \lrb{\I \hat{D_\rho} + m \sqrt{e} e^{-\rho}} + \tr \log\lrb{\I \hat{D_\rho} + m \sqrt{e} e^{-\rho}}}.
}
Here
\EQ{
\hat{D_\rho} = \sigma^k e^{-\rho} (-\I \pd_k \omega_k),
}
is the covariant Dirac operator, and the subscript $+$ conditions $\sigma^3 = +1$.

We employ the Pauli-Villars regularization in the following form:
\EQ{
&\tr \ln (\I \hat{D_\rho} + m \sqrt{e} e^{-\rho})_{\mathrm{reg}}\br
&= \tr \left[ \ln (\I \hat{D_\rho} + m \sqrt{e} e^{-\rho}) - \frac{1}{2} \sum_i c_i \ln (\hat{D_\rho}^2 + M_i^2) \right], \br
&\sum_i c_i = 1.
}
The shift of the $\rho$ field
\EQ{
\rho \to \rho + \delta \rho
}
results in the multiplicative transformation of the Dirac operator
\EQ{
\hat{D_\rho} \Ra e^{-\tfrac{3}{2}\delta \rho} \hat{D_\rho}e^{-\tfrac{1}{2}\delta \rho}.
}
Therefore, the variation of the regularized determinant reads
\EQ{
&\tr \left[ - \delta \rho + 2 \sum_i c_i \left( \delta \rho \frac{M_i^2}{M_i^2 + \hat{D_\rho}^2} \right) \right] \br
&= - \tr \left( \delta \rho \sum_i c_i \frac{M_i^2}{M_i^2 + \hat{D_\rho}^2} \right).
}
The WKB calculations of \cite{M81QCDFST} (reproduced here in the Appendix) yield
\EQ{
\int d^2 \xi e^{2\rho} \delta \rho \frac{R}{24 \pi} = \delta\int d^2 \xi \frac{\partial_+ \rho \partial_- \rho}{6 \pi}.
}
Therefore, the variations of terms in the exponential cancel each other, as claimed.
\end{proof}
\begin{corollary}
    As a consequence of this theorem, the theory of massive Majorana fermions on a minimal surface with induced metric $g_{a b} = \delta_{a b} e(\xi)$ , and extra conformal term $-\frac{(\partial\rho)^2}{3\pi}$ in Lagrangian is equivalent to the theory of Majorana fermions on a flat surface with variable mass $m \sqrt{e(\xi)}$. The first theory corresponds to  taking $\rho = \log \sqrt{e}$ in $A_\rho$, and the second theory corresponds to taking $\rho =0$. By the proven gauge invariance these theories are equivalent.
\end{corollary}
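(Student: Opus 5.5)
The corollary follows from Theorem~\ref{TheoremConfInv} by evaluating the $\rho$-independent functional $Z[S|e,\rho]$ at two gauge choices. The plan is to connect the two choices by a one-parameter family and show that each endpoint reproduces one of the two theories named in the statement.

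\textbf{The interpolating family.} I would set $\rho_t = t\log\sqrt{e}$ for $t\in[0,1]$. Theorem~\ref{TheoremConfInv} gives $\frac{d}{dt}Z[S|e,\rho_t]=0$, because the variation $\delta\rho = dt\,\log\sqrt{e}$ is an admissible local variation. This requires $\log\sqrt{e}$ to be a smooth function on the disk. On the minimal surface $e = 2\sqrt2|f'_\mu|^2$, which is strictly positive away from branch points of the conformal map. Near poles of $f'_\mu$, allowed by the remark on holes and poles, I would treat the family by exhausting the disk with domains that exclude small neighborhoods of these points. The exponential suppression $\exp{-m\,\ell_\Gamma}$ noted in that remark should make the excised contributions vanish in the limit. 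Integrating in $t$ then gives $Z[S|e,0]=Z[S|e,\log\sqrt e]$.

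\textbf{Identifying the endpoints.} At $\rho=0$ the spin connection $\omega_k$ vanishes, the Liouville term drops out, and $A_\rho$ reduces literally to $A_0$ of \eqref{2.11}. This is the flat theory with variable mass $m\sqrt{e}$. At $\rho=\log\sqrt e$ I would substitute into $A_\rho$:
\begin{itemize}
\item the vierbein becomes $E^a_k=\sqrt e\,\delta^a_k$, i.e.\ the induced metric $g_{ab}=e\,\delta_{ab}$;
\item the mass term becomes $m\sqrt e\, e^{\rho}\bpsi\psi = m\,\sqrt{g}\,\bpsi\psi$, after the spinor rescaling implicit in the measure, which is the same $e^{\pm\rho/2}$ rescaling that produced $e^{-\tfrac32\delta\rho}\hat D_\rho e^{-\tfrac12\delta\rho}$ in the proof;
\item the kinetic term is the covariant Dirac term $\sqrt g\,\bpsi\sigma^kE_k^a\nabla_a\psi$;
\item the leftover term $-\frac{(\partial\rho)^2}{3\pi}$ is exactly the extra conformal term named in the corollary.
\end{itemize}
So this endpoint is the constant-mass Majorana theory on the minimal surface with its induced metric, supplemented by the Liouville term.

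\textbf{Main obstacle.} The delicate step is making sure the equality holds for the full regularized functional integral, including boundary data, and not only for the bulk action. Three things must be checked.
\begin{enumerate}
\item The chiral bag conditions \eqref{2.19} are invariant under the rescaling $\psi^\pm\to e^{\pm\rho/2}\psi^\pm$. I would verify this directly, since $\sigma_3$ commutes with the rescaling.
\item The Pauli--Villars regulator masses $M_i$ are the covariant ones used in the proof, so that the same regularization defines both endpoint theories.
\item The boundary contributions of the Liouville variation can be ignored. Here I would use the vanishing of the $U(1)$ currents \eqref{2.28}, which also justified replacing $\pd_k$ by $D_k$, so that integrating by parts in $\delta\int\partial_+\rho\partial_-\rho$ produces no boundary term. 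Alternatively I would restrict $\delta\rho$ to vanish at $\partial\Sigma$, as the theorem implicitly assumes.
\end{enumerate}
With these checks, the two theories agree as functionals of $e$, which is the claimed equivalence.
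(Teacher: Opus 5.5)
Your main line is exactly the paper's argument: evaluate the $\rho$-independent functional $Z[S|e,\rho]$ at $\rho=0$ and at $\rho=\log\sqrt e$, and read off the two actions. The paper's proof consists of nothing more than this invocation of Theorem~\ref{TheoremConfInv}. Your identification of both endpoints is right. At $\rho=\log\sqrt e$ the mass term is literally $m\,e\,\bpsi\psi=m\sqrt{g}\,\bpsi\psi$, so no spinor rescaling enters. The trouble is in the extra checks, precisely at the point you flag as delicate.

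\textbf{Item 3 does not work.} The vanishing of the currents \eqref{2.28} is a statement about the fermion bilinear $\bpsi\sigma_k\psi$. It says nothing about the purely bosonic term produced by varying the Liouville action, $\delta\int(\partial\rho)^2=-2\int\delta\rho\,\Delta\rho+2\oint_{\partial\Sigma}\delta\rho\,\partial_n\rho$. Your fallback, restricting to $\delta\rho|_{\partial\Sigma}=0$, contradicts your own interpolation. The variation $\delta\rho=dt\,\log\sqrt e$ does not vanish on $\partial\Sigma$ unless $e=1$ there. More generally, variations that vanish on the boundary can never connect $\rho=0$ to $\rho=\log\sqrt e$, because their boundary values differ.

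So the corollary genuinely needs the theorem for variations $\delta\rho$ that do not vanish on the boundary. Along your path two pieces are then unaccounted for:
\begin{itemize}
\item the Liouville boundary term $\propto\oint_{\partial\Sigma}\log\sqrt e\,\partial_n\log\sqrt e$;
\item the boundary heat-kernel coefficients (geodesic-curvature and normal-derivative terms) of the chiral bag condition, which the bulk WKB computation never produces.
\end{itemize}
Nothing shows that these cancel. The paper has the same hole. Its identity $\int e^{2\rho}\delta\rho\,\frac{R}{24\pi}=\delta\int\frac{\pd_+\rho\,\pd_-\rho}{6\pi}$ holds only up to $\frac{1}{3\pi}\oint\delta\rho\,\partial_n\rho$, and the corollary applies the theorem to exactly such a variation. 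The honest statement is equivalence modulo functionals of the boundary data of $e$, or equivalence after adding suitable boundary counterterms to $A_\rho$.

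\textbf{Your treatment of poles of $f'_\mu$ fails for a related reason.} The suppression $\exp{-m\,\ell_\Gamma}$ controls fermion loops, but not the bosonic Liouville term. Near a simple pole $w$ you have $\rho_t\approx -t\log|z-w|$. Over the domain with a disk of radius $\epsilon$ excised, $\int(\partial\rho_t)^2\,d^2z$ then grows like $\log(1/\epsilon)$ for every $t>0$. Consequently:
\begin{itemize}
\item the excision limit does not exist, and the Liouville term at the $\rho=\log\sqrt e$ endpoint is infinite;
\item the same happens at branch points, where $e\to0$;
\item the excision circles reintroduce the boundary terms above.
\end{itemize}
Either restrict the corollary to $e$ smooth and strictly positive on $\bar\Sigma$, or keep an explicit cutoff on both sides. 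The paper's remark on poles only concerns the flat gauge $\rho=0$, so it does not cover this.
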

\begin{remark}[Moduli and Topology]
The theorem establishes local independence from $\rho(\xi)$. For a disk, all metrics are conformally equivalent to the flat metric, so the determinant is a a unique functional of external loop via the conformal metric $e = |f'_\mu|^2$. For surfaces with handles or holes, the metric can be written as 
\[
g_{ab} = \exp{2\rho}\,\hat{g}_{ab}(\tau) ,
\]
where $\tau$ are the Teichmüller parameters (moduli). The determinant then depends on $\tau$, i.e., $Z_{\mathrm{elf}}[C] = Z[C|\tau]$. In the present geometric theory, the minimal surface is rigid: the boundary loop $C$ uniquely determines the solution to the Plateau problem and thus fixes the moduli $\tau = \tau[C]$. Therefore, the Dirac determinant is a well-defined functional of $C$, encoding the "gluon cloud" effects for the specific minimal surface spanned by $C$. In the degeneration limit (e.g., infinitesimal bridges), the moduli approach a boundary value, but the locality of the loop equation is preserved.
\end{remark}

\section{The fermion determinants and the planar topology}
\label{fermiomndeterminant}
Let us consider the vacuum functional $Z$ of the Elfin theory with the boundary conditions discussed in the previous sections. By virtue of conformal invariance, we can choose the local gauge $\rho(\xi) = \log \sqrt{e(\xi)}$ or $rho(\xi) =0$
The following analysis of the vacuum loops in this theory simplifies in the gauge $\rho=0$, where it is similar to the topological solution of the 2D Ising model.

As a functional of the loop $C$, the vacuum functional satisfies a certain non-linear path integral equation. This equation will be derived in this section and investigated in the following sections.
\subsection{Mathematical justification of the signed sum over loops}
The path integral representation of the Dirac determinant on a finite surface with boundary conditions, as employed in this work, is  justified by several mathematical results and earlier physical constructions. In particular, Ichinose~\cite{Ichinose:1984} and Gaveau~\cite{Gaveau:1984} provide theorems establishing the equivalence between the determinant of the Dirac operator and path integrals over spinor-valued paths with appropriate boundary conditions on finite domains. These constructions carefully incorporate the spinor connection and holonomies of spin rotations along geodesic segments, ensuring the correct treatment of spinor phases and curvature effects.

Our representation of the determinant as a sum over oriented closed loops weighted by factors of the form $\prod \exp{-m\, d l + \frac{\I \sigma_3 \delta \theta}{2}}$, where \(d l\) is an infinitesimal geodesic step and \(\delta \theta\) the corresponding trajectory rotation angle, follows directly from these  path integral formulations. This approach generalizes the combinatorial and topological methods pioneered by Vdovichenko~\cite{Vdovichenko:1965} in the exact solution of the 2D Ising model, where the fermionic nature and spinor holonomies enforce cancelations of non-planar intersections, leaving only planar loop configurations that contribute to the determinant.

Furthermore, the Gauss-Bonnet theorem relates the total tangent rotation angle around a loop to its self-intersection number, providing a topological invariant that governs the sign factors in the loop expansion. This ensures that the fermionic statistics and boundary conditions are correctly encoded in the loop amplitudes.

Thus, the path integral representation employed here is not only physically intuitive but also mathematically sound, grounded in well-established results in spin geometry, spectral theory of Dirac operators, and  path integral constructions on manifolds with boundary. This justifies its use as the foundation for deriving loop equations and analyzing the planar topology of the fermion determinants in the Elfin theory.

\subsection{Random loops at the surface}
The form of the surface $S$ is fixed as a Hodge-dual minimal surface in a conformal metric. Let us perform the functional integration over the Elfin fields $\psi$. We use the gauge $\rho =0$, where the induced metric is simply $e = |f'|^2$, the internal curvature$R_{a b}$ and spinor connection $\omega_k$ are absent, making the theory locally equivalent to a spinor on a flat surface.

This produces two determinants of the Dirac operators with the boundary conditions $\lambda\sigma_3 = \pm 1$:
\begin{smalleq}\begin{align} 
\label{3.1}
I[S] &= \int \D\psi e^{A_0} = \Delta_+ \Delta_-;\\
\label{3.2}
\Delta_\pm &= \det(D + m\sqrt{e})_\pm
\end{align}\end{smalleq}
On an infinite flat surface, such a determinant would be calculable using the trace of the logarithm of a resolvent in Fourier representation. On a finite surface $S$ bounded by an arbitrary curve $C$, there are nontrivial boundary conditions for a Dirac resolvent, complicating the analytic computation. 

Instead, we can represent the Dirac determinant as a sum over vacuum loops at the surface. The similar representation in the context of the 2D Ising model was discussed in my earlier work \cite{Migdal:1980zp}, inspired by studies of the Ising model by Vdovichenko \cite{Vdovichenko:1965}. That pioneering paper laid a framework for a topological solution of the Ising model in terms of self-avoiding paths. 
The next step of our paper \cite{Migdal:1980zp} was to generalize this method to a continuum Dirac operator on a Riemann surface using the Gauss-Bonnet theorem to relate the rotation angle to the intersection index.
There are also some theorems in the mathematical literature \cite{Ichinose:1984, Gaveau:1984} establishing the Dirac and Heat kernel path integral representations in a finite domain. In this paper, we follow the technology described in \cite{M81QCDFST}, with proper adjustments from a random surface to a Hodge-dual minimal surface.

We base our study on the superposition principle. The vacuum amplitude for the Dirac particle represents the sum over all configurations of the loop $\Gamma \in S$ of the corresponding amplitude for the given loop. The amplitude of the given loop represents the product of the amplitudes of individual events. The individual events can be regarded as a sequence of classical propagations along little geodesic steps alternating with instant rotations.

Technically, we use the logarithm of a Dirac determinant and represent it as a trace of a resolvent integrated over proper time
\begin{smalleq}
    \begin{align}
       &\Delta_\pm = \br
       &\exp{\kappa|S[C]] - \int_\epsilon^\infty\frac{ \dd T}{T}\exp{- T ( D + m\sqrt{e})}};\\
       & \kappa \propto \log \epsilon;
    \end{align}
\end{smalleq}
The logarithmically divergent term $\log \epsilon$ in the exponential is proportional to the area of the surface by virtue of the locality of this divergent contribution to the trace. The terms in the exponential, proportional to the area of the additive Hodge-dual surface, pass through the loop equation, as we already established. The role of these terms is to regularize the effective string tension (or vacuum energy density) of the Fermi model on the surface.

The path integral representation emerges when this exponential is replaced by a limit of a product (see \cite{PolyakovGFS})
\begin{smalleq}
    \begin{align}
        \exp{- T ( D + m\sqrt{e})} &\to \prod \exp{- d T (D + m \sqrt{e}};\\
       d T &= d l/\sqrt{e}
    \end{align}
\end{smalleq}
The mathematical literature replaces this product with a stochastic process that is equivalent to Feynman's sum over paths, with the path amplitude being a product of WKB amplitudes for these propagations/rotations \cite{Gaveau:1984}. 

The amplitude for the rotation is given by the rotation matrix
\begin{smalleq}\begin{align} \label{3.5}
A(\Delta\theta) = \exp{\I \sigma_3 \Delta\theta/2}.
\end{align}\end{smalleq}
The WKB amplitude for the geodesic step reads
\begin{smalleq}\begin{align} \label{3.6}
A(step) = \exp{-m \dd l }
\end{align}\end{smalleq}
\subsection{The Gauss-Bonnet theorem and self-intersections}
The total product of the factors (3.5) and (3.6) along the loop can be calculated by means of the Gauss-Bonnet formula which in our notation reads
\begin{smalleq}\begin{align} \label{3.7}
\sum \Delta\theta = 2\pi(1 - \nu)  \mod 2 \pi.
\end{align}\end{smalleq}
Here $\nu$ is the algebraic number of self-intersections of the loop $\Gamma$ at the surface. The Gauss-Bonnet theorem ensures that the relation is topologically invariant.
\begin{remark}{\textbf{Mathematical theorems}.}
   For an oriented regular simple closed curve $\gamma\subset\mathbb{R}^2$, the rotation (turning) index satisfies
$\operatorname{rot}(\gamma)=\frac{1}{2\pi}\int_\gamma \kappa\,ds\in\mathbb{Z}$ and equals $+1$ for the positively oriented boundary of a disk \cite{Whitney1937RegularClosedCurves}. 
In the planar case, this is precisely the Gauss--Bonnet theorem for the enclosed region (since $K\equiv 0$), yielding $\int_\gamma \kappa\,ds=2\pi\,\chi=2\pi$ (hence equality modulo $2\pi$ is automatic) \cite{Chern1944GaussBonnet}. 
Moreover, since $\operatorname{rot}(\gamma)$ is the degree of the unit-tangent map $T:S^1\to S^1$, it depends only on the immersed curve itself and therefore is unchanged if the ambient plane is replaced by a disk with finitely many holes (and the same mod-$2$ intersection sign $(-1)^{\nu}$ follows from Whitney's regular-homotopy classification of immersed plane curves) \cite{Whitney1937RegularClosedCurves}.

\end{remark}
The amplitude for the loop reads:
\begin{smalleq}\begin{align} \label{3.8}
A[\Gamma] = (-1)^\nu \exp{-m l_\Gamma}; \quad l_\Gamma = \int_\Gamma \sqrt{e} | d z|
\end{align}\end{smalleq}
The extra negative sign comes from Fermi statistics. This amplitude is a scalar, independent of the quantum numbers, so we should multiply it by the number of allowed states. Each two-component Dirac particle has only one allowed state which is projected by the operator
\begin{smalleq}\begin{align} \label{3.9}
P = \frac{1}{2} (1 + \sigma_a v_a)
\end{align}\end{smalleq}
where $v_a$ is the normalized 2-velocity of the rest frame. In the rest frame, there should be only one component (in 4 dimensions, there were two components).

So we are left with the factor of 2 from the sum over the orientations $\sigma_3 = \pm 1$. Yet it will be convenient to distinguish the terms with two orientations, i.e. to sum over oriented loops $\Gamma$. We mark the loop by an (anti)clockwise arrow for (left) right orientation $\sigma_3 = -1, +1$.

The point is that the loops that touch the external boundary at least once are oriented in the right direction according to our boundary conditions.

The reflection amplitude
\begin{smalleq}\begin{align} \label{3.10}
R = \frac{1}{2} (1 + \lambda\sigma_3)
\end{align}\end{smalleq}
corresponds to the repulsion of the left-hand side components from the wall (see the previous section). As discussed above, one may occupy the wall with the left-hand side states so that this repulsion would come about automatically due to the Pauli principle.

So, the sum over loops involves the sum over orientations only for the inner loops.
\subsection{From exponential of free elfin loops to sum over multiple crossing paths}
Consider a given oriented loop consisting of $N$ equal geodesic steps $\dd l$. The degrees of freedom are given by the angles $\Delta\theta$ and by the number $N$ of steps. The phase volume can be written in a covariant form
\begin{smalleq}\begin{align} \label{3.11}
\D\Gamma = N^{-1} \prod \dd(\Delta\theta) = \dd l Dx(l) \delta(\dot{x}^2 - 1) e^{-m_0 l}
\end{align}\end{smalleq}
The $\delta$-functions remove the integrations over the lengths $|\dd x_i|$ of the geodesic steps, leaving us with the normalized 2-velocities as physical variables. The mass renormalization term $m_0$ is a matter of convention.

The factor $N^{-1} = 1/l$ in \eqref{3.11} accounts for $N$ equivalent choices for the origin of the parametrization of the loop. These factors will cancel the combinatorial factors that arise at the intersections (see below).

We write the integral over loops with the above prescription at the reflections as follows:
\begin{smalleq}\begin{align} \label{3.12}
\Delta &= \Delta_+ \Delta_- \br
&\propto \exp{\sum \dashint \D\Gamma (-1)^\nu \exp{-m l_\Gamma}}.
\end{align}\end{smalleq}
We ignore the (singular) normalization factor in the fermionic determinant. This factor on a finite surface amounts to an exponential of the area $\exp{ - \const{} |S|}$, which we include in the confining factor that is already present in our solution for the Wilson loop, according to the previous section.

The $\dashint$ denotes the lack of a sum over orientation for the loop touching the boundary due to our boundary conditions. Note that both determinants contain loops that touch the boundary. These loops enter with a weight of 1/2 in each determinant $\Delta_\pm$, or, equivalently, they count as oriented loops, unlike the remaining loops, which involve a sum of two orientations.

In what follows, it will be convenient to consider the product $\Delta$ rather than the separate factors $\Delta_\pm$.

Let us expand \eqref{3.12} in a series of multiple integrals over loops:
\begin{smalleq}\begin{align} \label{3.13}
\Delta &= \sum \frac{1}{n!} \br
&\dashint D\Gamma_1 ... \dashint D\Gamma_n (-1)^{\sum\nu_i} \exp{-m \sum l_i}.
\end{align}\end{smalleq}
These loops are independent by construction, but we are going to rearrange the sum in such a way that they will start to repel. 
This rearrangement, conjectured in \cite{Migdal:1980zp} relies on the following topological and combinatorial observations.
\subsection{Cancellation of intersections: step towards planar graphs}
This section reproduces (with some simplifications and improvements) the most important results of our `81 paper: the topological and geometrical analysis of the path integrals of the Dirac particle on a disk-like surface.

The arguments of this section were inspired by a geometric solution 
of the 2D Ising model by Natasha Vdovichenko \cite{Vdovichenko:1965}. The crucial observation made in that pioneering work was that the sum of planar loops splitting  the plane into domains of up and down spin in the Ising model is equivalent to the sum over freely intersecting independent loops, with phase factors reflecting the rotation of a spin 1/2 particle. The signs arising for intersecting paths lead to cancelation between such paths, leaving only planar loops. The computation in that paper relied on the geometry of the paths on a square lattice and other specifics of the Ising model. Below, the same topological observations will be stripped of the lattice artifacts and applied to a two-fermion system with oriented loops on a Hodge-dual minimal surface.

Consider the given configuration of loops at the surface as a graph consisting of several disconnected parts. Connected subgraphs contain intersecting and touching lines. The total length in exponential factors $\exp{- m l}$ can be redistributed among the lines of the graph:
\begin{smalleq}\begin{align} \label{3.14}
l_{tot} = \sum l_{loops} = \sum l_{lines}.
\end{align}\end{smalleq}
 The length of the line is not just a geometric length of this line drawn on a planar disk; rather, it is an integral
\EQ{
l_{line} = \int |d z(t)| \sqrt{e\left(z(t), \bar z(t)\right)}
}
However, the metric $e(z,\bar z)$ in this integral being \emph{local}, this length is additive across the pieces made by intersections of these lines on a planar domain.
The factorization also applies the sign factor since the number of mutual intersections is always even in the absence of handles at the surface.
\begin{smalleq}\begin{align} \label{3.15}
\nu_{tot} = \sum \nu_{loops} = \sum \nu_{graph} \pmod 2.
\end{align}\end{smalleq}
The phase volume of the connected subgraph $G$ can also be rearranged in such a way that the old loops lose their individuality:
\begin{smalleq}\begin{align} \label{3.16}
D G &= \prod_{line} \dd W(step),\\
\dd W(step) &= \dd l Dx(l) \delta(\dot{x}^2 - 1).
\end{align}\end{smalleq}
The factors $1/l$ in front of the old loops are compensated by the same factors that arise from the sums over the numbers of geodesic steps between the intersections of lines.

It is important that there are no other factors in front of the loop. Were there, say, $k$ internal degrees of freedom, the corresponding factor
\begin{smalleq}\begin{align} 
\label{3.18}
\prod_{loops} k
\end{align}\end{smalleq}
could not possibly be distributed among the lines of the graph. In the case of $k = 1$ one could return to the non-oriented lines, then the phase volume would be distributed again. This is the case of the Ising model, which was considered in the preliminary version \cite{Migdal:1980zp} of the Elfin theory. This possibility is ruled out by the boundary conditions. As we shall see soon, the peculiar boundary conditions play a central role in the construction of the loop equations.

Let us proceed with the sum over graphs. Each internal line of our graph can be oriented in both directions, but orientation is conserved in the vertices. We neglect multiple collisions as a violation of the Pauli Principle (Grassmann algebra $(\bpsi^R\psi^R)^n=0,(\bpsi^L\psi^L)^n=0$ for $n >1$, so we deal with the two pairs of vertices depicted at Fig. \ref{fig:Intersections}:
\begin{figure}
    \centering
    \includegraphics[width=0.5\linewidth]{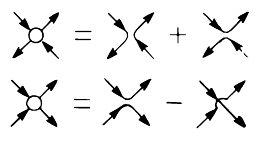}
    \caption{Types of line collision: two planar (upper drawing) and one planar, another non-planar (lower drawing) }
    \label{fig:Intersections}
\end{figure}
Naturally, one could treat these two vertices as a single vertex depending on the angles between the lines; however, in this case, there will be step functions of the angles. When the lines cross each other in coordinate space, the term in the vertex changes sign.

The non-planar vertex (Fig. \ref{fig:Intersections}) cancels the first one, removing this configuration from the sum. This is the central mechanism. The lines may touch each other with opposite orientations according to Fig. \ref{fig:Intersections}, but the terms with intersecting lines are always canceled by those with parallel touching lines in the graph with the same configuration of lines.

Note that in the general case, this is a cancelation of terms with a different number of loops, so there is no room for extra degrees of freedom. This is the Pauli principle reformulated in terms of the path integrals.

We may now return to our old loops but disregard all the non-planar configurations. We obtain a hierarchy of trapped loops that move in the free space left by the others. With one-component fermions, we would obtain the sum over all decompositions of the whole surface into two different phases, which is equivalent to an Ising model. 

The above representation of the Majorana determinant would amount to the Onsager solution of the Ising model in terms of free Majorana fermions. In our case, we have twice the number of components, providing for the oriented planar loops instead of just boundaries of the spin domains.

The time slice ($\xi_2 = \const{}$) of this picture describes a string with particles moving between neighbors. Pairs of $\psi$ and $\bpsi$ particles can be created or annihilated. Only $\psi\bpsi$ neighbors can touch each other. The touching of $\psi\psi$ or $\bpsi\bpsi$ neighbors is forbidden by the Pauli principle. One $\psi, \bpsi$ pair is placed at the two ends of the string. \footnote{This interpretation of the time slice dynamics can be used for a  proof of the equivalence of the free Dirac vacuum loops and the planar loops made of non-intersecting closed paths on a surface. Such a proof would use the Hamiltonian representation of a Fermionic system with conventional creation/annihilation operators and chiral boundary conditions.}

This resembles the 't Hooft string model for planar graphs \cite{tHooft:1974pau}. Our string has the same topology but takes into account the gauge invariance of the planar graph expansion. This will be demonstrated in the next sections.

\subsection{Loop equation for fermion determinants}

Let us now find the variation of $\Delta$ with respect to the boundary value of $e$.

The basic relation reads
\begin{smalleq}\begin{align} \label{3.20}
\frac{\delta  e^{-m l_\Gamma}  }{\delta e(x_0)} &= -m \int_\Gamma\frac{\dd l}{2e} \delta^2(\xi - \xi_0) e^{-m  l_\Gamma}, \br
\dd l &= |\dd\xi| \sqrt{e}
\end{align}\end{smalleq}
The invariant $\delta$-function
\begin{smalleq}\begin{align} \label{3.21}
e^{-1} \delta^2(\xi - \xi_0) = \delta_{inv}(x, x_0), \quad x, x_0 \in S,
\end{align}\end{smalleq}
removes one of the $\dd^2x_\parallel$ integrations in the sum over loops. The additional $\dd l$ integration that arises is required by our rules since a new line at the graph is created.

The variation selects the graph that touches the boundary at $x_0$. The touching line is oriented to the right according to our boundary conditions.

The amplitude of touching equals $-m$; the remaining factors contribute to the phase volume of the graph.

Let us follow the loop that touches the boundary. This is the correct loop; therefore, each loop that touches it from the outside is also correct. We observe that the outside domain $S_{out}$ between this loop $\Gamma$ and the external boundary $C$ is completely equivalent to the original surface $S$. The loops trapped in this domain reflect from $\Gamma$ in the same way as they do from the external boundary (this is why we need the special boundary conditions).

The sum over configurations and orientations of the loops in $S_{out}$ produces precisely the same Dirac determinant.
\begin{smalleq}\begin{align} \label{3.22}
\Delta_{out} = \Delta_+^{out} \Delta_-^{out}
\end{align}\end{smalleq}
The loops that are trapped inside $\Gamma$ reflect from inside, so they are oriented to the left at the boundary $\Gamma^{-1}$. The sum over configurations and orientations of these loops produces the Dirac determinant for the internal domain,
\begin{smalleq}\begin{align} \label{3.23}
\Delta_{in} = \Delta_+^{in} \Delta_-^{in},
\end{align}\end{smalleq}
with the opposite boundary conditions,
\begin{smalleq}\begin{align} \label{3.24}
\psi^R = \bpsi^R = 0, \quad \text{at } \Gamma^{-1}.
\end{align}\end{smalleq}
The numerical value of this determinant is the same as that for the old boundary conditions.

We see, however, that the orientation $\lambda\sigma_3$ is related to the orientation of the boundary in space. This will be important in higher order $1/N$ expansion of QCD where the multiloop propagators depend upon the relative orientation of loops.

Note that the loop $\Gamma$ may touch itself as well as the external loop $C$ (at the surface). In this case, one or both of the domains $S_{in}, S_{out}$ reduce to a set of windows touching at the corners. The corresponding functional should be understood as
\begin{smalleq}\begin{align} \label{3.25}
\Delta = \prod \Delta(windows).
\end{align}\end{smalleq}
This is clear from the original representation of $\Delta$ as an exponential of the sum over configurations of independent loops. This planar hierarchical set of loops is shown in Fig. \ref{fig:QCDWindowsSliced}, with the red line illustrating the time slice: the number of Elf pairs flying between quarks.
\begin{figure}
    \centering
    \includegraphics[width=1.\linewidth]{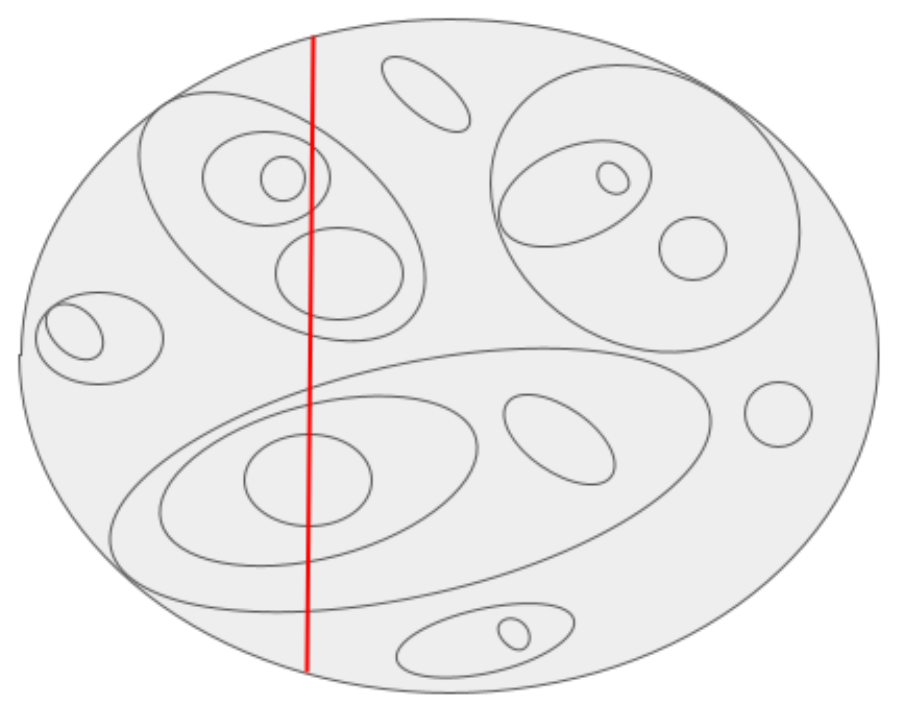}
    \caption{The hierarchical set of loops, some touching, but never intersecting each other nor self-intersecting. The time slice (red line) represents some number of Elf pairs.}
    \label{fig:QCDWindowsSliced}
\end{figure}

\section{The Minimal Surface and the Loop Equation}
\label{LoopEqOnMinSurf}
In the 1981 theory, the next step was to integrate over all embeddings $X_\mu(\xi)$. This led to the Liouville instability. In the present theory, we replace this integration with the \emph{Rigid Hodge-Dual Surface} $S_{min}[C]$ defined in \cite{migdal2025geometric} and described in the section \ref{sec:HDsurface} of this paper.

This surface is the unique additive solution to the minimal area problem with the constraint that the area derivative is self-dual. It is fixed by the boundary loop $C$. Therefore, we do not integrate over embeddings. The functional $Z[C]$ is simply the fermion determinant on this specific manifold:
\begin{smalleq}\begin{align}
Z[C] \equiv Z[S_{\chi}[C]].
\end{align}\end{smalleq}
For the time being, we fix the Hodge duality $\chi$ of this surface; later, we will discuss the restoration of parity symmetry in the loop equation.

\subsection{Loop equations}
We now demonstrate that this $Z[C]$ satisfies the loop equation of the same structure as \eqref{MMeq}.

Using the above planar loop equations for Dirac determinants and area derivatives of the HD minimal surface, we may write the following loop equation for the functional $Z[S]$:
\begin{smalleq}\begin{align} 
\label{3.28}
&\delta Z[S]/\delta\sigma_{\mu\nu}(x_1) = -m \br
&\int \D\Gamma_{11} \delta_{I,0} T_{\mu\nu}(1) Z[S_{in}] Z[S_{out}] e^{-m l_{11}}.
\end{align}\end{smalleq}
This equation is illustrated in Fig. \ref{fig:QCDAder}
\begin{figure}
    \centering
    \includegraphics[width=0.9\linewidth]{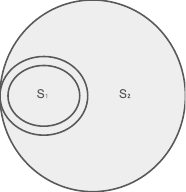}
    \caption{The first area derivative of Dirac determinant. There is one internal path $\Gamma$ which cut the surface $S$ into two pieces $S_{in} =S_1, S_{out} =S_2$}
    \label{fig:QCDAder}
\end{figure}
Here, the tensor $T$ was defined in \eqref{Tdef}. The loop $\Gamma_{11}$ touches $C_{11}$ at $x_1$. The integer index factor
\begin{smalleq}\begin{align} \label{3.29}
&I = \br
&\int_\Gamma d l_1 \int_\Gamma d l_2 \theta(l_1-l_2) e_{a b}\dot\xi_a(l_1) \dot \xi_b(l_2) \delta^2(\xi(l_1) - \xi(l_2))
\end{align}\end{smalleq}
is the absolute number of self-intersections of the loop $\Gamma_{11}$ at the surface. The Kronecker delta $\delta_{I,0} $ in the integral in \eqref{3.28} enforces the self-avoiding loop $\Gamma_{1 1}$. In the following, we denote path integrals over self-avoiding loops as $\int_{\asymp} \D \Gamma$

The measure of the open path is defined as follows:
\begin{smalleq}\begin{align} \label{3.30}
\int \D\Gamma_{11} = \int \dd l \int_{x(0)=x_1}^{x(l)=x_1} Dx(l) \delta(\dot{x}^2 - 1).
\end{align}\end{smalleq}
Eq. \eqref{3.28} was originally proposed in \cite{Migdal:1980zp}. At that time, it seemed that the equation applied to the Ising model at the surface, i.e., for the 2-component spinors. However, as we see now, one should introduce two spinors with special boundary conditions in order to obtain this equation for the propagator of the Fermi string.

Note that our theory is constructed in such a way that it is equivalent to the flat theory with a variable fermion mass. 
In the next section, we shall use the second area derivative of $Z[C]$, which reads
\begin{smalleq}\begin{align} 
\label{3.32}
&\frac{\delta^2 Z}{\delta\sigma_{\mu\nu}(l) \delta\sigma_{\alpha\beta}(r)} = A_{\mu\nu\alpha\beta} +  B_{\mu\nu\alpha\beta};\\
&A_{\mu\nu\alpha\beta} = (m/2)^2  T_{\mu\nu}(l) T_{\alpha\beta}(r)\iint_{\asymp} \D\Gamma_{l} \D\Gamma_{r} \br
&Z[S_{l}] Z[S_{mid}]Z[S_{r}]\exp{-m (l_{l}+l_{r})} ;\\
&B_{\mu\nu\alpha\beta} = (m/2)^2 T_{\mu\nu}(l) T_{\alpha\beta}(r) \iint_{\asymp} \D\Gamma_{up}\D\Gamma_{dn} \br
&\times Z[S_{up}] Z[S_{mid}] Z[S_{dn}] \exp{-m (l_{up}+l_{dn})}.
\end{align}\end{smalleq}
This equation is illustrated in Fig. \ref{fig:QCD2},\ref{fig:QCD3}.
\begin{figure}
    \centering
    \includegraphics[width=0.9\linewidth]{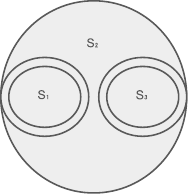}
    \caption{The first term in the second area derivative of Dirac determinant. There are two closed loops $\Gamma_{l}, \Gamma_{r}$ which cut the surface $S$ into three pieces $S_1=S_{l}, S_2=S_{mid},S_3=S_{r}$.}
    \label{fig:QCD2}
\end{figure}
\begin{figure}
    \centering
    \includegraphics[width=0.9\linewidth]{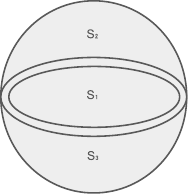}
    \caption{The second term in the second area derivative of Dirac determinant. There is one closed loop $\Gamma =\Gamma_{up}\Gamma_{dn}$, touching the loop $C$ at the left and the right points and cutting the surface $S$ into three pieces  $S_1= S_{mid},S_2=S_{up},S_3=S_{dn}$.}
    \label{fig:QCD3}
\end{figure}
The first term arose due to the variation of the last factor in \eqref{3.28}. The second term arose due to the variation of the factor $\exp{-m l}$ as has been discussed above. In this case, there are two paths $\Gamma_{up}, \Gamma_{dn}$ which cut the surface $S$ into three pieces $S_{up}, S_{mid}, S_{down}$ with the boundaries
\begin{smalleq}\begin{align} 
\label{3.33}
\partial S_{up} &= \Gamma_{up} C_{rl},\\
\label{3.34}
\partial S_{mid} &= \Gamma_{dn}^{-1} \Gamma_{up}^{-1},\\
\label{3.35}
\partial S_{dn} &= \Gamma_{dn} C_{lr}.
\end{align}\end{smalleq}
The variation of the factor $Z[S_{in}]$ in \eqref{3.28} would require one of the internal loops inside $S_{in}$ to touch the external boundary $C$. This implies that $\Gamma$ touches $C$ at the same point, so this is the triple collision that we rule out as a violation of the Pauli Principle (and Grassmann algebra) $(\bpsi^R\psi^R)^3=(\bpsi^L\psi^L)^3=0$.
As for the remaining factors in \eqref{3.28} they do not depend upon $e$.

The operator $\cL_\nu$ involves the gradient of the first area derivative. Using the translation invariance identity \cite{Mig83}
\begin{smalleq}
    \begin{align}
    \label{TransIdent}
        \pd_\mu F[C_{xx}] = -\int_{C_{xx}} \dd y_\alpha \frac{\delta F[C_{xx}]}{\delta\sigma_{\alpha\mu}(y)}
    \end{align}
\end{smalleq}
we can rewrite $\cL_\nu(Z[C])$  as a similar double integral with the second area derivative 
\begin{smalleq}\begin{align}
\cL_\nu(l) Z[C] = - \int_{C_{ll}} \dd x^\alpha_r \frac{\delta^2 Z[C]}{\delta\sigma_{\mu\nu}(l)\delta\sigma_{\alpha\mu}(r)}
\end{align}\end{smalleq}

Substituting the loop equation \eqref{3.32} for $Z[C]$ here, we find two terms corresponding to the two drawings in Fig. \ref{fig:QCD2} and \ref{fig:QCD3}. 

\subsection{Large fermion mass and induced QCD}
We assume that the fermion mass $m$ is much larger than the physical scale $\Lambda_{QCD}$, along the lines of the induced QCD scenario \cite{Kazakov:1992}.
As we shall see, in this limit, the mass of the fermion $m$ serves as a UV cutoff, and the $A$-term reduces to the same loop operator applied to the area of the Hodge-dual minimal surface. This term vanishes by the Bianchi identity, as we have established in our previous work and summarized in the beginning of this paper.
The second term reduces, in this limit, to the right side of the MM equation multiplied by a coefficient determined by the local properties of the Dirac determinant on a flat minimal surface in the vicinity of the self-intersection of the loop.

These conclusions will be based on a well-known property of geodesic paths on Riemann surfaces, which we formulate as a Lemma.
\begin{lemma}[Geodesic Inequality]
For any minimal surface embedded in Euclidean space, the geodesic distance $L_{geo}(x,y)$ is  bounded by the Euclidean distance:
\begin{smalleq}\begin{align}
L_{geo}(x,y) \ge |x-y|_{\R^4}.
\end{align}\end{smalleq}
\end{lemma}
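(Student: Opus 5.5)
The plan is to reduce the statement to a length comparison between a curve on the surface and the straight segment joining its endpoints. I would first fix the setting. The surface is the image of an isometric (or at least distance-nonincreasing) immersion $X:\Sigma\to\R^n$. For the Hodge-dual surface of Section~\ref{sec:HDsurface} this is the map $\xi\mapsto Y_\mu(\xi)=f_\mu(z)+\bar f_\mu(\bar z)$ into $\R^4$, or equivalently $X^i_\mu=\eta^{\chi,i}_{\mu\nu}Y_\nu$ into $\R^3\otimes\R^4$. The induced metric is $g_{ab}=\pd_aY_\mu\pd_bY_\mu$, which in the conformal gauge of \eqref{Virasoro} is $e\,\delta_{ab}$ with $e\propto|f'_\mu|^2$. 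The geodesic distance $L_{geo}(x,y)$ is then the infimum of the induced lengths
\[
\ell_\Gamma=\int_\Gamma\sqrt{e}\,|dz|
\]
over all piecewise smooth curves $\Gamma\subset\Sigma$ joining the preimages of $x$ and $y$. This is the same length functional that enters the fermion amplitudes \eqref{3.8}.

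The core step is the pointwise identity $\sqrt{g_{ab}\dot\xi^a\dot\xi^b}=|\tfrac{d}{dt}Y(\xi(t))|_{\R^4}$. This holds by the chain rule, since $g$ is the pullback of the Euclidean metric. Integrating over $t$ shows that $\ell_\Gamma$ equals the Euclidean length of the image curve $Y\circ\Gamma$. A Euclidean curve between two points is never shorter than the chord, by the triangle inequality applied to inscribed polygons. Equivalently, $|Y(1)-Y(0)|=\left|\int_0^1\tfrac{d}{dt}Y\,dt\right|\le\int_0^1\left|\tfrac{d}{dt}Y\right|dt$. Taking the infimum over $\Gamma$ gives the claimed inequality. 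Minimality of the surface is never actually used, so the lemma holds for any immersed surface with its induced metric.

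The one point I expect to need care is the normalization of $e$ and of the target space. The paper writes $e=2\sqrt2|f'_\mu|^2$, and it also uses the $\R^3\otimes\R^4$ embedding. There, for the self-dual 't Hooft symbols, $\sum_i|X^i|^2$ is a multiple of $|Y|^2$: one has $\eta^{\chi,i}_{\mu\nu}\eta^{\chi,i}_{\mu\lambda}=3\delta_{\nu\lambda}$. So the induced metric differs from the $\R^4$ pullback by a constant factor, which must be at least $1$ for the inequality to hold as stated. I would fix the convention so that $L_{geo}$ is measured with $\pd_aY_\mu\pd_bY_\mu$, or else note that the constant only improves the bound. I would also handle the singular cases mentioned in the remark on holes and poles. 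Near poles of $f'$ the length diverges, so the inequality holds trivially. At branch points where $f'_\mu=0$ the metric degenerates, but the chain-rule argument is unaffected because it only needs $Y$ to be Lipschitz along $\Gamma$.
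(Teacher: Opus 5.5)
Your proof is correct and follows the paper's argument. The paper likewise notes that curves on the surface are a subset of all Euclidean curves joining $x$ and $y$, that the induced line element $\sqrt{e}\,|d\xi|$ coincides with the Euclidean one, and hence that the constrained minimum is at least the straight-line distance. Your explicit chain-rule and chord inequality make that one-line argument more careful. So does fixing $e$ to the actual pullback metric, which for $Y=f+\bar f$ with $(f')^2=0$ is $2|f'_\mu|^2\delta_{ab}$; the paper's stated $e=|f'|^2$, taken literally, would undercount lengths by a factor $\sqrt{2}$.
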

\begin{proof}
    The geodesic distance between two points $x,y$ along the surface embedded in Euclidean space is a constrained minimum of the length of paths connecting these points in Euclidean space. The constraint is that  every point of this path belongs to the surface. The set of such paths is a subset of all Euclidean paths, and the line element $d l = |d \xi |\sqrt{e}$ in the induced metric $e = |f'|^2$ coincides with the Euclidean line element $d s = \sqrt{|d f|^2}$; therefore, the minimal length of the subset path is greater or equal to the target space distance $\int d s = |x-y|_{\R^4}$.
\end{proof}
\begin{remark}
    Note that we used the induced metric $e = |f'|^2$, the same as for the Euclidean minimal surface (Goldschmidt solution to the Plateau problem), rather than the full induced metric on a general surface in $\R^3\otimes \R^4$.  The metrics coincide on a holomorphic minimizer, but for the area derivatives, we need the full external geometry in the 12 dimensional space $\R^3\otimes \R^4$. Once the area derivatives are taken (as they are in our loop equations, resulting in Hodge-dual tensors $T_{\mu\nu}$) we could use a holomorphic minimizer for the induced metric in path integrals.
\end{remark}

We are going to use this theorem for the terms arising in the loop equation for the Dirac determinants.

\subsection{The $A$-term}
The first term (see Fig. \ref{fig:QCD2}) in the limit of large fermion mass (much larger than the external curvature $Q_{a b} = \diag\left\{-\rho,\rho\right\}$ of the minimal surface at its boundary $C$) effectively involves infinitesimal flat loops $\Gamma_{l}, \Gamma_{r}$. The factor $Z[S_{mid}]$ reduces to the constant $Z[S[C]]$, which we take out of the path integrals $\iint_\asymp \D \Gamma_l \D \Gamma_r$. These integrals can be computed on an infinite flat semiplane, and they reduce to constants independent of the points $x_l, x_r$. We find
\EQ{
&A_{\mu\nu\alpha\beta} \to  T_{\mu\nu}(l) T_{\alpha\beta}(r) Z[C] a^2;\\
& a = m/2\int_{\asymp} \D\Gamma Z[\Gamma] \exp{-m l[\Gamma]} ;
}
Now, we recall that the area derivative of the minimal surface  is also proportional to the tangent tensor $T_{\mu\nu}$, and let us factor out of $Z[C]$ the vacuum energy factor (with $S[C]$ being the Hodge-dual minimal area)
\EQ{
Z[C] &= G[C] Z_1[C];\\
 G[C] &= \exp{- \kappa S[C]}
}
We can rewrite the factor $T T$ as a second area derivative
\EQ{
&A_{\mu\nu\alpha\beta} \to \frac{-a^2 Z_1[C]}{4\kappa^2 }\frac{\delta^2 G[C]}{\delta\sigma_{\mu\nu}(l)\delta\sigma_{\alpha\beta}(r)}  ;
}
which is valid for arbitrary $\kappa$ and  $x(l) \neq x(r)$. 
Now, we use the translational identity \eqref{TransIdent} backwards
\EQ{
& \int_{C_{ll}} \dd x^\alpha_r A_{\mu\nu\alpha\mu}=\br
& \frac{-a^2Z_1[C]}{4 \kappa^2 } \int_{C_{ll}} \dd x^\alpha_r \frac{\delta^2 G[C]}{\delta\sigma_{\mu\nu}(l)\delta\sigma_{\alpha\mu}(r)}=\br
& \frac{a^2 Z_1[C]}{4 \kappa^2 } \pd_\mu \fbyf{G[C]}{\sigma_{\mu\nu}(l)} 
}
The last integral, as we already established, vanishes for the Hodge dual minimal surface by the Bianchi identity.
Thus, the (unwanted) $A-$ term in the loop equation vanishes in the local limit.

\subsection{The $B-$ term}
The $B-$ term in the Dirac determinant loop equation has a completely different structure, and it is singular in the local limit.
\EQ{
&B_{\mu\nu\alpha\beta} = (m/2)^2 T_{\mu\nu}(l) T_{\alpha\beta}(r) \iint_{\asymp} \D\Gamma_{up}\D\Gamma_{dn} \br
&\times Z[S_{up}] Z[S_{mid}] Z[S_{dn}] \exp{-m (l_{up}+l_{dn})}.
}
The $B-$term in the loop equation 
\EQ{
&\int_{C_{ll}} \dd x^\alpha_r B_{\mu\nu\alpha\mu}
}
involves the matrix product of two $T$ tensors
\EQ{
 TT_{\alpha\nu} = T_{\alpha\mu}(r)T_{\mu\nu}(l)
}
which can be reduced further using the algebra of the $\eta$ tensors (valid for both Hodge-chiralities $\chi = \pm 1$)
\EQ{
\eta^{i,\chi}_{\alpha\mu}\eta^{j,\chi}_{\mu\nu} = - \delta_{i j} \delta_{\alpha\nu} - e_{i j k} \eta^{k,\chi}_{\alpha\nu}
}
Both tensors $T$, being Hodge-dual, can be represented as
\EQ{
T_{\mu\nu}(x) = \oh \vec  \tau(x)\cdot \vec \eta^{\chi}_{\mu\nu}
}
with some unit vector $\vec \tau(x)^2=1$, depending on the point $x \in C$. This tangent tensor was defined as a limit of the normalized area element $\Sigma_{\mu\nu}$ on the surface when a point $x\in S[C]$ approaches the boundary $C$.

Afterward, the matrix product becomes
\begin{smalleq}\begin{align}
 TT_{\alpha\nu} = \frac{-1}{4}\lrb{(\vec \tau(r)\cdot \vec\tau(l)) \delta_{\alpha\nu} + (\vec \tau(r)\times \vec\tau(l))\cdot \vec \eta_{\alpha\nu}}
\end{align}\end{smalleq}
Now, we explore the consequences of the geodesic inequality. In the local limit, the points $l, r$ belong to the self-intersection of the loop $C$.  The Goldschmidt minimal surface (additive in the limit when $C = C_{lr}\cdot C_{rl}$) breaks into two surfaces covering closed loops connected at the self-intersection point, as shown in Fig \ref{fig:QCDSelfIntersect}.
\begin{figure}
    \centering
    \includegraphics[width=0.9\linewidth]{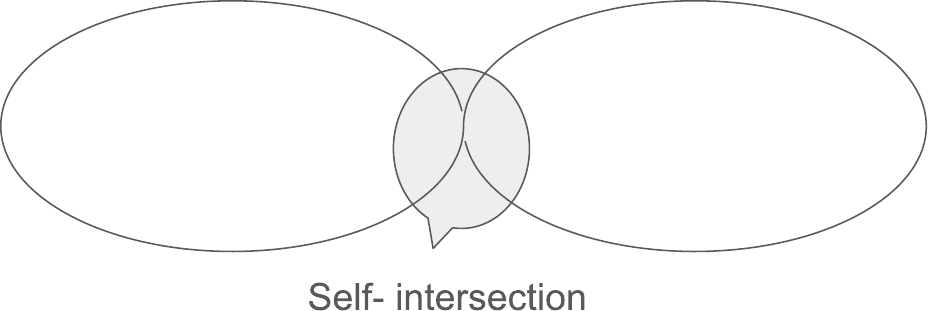}
    \caption{Self-intersecting loop.}
    \label{fig:QCDSelfIntersect}
\end{figure}
The zoom into the vicinity of the self-intersection is shown in three dimensions in Fig. \ref{fig:WhitesBridge}.
\begin{figure}
    \centering
    \includegraphics[width=0.9\linewidth]{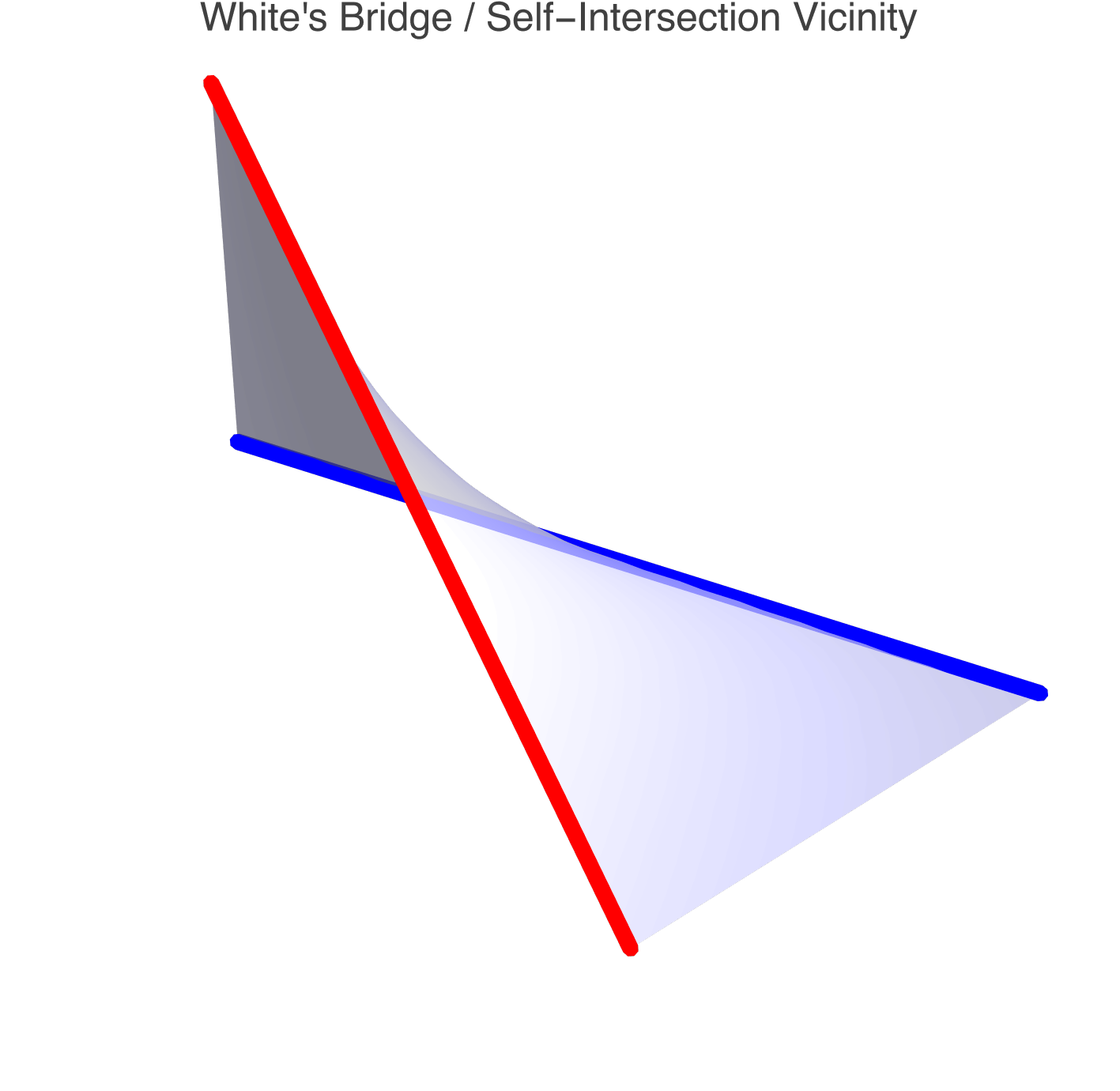}
    \caption{The additive minimal surface in the vicinity of the self-intersection. There is a narrow twisted strip (White's bridge) connecting the minimal surfaces bounded by two parts of the self-intersecting loop.}
    \label{fig:WhitesBridge}
\end{figure}
The minimal surface is, in general, curved, though the arithmetic mean curvature vanishes everywhere. In the vicinity of the self-intersection, for our Goldschmidt solution, there is an infinitesimal narrow bridge connecting two parts bounded by loops $C_{l r}, C_{r l}$. This is proven in White's bridge theorem \cite{White1996}. 
This property places the tensors $T(l), T(r)$ in the infinitesimal vicinity of each other on this bridge. In the local limit, we have $\vec \tau(l) \to \vec \tau(r)$, so that the matrix product
\begin{smalleq}\begin{align}
T_{\alpha\mu}(r)T_{\mu\nu}(l) \to \frac{-1}{4} \delta_{\alpha\nu}
\end{align}\end{smalleq}
As for the rest of the factors, the sum over the short paths $\Gamma_{up},
\Gamma_{dn}$ is dominated by the geodesic distance, which is bounded by the
Euclidean distance. This precise alignment relies on the isotropic
scaling of White's bridge in the varifold limit $\epsilon \to 0$. In the
strict limit of the topology change, the continuity of the area Hessian
demands that the tangent planes  align, yielding the required
Fierz-like identity for the MM loop equation without generating anomalous
contact terms. A rigorous geometric measure-theory proof of the Hessian's
continuity across this topology change is left for future mathematical study,
but is physically mandated by the $O(4)$ symmetry of the local loop
diffusion. We find:
\EQ{
&B_{\mu\nu\alpha\mu} \to -b\delta_{\alpha\nu} Z[C_{lr}]Z[C_{rl}] \delta^4(C(l) - C(r)); \\
& b\delta^4_{m}(C(l) - C(r))= \br
& \frac{m^2}{16} \iint_{\asymp} \D\Gamma_{l}\D\Gamma_{r} Z[S_{mid}] \exp{-m (l_{l}+l_{r})}.
}
\begin{figure}
    \centering
    \includegraphics[width=0.9\linewidth]{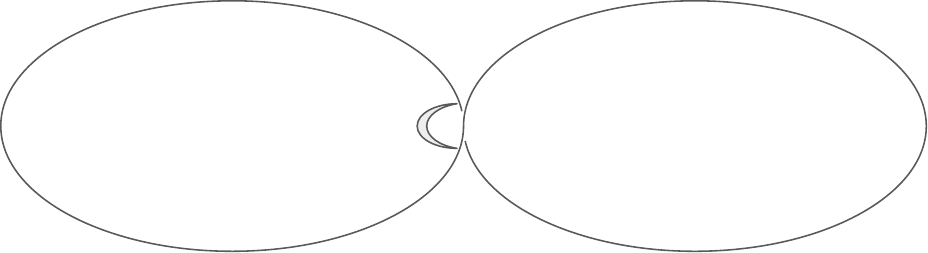}
    \caption{Two arks $\Gamma_{l},\Gamma_{r} $ connecting self-intersection points, bounding the crescent-shaped area $S_{mid}$.}
    \label{fig:QCDDeltaTerm}
\end{figure}
This path integral over small surface paths $\Gamma_{l},\Gamma_{r} $ is illustrated in fig.\ref{fig:QCDDeltaTerm}, where these paths are two arcs bounding the crescent-shaped $S_{mid}$.
We use an approximation of the delta-function
\EQ{
\delta^4_{m}(x - y) =4 m^4\exp{- 2 m |x-y|}
}
Putting the pieces together, we find the loop equation \eqref{MMeq} with
\EQ{
 &W[C] =  \frac{Z[C]}{Z[\1]};\\
 &\lambda \delta^4_{m}(C(l) - C(r))= \frac{m^2 Z[\1]^2}{16} \br
 &\iint_{\asymp} \D\Gamma_{up}\D\Gamma_{dn}  \exp{-m (l_{up}+l_{dn})} W[C_{up}\cdot C_{dn}].
}

The chirality of the Hodge dual minimal surface dropped from this equation, as it dropped from the whole internal geometry, including the minimal area and the Dirac determinant. Therefore, there is no necessity to symmetrize our solution over Hodge chirality to restore the parity of QCD: the parity is preserved by the Dirac determinant on a Hodge-dual minimal surface.

The sign is positive, as it should be for QCD with a positive 't Hooft coupling constant $\lambda$.  As for the value of this coupling constant, it is determined by normalization $Z[\1]$. The string tension normalization factor $G[C] = \exp{- \kappa S[C]}$ cancels on both sides of the loop equation, so it does not influence $\lambda$, but the normalization factor $Z[\1]$ at vanishing loop multiplies this constant twice.

The asymptotically free QCD we are looking for corresponds to the limit when this factor goes to zero. In that limit, the standard RG computations based on planar graphs would produce the QCD mass scale
\EQ{
\label{QCDMass}
&\Lambda_{QCD} =m \lambda^{-\tfrac{51}{121}} \exp{-\frac{24 \pi^2}{11\lambda}};
}
Therefore, all we need is the limit of the Elfin theory such that this factor vanishes
\EQ{
\lambda \propto Z[\1]^2 \to 0;
}
The rest of perturbative QCD will follow from the loop equation by iterations in $\lambda$ starting with $W[\1]=1$.
Let us now outline this perturbative QCD derivation.

\section{Asymptotically free planar QCD from the loop equation}
\label{sec:asymptoticfreedom}
The Elfin theory induces QCD in the limit of large fermion mass, serving as a short-distance cutoff. There are no singularities at finite mass, which resolves the ambiguities associated with cusp and perimeter singularities of the Wilson loop. The Wilson loop is defined in the Elfin theory as a regularized solution of the loop equation, with the delta function smeared by a finite fermion mass. This naturally regularizes the loop equation by smearing the delta function on the right side while preserving the gauge invariance by keeping all loops closed. The Elfin theory falls into the category of regularized solutions of the loop equation. 

As it satisfies the MM equation together with the initial condition $Z[\1] = \const{}$, it reproduces all the planar graphs of asymptotically free QCD. All we know about cusp singularities, perimeter corrections, and the dimensional transmutation of the QCD mass scale comes from that perturbation expansion. Furthermore, the Elfin theory offers a globally regularized definition of planar QCD as a system of free fermions living on a minimal surface: a well defined and solvable two-dimensional model. 

This model will be discussed in detail in the next section; in the rest of this section, we shall derive the planar graph from the loop equation in the context of Elfin theory.
\subsection{Bootstrap equation and  planar graphs with glassed windows}
The planar graphs in Feynman gauge, including the ghost loops, were reproduced from the loop equation in the original MM paper \cite{MM1981NPB} using the so-called Bootstrap equation (section 6 in that paper). This equation was based on the inversion of the point derivative operator $\pd_{\mu}$ in the equations of motion, using the Bianchi identity. The area derivative was split into linear $B_\alpha$ and quadratic $B_{\mu\nu}$  terms
\EQ{
&\fbyf{W[C]}{\sigma_{\mu\nu}(x)} = \lrb{\pd^x_\nu\delta_{\mu\alpha} -\pd^x_\mu\delta_{\nu\alpha}}B_\alpha[C_{xx}] + B_{\mu\nu}[C_{x x}];
}
The linear terms were related to the quadratic terms and the source term $J_\nu[C_{x x}]$ on the right side of the MM equation
\EQ{
  & B_\nu = \lrb{-\pd^2\delta + [\pd,\pd]}^{-1} \lrb{J_\lambda - \pd_\mu B_{\mu\lambda}};\\
& J_\nu[C_{x x}] = \lambda \INT{}{} d y_\nu \delta^4(x-y) W[C_{x y}] W[C_{y x}]
}
The next step was to expand the inversion operator in powers of the commutator
\EQ{
&\lrb{-\pd^2\1 + [\pd,\pd]}^{-1} = - \pd^{-2} \1 \br
&- \pd^{-2}[\pd,\pd]\pd^{-2} - \pd^{-2}[\pd,\pd]\pd^{-2}[\pd,\pd]\pd^{-2} + \dots
}
and use the identity
\EQ{
[\pd_\alpha,\pd_\beta] F[C_{x x}] = \left.\fbyf{F[C_{x x}]}{\sigma_{\alpha_{\beta}}(y)}\right|^{y= x-0}_{y =x +0 }
}
The last step was to represent the inversion of the point derivative operator $\pd^{-2}$ as a sum over the Brownian path $\Gamma$ in 4D space of the functional $J[C_{x x}]$ transported along this path by adding "wires" to the loop
\EQ{
\label{WWGamma}
&- \pd^{-2} J_\nu[C_{x x}] = \lambda \int d^4 z \INT{}{} \D \Gamma_{x z} \br
&\oint\displaylimits_{\Gamma_{z x}\cdot C_{x y}\cdot C_{y x}\cdot \Gamma{x z}} d y_\nu \delta^4(z-y) W[\Gamma_{z x}\cdot C_{x y}] W[ C_{y x}\cdot \Gamma_{x z}]
}
The important property of this bootstrap equation is that it involves the path integral over Brownian paths $\Gamma_{x z} $ \emph{in Euclidean 4D space}, which is the basis for the further reconstruction of the gluon propagators . This integral representation is depicted on Fig.\ref{fig:WWGamma}.
\begin{figure}
    \centering
    \includegraphics[width=1.\linewidth]{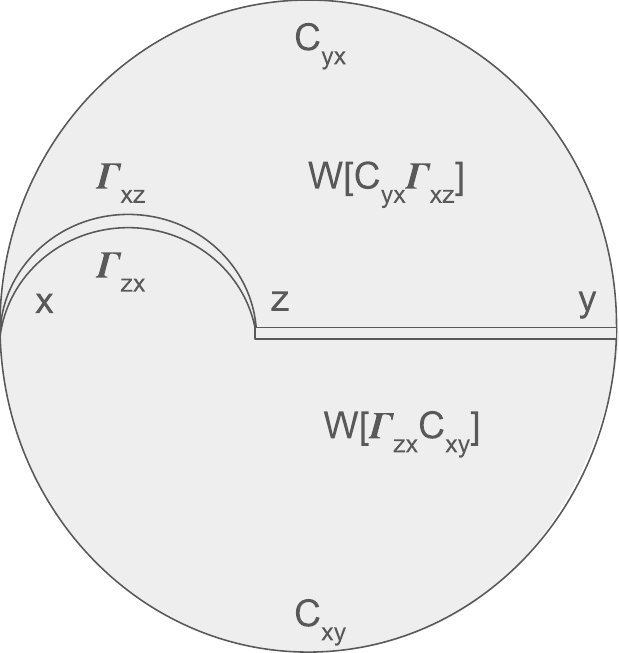}
    \caption{The path integral in \eqref{WWGamma} with delta function represented by a straight double line, and Brownian paths $\Gamma_{x z},\Gamma_{zx}$  by a crescent.}
    \label{fig:WWGamma}
\end{figure}
\subsection{Gluon graphs by iterations of Bootstrap equation}
In the old MM papers \cite{MM1981NPB, Makeenko:1981yf}, the bootstrap equation was iterated in the coupling constant $\lambda$, starting with  $W[\1] =1$.

Iterative evaluation reproduces conventional planar graphs, including Faddeev-Popov ghost loops. The first order planar graph follows immediately from the  \eqref{WWGamma}, as shown on Fig. \ref{fig:WWGamma}, by replacing $W$ by $1$ in the path integral, after which this path integral yields the gluon propagator $1/(x-y)^2$.
The second order graphs are already nontrivial (see \cite{MM1981NPB}, Appendix B). This is where the Faddeev-Popov ghost loops emerge directly from iterations of the bootstrap loop equation; this is also where the beta function appears, leading to asymptotic freedom with its running coupling constant. Perturbative QCD is reproduced by the bootstrap equation to all orders, as it was further investigated in \cite{Makeenko:1981yf}. 

Let us summarize this section as follows.
\begin{itemize}
    \item The planar MM equation \eqref{MMeq} can be transformed into an equivalent path integral equation by inverting the non-commutative operator $\pd{}$. The new equation expresses the area variation as a non-linear functional of $W[C]$, which can be represented as a series of frame diagrams, like the one in Fig. \ref{fig:WWGamma}.
\item Frame diagrams can be constructed by means of a certain operator technique. The frame diagram of an arbitrary order looks like a planar tree with windows glassed with $W$ functionals. Integration over internal paths is implied. 
\item Perturbatively, when $W[C_{ab}] \Ra 1$ in a framed graph, the sums over Brownian paths produce the gluon propagators, and the commutators $[\pd{}, \pd{}]$  generate proper tensor numerators in the gluon graphs.
\item The analytical expression for the frame diagram is manifestly gauge invariant. It contains only gauge-invariant (and parametric-invariant) quantities: $W$ functionals of closed loops and line elements, $\dd \theta\dot C_\mu(\theta)$. The loop integration goes along the external loop, $C_{x x}$, as well as along internal paths (with appropriate coefficients). 
\item The bootstrap equation is equivalent to the system including the Bianchi identity and the planar equation accompanied by euclidean boundary conditions. For this reason, iterative solution of the bootstrap equation recovers uniquely Faddeev-Popov perturbation theory (see appendices of \cite{MM1981NPB} and \cite{Makeenko:1981yf}). 
\item Various parts of the loop integral correspond to various graphs of ordinary perturbation theory including ghost loops. Being gauge variant separately, these graphs combine into a manifestly gauge-invariant diagram in the loop space. 
 \end{itemize} 
\section{Effective action, its geometry and its singularities}
\label{sec:effectiveaction}
The formal proof that the Elfin determinant $Z[C]$ in the limit of large Elf mass and vanishing normalization $Z[\1]$ satisfies the MM loop equation with an asymptotically free coupling constant $\lambda \to 0$ raises  many questions. The loop equation is local, due to the delta function on the right side. 
However, as we all know, the perturbative expansion of $W[C]$ in powers of the 't Hooft coupling constant $\lambda$ produces the planar graphs inside the loop $C$, which are nonlocal functionals of the loop because of the massless gluon propagator $\delta_{\mu\nu}/(x-y)^2$ in coordinate space. We get a sequence of logarithmically divergent integrals like
$$\oint_C d x_\mu \oint_{C_{x x}} d y_\mu/(x-y)^2,$$
leading to logarithmic singularities at the cusps of the  loop $C$.

We now trace the origin of this non-locality and the emergence of logarithmic singularities within the Elfin theory.

\subsection{Local expansion of the heavy fermion determinant}
The physical interpretation of our confining factor arises from the effective action of a heavy Dirac fermion propagating on the minimal surface $\Sigma$. In the limit of large mass $m$, the fermion determinant $W_{\text{eff}} = -\log \det(\slashed{D} + m)$ admits a local heat-kernel expansion, as we have found in previous sections:
\EQ{
\label{LiouvilleInt}
&W_{\text{eff}}[C] \approx  \INT{\Sigma}{} d^2 z  \sigma e^{2\rho} + \frac{|\pd_z\rho|^2}{3\pi}  + O(m^{-2});\\
& \rho(z,\bar z) = \frac{\log( \bar \lambda(\bar z) \lambda(z)) + \log( \bar \mu(\bar z) \mu(z))}{2};
}
The leading term is the Area Law, where the tension $\sigma \sim m^2$ provides the confining potential. The subleading term is the conformal anomaly (Liouville action). There is always another zero mode confining factor $G[C]= \exp{-\delta \sigma S[C]}$, renormalizing the string tension $\sigma$, as we found in the previous work \cite{migdal2025geometric}. Equivalently, the area term generated in the heavy-mass expansion of the determinant provides an additive renormalization of the vacuum-energy/string tension sector, already parametrized by the zero-mode dressing $\exp{-\kappa S[C]}$; only the sum defines the single physical string tension $\sigma_{phys}=2\sqrt{2}\kappa_{phys}$ in the continuum theory, while $M$ acts  as a regulator scale removed in the local limit.

In our framework, we do not integrate over fluctuating metrics (which would lead to the Polyakov-Liouville anomaly). Instead, the surface $\Sigma$ is the rigid solution to the Plateau problem bounded by the loop $C$. Consequently, the anomaly term becomes a functional of the boundary loop itself:
\EQ{
S_{\text{shape}}[C] =\frac{1}{12\pi} \INT{D}{} \left| \pd_z\lrb{\log \bar\lambda \lambda + \log \bar\mu \mu}  \right|^2 d^2z
}
This functional is non-local with respect to the loop coordinates $C_\mu(\theta)$, as it requires solving the bulk Laplace equation to determine $\lambda(z),\mu(z)$. It represents the "Loewner Energy" \cite{Wang:2019loewner} of the loop, acting as a geometric stiffness that penalizes deviations of the flux tube cross-section from circularity.

\subsection{Cusp singularities and the Hilbert transform}
The analytic structure of the minimal surface provides a natural regularization of the ultraviolet divergences associated with the Wilson loop, while correctly reproducing the universal singularities of Gauge Theory. The holomorphic tangent vector is reconstructed from the boundary loop velocity $\dot{C}_\mu$ via the Hilbert transform $H$:
\EQ{
\label{HilbertTransform}
&2\I e^{\I\theta} f'_\mu(e^{\I\theta}) = \dot{C}_\mu(\theta) + \I H[\dot{C}_\mu](\theta), \\
& H[u](\theta) = \frac{1}{2\pi} \text{P.V.} \INT{}{} d\theta' \cot\lrb{\frac{\theta-\theta'}{2}} u(\theta')
}
For smooth loops (such as the circle or helicoid), $H[\dot{C}]$ is smooth, and the area density is finite everywhere. However, for polygonal loops used in scattering amplitudes, the tangent vector $\dot{C}_\mu$ has discontinuities (cusps).

At a cusp, the Hilbert transform of the step function generates a logarithmic singularity:
\EQ{
H[\dot{C}_\mu](\theta) \sim \log|\theta-\theta_{cusp}|
}
This propagates into the area tensor $F_{\mu\nu} = \Re f'_{[\mu} \Im f'_{\nu]}$, causing the area derivatives of the minimal surface to diverge logarithmically at the corners of a polygonal loop. This divergence is the geometric origin of the Cusp Anomalous Dimension ($\Gamma_{cusp} \log \Lambda_{UV}$) and the Sudakov double logarithms found in perturbative QCD. Thus, the Hodge-dual minimal surface captures (or at least imitates) the collinear singularities of the gauge theory through the poles of the Hilbert kernel.

\subsection{New view at the cusp singularity}
The logarithmic singularity at the cusps, derived in perturbative QCD by Korchemsky and Radyushkin \cite{KorchemskyRadyushkin1987}, finds a natural interpretation in our framework. In standard perturbation theory, the Wilson loop with a cusp of angle $\gamma$ (in Euclidean space) exhibits a divergence:
\EQ{
\log W &\sim -\Gamma_{cusp}(\gamma) \log(\Lambda_{UV} L) \br
&\sim -\frac{\lambda}{8\pi^2} (\gamma \cot \gamma - 1) \log(\Lambda_{UV} L)
}
In the standard renormalization group approach, this divergence is absorbed into a multiplicative renormalization constant. However, in our geometric framework, the logarithmic divergence arises physically from the energy density of the minimal surface near the corner, generated by the pole in the Hilbert transform.

Rather than removing this term, we observe that the theory possesses an intrinsic ultraviolet cutoff $\Lambda$ determined by the mass scale of the heavy Elf particles, which is equivalent to the regularization of the fermion determinant. In the continuum limit $\Lambda \to \infty$, the bare 't Hooft coupling $\lambda(\Lambda)$ must vanish according to the law of asymptotic freedom:
\EQ{
\lambda(\Lambda) \approx \frac{24\pi^2}{11 \log(\Lambda/\Lambda_{QCD})}
}
Consequently, the "divergent" cusp energy becomes the product of a vanishing coupling and a diverging geometric factor. This product tends to a finite, universal limit:
\EQ{
E_{cusp} &\sim \lim_{\Lambda \to \infty} \lambda(\Lambda) \log(\Lambda L) \times (\text{Geometric Factor}) \br
&= \text{finite}
}
Specifically, using the one-loop beta function coefficient, the cusp contribution to the effective action becomes a scale-invariant geometric potential:
\EQ{
S_{cusp} = \frac{3}{11} (\gamma \cot \gamma - 1)
}
This mechanism is  analogous to the continuum limit in Lattice Gauge Theory, where the bare coupling $g_0 \to 0$ as the lattice spacing $a \to 0$ to keep physical masses finite. In our induced QCD, the cutoff has a physical and mathematical origin in the regularization of the minimal surface, so the limit $m \to \infty$ (where $\lambda \to 1/(\beta_0 \log m)$) has a precise mathematical meaning, replacing the formal subtraction of infinities with a finite limiting process.

\subsection{Spinor factorization and the geometry of Gauss maps}\label{GaussMapsTwistor}
The twistor factorization \cite{Penrose:1984spinors}  of the null tangent vector $f'_{a\dot{b}} = \lambda_a \mu_{\dot{b}}$ leads to a factorization of the induced metric on the worldsheet, as summarized in our \ref{GenTheory}. Using the isomorphism $SO(4) \cong SL(2)_L \times SL(2)_R$, the metric density factorizes into the product of invariant norms of the left and right spinors:
\EQ{
e(z,\bar z) = 2\sqrt{2} f'_\mu \bar{f}'_\mu = \sqrt{2} \bar\lambda \lambda \bar\mu \mu 
}
The area derivative \eqref{arderHodge} in the spinor representation simplifies as follows 
\EQ{
\label{arderGaussmap}
&\fbyf{S_\pm}{\sigma_{\alpha\beta}} = -n^{\pm}_i \eta^{i\pm}_{\alpha\beta};\\
& n^{\pm}_i = \left\{ \frac{\bar\lambda\sigma_i\lambda}{\bar\lambda\lambda}, \frac{\bar\mu\sigma_i\mu}{\bar\mu\mu}\right\}
}

The Liouville field $\rho = \oh \log(f'_\mu \bar{f}'_\mu)$ splits into a sum of potentials:
\EQ{
\label{KahllerPot}
&\rho(z,\bar z) = \Phi_L(z,\bz) + \Phi_R(z,\bz), \br
&\text{where } \Phi_{L,R} =  \left\{ \oh\log \bar\lambda \lambda , \oh\log \bar\mu \mu \right\};
}
Geometrically, $\Phi_L$ and $\Phi_R$ are the Kähler potentials for the maps from the worldsheet to the projective spinor spaces $\mathbb{CP}^1_L$ and $\mathbb{CP}^1_R$. These maps $n_i^\pm$ are identified with the Left and Right Gauss maps of the minimal surface \cite{Hoffman:1980gauss}. In our particular case of the Hodge-dual minimal surface $S_\chi$, the area element reduces to a single Gauss map for a given Hodge chirality $\chi = \pm 1$.
\EQ{
&\Sigma_{\mu\nu} \propto n^{\pm}_i \eta^{i\pm}_{\alpha\beta} H^{\mp}(z,\bar z);\\
& H^{\pm}(z,\bar z) = \left\{ \bar\lambda \lambda , \bar\mu \mu\right\};
}
The area element still depends on both left and right spinors, but it transforms under $SO(4)$ as a single Gauss map, corresponding to its chirality.
The anomalous part of the effective action, $I \propto \INT{}{} (\nabla \rho)^2$, therefore expands into:
\EQ{
I = I_L[\lambda] + I_R[\mu] + 2 \INT{D}{} \nabla \Phi_L \cdot \nabla \Phi_R \, d^2z
}
The first two terms are the actions for the Sigma models on $\mathbb{CP}^1$, which are topological invariants (proportional to the winding numbers of the Gauss map). The third term represents a non-trivial interaction between the left and right sectors. Unlike the metric, the action does not decouple; this interaction term measures the integrated correlation between the curvatures of the left and right spinor bundles. It represents the geometric energy cost of the relative orientation of the left and right Gauss maps required to satisfy the boundary conditions.

\section{Momentum loops and continuum limit}
\label{sec:momentumloops}
\subsection{Regularization, renormalizability, and the microscopic definition of Planar QCD}
\label{sec:RegularRenorm}
Since the inception of QCD half a century ago, the understanding of its regularization and renormalizability has evolved significantly. Currently, Lattice QCD serves as the standard microscopic definition of the theory. In this framework, renormalizability reduces to the mathematical statement that, in the limit of a vanishing bare coupling constant, the physical mass scale $\Lambda_{QCD}$ (measured in units of inverse lattice spacing) exponentially approaches zero. Consequently, the lattice spacing vanishes when measured in observable physical units such as the proton radius.

However, the lattice definition comes at a cost: it explicitly breaks the continuous symmetries of four-dimensional Euclidean space, from the rotation group $O(4)$ to subtle topological structures like the Hodge duality of the area tensor. For instance,  instanton solutions or Hodge-dual minimal surfaces do not exist on a discrete grid; they only emerge in the continuum limit.

These disadvantages are compensated by the universality of the Renormalization Group (RG) flow. It is well-established (though unproven)  that asymptotically free theories reach a universal fixed point at large scales, regardless of the specific microscopic regularization, provided the symmetries are restored in the infrared. This universality allows for various microscopic definitions of QCD, provided they land in the correct phase (the basin of attraction of the asymptotically free fixed point).

Previous lattice approaches illustrate the sensitivity of this limit. The large-N reduced Eguchi-Kawai model, for example, encountered unphysical fixed points. Similarly, the original Induced QCD proposal \cite{Kazakov:1992}—which attempted to induce gauge dynamics via heavy free fermions with vanishing color currents on a lattice—failed because the theory collapsed into the wrong phase rather than asymptotically free QCD. Our Elfin theory is a variation on this theme: heavy free fermions inducing QCD. However, the critical distinction is that our fermions live on a \emph{rigid minimal surface} rather than a four-dimensional lattice. This geometric constraint keeps the theory non-trivial in the local limit, making it a viable regulator for Planar QCD.

A central question arises regarding the \emph{renormalization} of the Wilson loop itself. Standard perturbation theory dictates that $W[C]$ requires multiplicative renormalization to remove divergences associated with cusps and self-intersections. However, redefining the loop equation solely to render the coordinate-space Wilson loop finite in the local limit obscures the underlying dynamics.

Physical renormalizability demands only the existence of a universal mass scale $\Lambda_{QCD}$ and the finiteness of \emph{observable scattering amplitudes}. The Wilson loop $W[C]$ is an intermediate, off-shell quantity. Physical observables involve path integrals over quark trajectories, which are governed by a Brownian measure. A Brownian trajectory is nowhere differentiable; it consists of an infinite density of cusps. Attempting to renormalize the Wilson loop for a smooth contour $C$ is physically artificial because the dominant contributions to the path integral come from fractal paths where the concept of a "cusp angle" is ill-defined.

Therefore, instead of forcing the non-renormalizable coordinate Wilson loop to be finite, we adopt a radically different strategy:
\begin{enumerate}
    \item We keep the theory regularized at the level of the Wilson loop $W[C]$ by the finite Elf mass $m$ (which acts as the UV cutoff).
    \item We perform the summation over quark paths to construct the momentum loop amplitude $W[P]$.
    \item Only \emph{after} transitioning to Momentum Loop Space do we take the local limit ($m \to \infty$).
\end{enumerate}
As demonstrated in this paper, the transition to momentum space integrates out the geometric singularities. The momentum amplitude $W[P]$ remains finite and satisfies an algebraic-differential equation. By taking the bare coupling to zero as $m \to \infty$ in accordance with the RG beta function, we recover the finite, universal observables of Planar QCD without ever needing to define a "renormalized" Wilson loop in coordinate space.

\subsection{Momentum loop path integral}
As we discussed in the previous section, the ordinary Wilson loop is not  directly observable, and its  singularities are merely intermediate results that will eventually disappear (cancel or integrate out) in observables. 

Unlike the standard Feynman-Schwinger worldline representation in coordinate space, which is inherently plagued by the UV cusp singularities of $W[C]$, we utilize the phase-space path integral representation introduced by the author in \cite{SecQuanM95}. In this formulation, the  planar quark amplitude factorizes  into a universal kinematic Dirac phase-space trace $\mathcal{K}[P]$ and the dynamical gluonic momentum-loop functional $W[P+Q]$.

The amplitude for the propagation of a free Dirac particle in the phase-space loop (the Dirac loop) is:
\EQ{
\mathcal{K}[P] =\tr \hat P \exp{-\int_{0}^\infty \dd \tau (\I \gamma_\mu P_\mu(\tau) + m_q) };
}
The full amplitude for the  quark loop in the Planar QCD vacuum, with some external momenta $q_1,\dots q_n$ injected into these amplitudes,is a product of the Dirac loop and Wilson loop times the vertex factor for the injection of external momenta:
\EQ{
\label{Amplitude}
&A[q_1\dots q_n] \propto \int\displaylimits_{\text{ordered } \tau_k} \prod \dd \tau_k \int \D P W[P + Q] \mathcal{K}[P]\br
& Q_\mu(\tau) = \sum_k q_k \Theta(\tau-\tau_k) ; \quad \sum_k q_k =0\\
\label{WPdef}
& W[P] = \lambda\int \D C W[C] \exp{ \I \int d \tau P_\mu(\tau) \dot C_\mu(\tau)}
}
For these amplitudes, the fundamental domain $D$ for the minimal surface is used as the upper semiplane $\Im z >0$, with the loop $C(x)$ corresponding to the real axis $x = \Re z$.
This formula for the quark current amplitude in Planar QCD is illustrated in Figure \ref{fig:FeynmanWheel}.

\begin{remark}{\textbf{Phase Space Normalization and the Departure from the Brownian Gauge}.}

The phase-space formulation possesses a canonical normalization. The functional measure is  normalized by the standard symplectic volume element $\mathcal{D}P \mathcal{D}C' / (2\pi)$ per continuous degree of freedom. Consequently, in the absence of the interacting Wilson loop (the free-particle limit, $W[C]=1$), integrating over the loop coordinates $C$ natively yields a functional Dirac delta function,  enforcing the conservation of local momentum:
\EQ{
&\int d^4k \int \frac{\mathcal{D}C' \mathcal{D}P}{2\pi} \exp{\I \oint (P_\mu(\tau) - k_\mu) \dot{C}_\mu d\tau} F[P] \br
&= \int d^4k \int \mathcal{D}P \, \delta[P(\cdot) - k] F[P] = \int d^4k \, F[k]
}
This demonstrates that the phase-space integral is  well-defined from standard quantum mechanics, unique  modulo the infinite 1D diffeomorphism volume $\text{Vol}(\text{Diff}(S^1))$ associated with the simultaneous reparametrization of $P(\tau)$ and $C(\tau)$.Conventionally, this $\text{Diff}(S^1)$ volume is factored out by imposing a proper-time gauge-fixing via an $\int (C')^2 d\tau$ kinetic term. This specific gauge choice inevitably generates the Feynman-Schwinger representation with its standard Brownian (Wiener) measure in loop space—a measure inherently dominated by everywhere-discontinuous fractal paths that artificially generate the  UV cusp singularities.
In our framework, we recognize that the Brownian measure is just one possible choice of gauge fixing for the 1D diffeomorphisms. We will deliberately abandon it. Instead, as detailed in Section \ref{sec:twistorParametrization}, we fix $\text{Diff}(S^1)$ by imposing the Virasoro constraint $(f')^2 = 0$ on the holomorphic twistors parametrizing the loop velocity. This specific gauge choice entirely avoids the fractal artifacts of the Brownian measure and  locks the 1D phase-space kinematics of the quark loop to the 2D internal confining geometry of the minimal surface.
\end{remark}
The momentum loop amplitude $W[P]$ was introduced in our papers \cite{MLDMig86,SecQuanM95, Mig98Hidden} and was recently used in the solution of the Navier-Stokes turbulence problem \cite{migdal2024quantum, ReviewPaperAM}. The advantage of the momentum loop space is that the loop equation for $W[P]$ is purely algebraic/differential. There are no delta functions; just polynomial functionals of the local momentum $P(\tau)$ and some non-singular functional derivatives of $W[P]$. The coordinate delta function, which leads to all perturbative singularities such as logarithmic divergences at the cusps, disappears in momentum space in the same way that it disappears from the Klein-Gordon equation $$(-\pd^2 + M^2) G = \delta(x-x_0) \Ra (p^2 + M^2) \tilde G = 1$$
\begin{figure}
        \centering
        \includegraphics[width=0.5\linewidth]{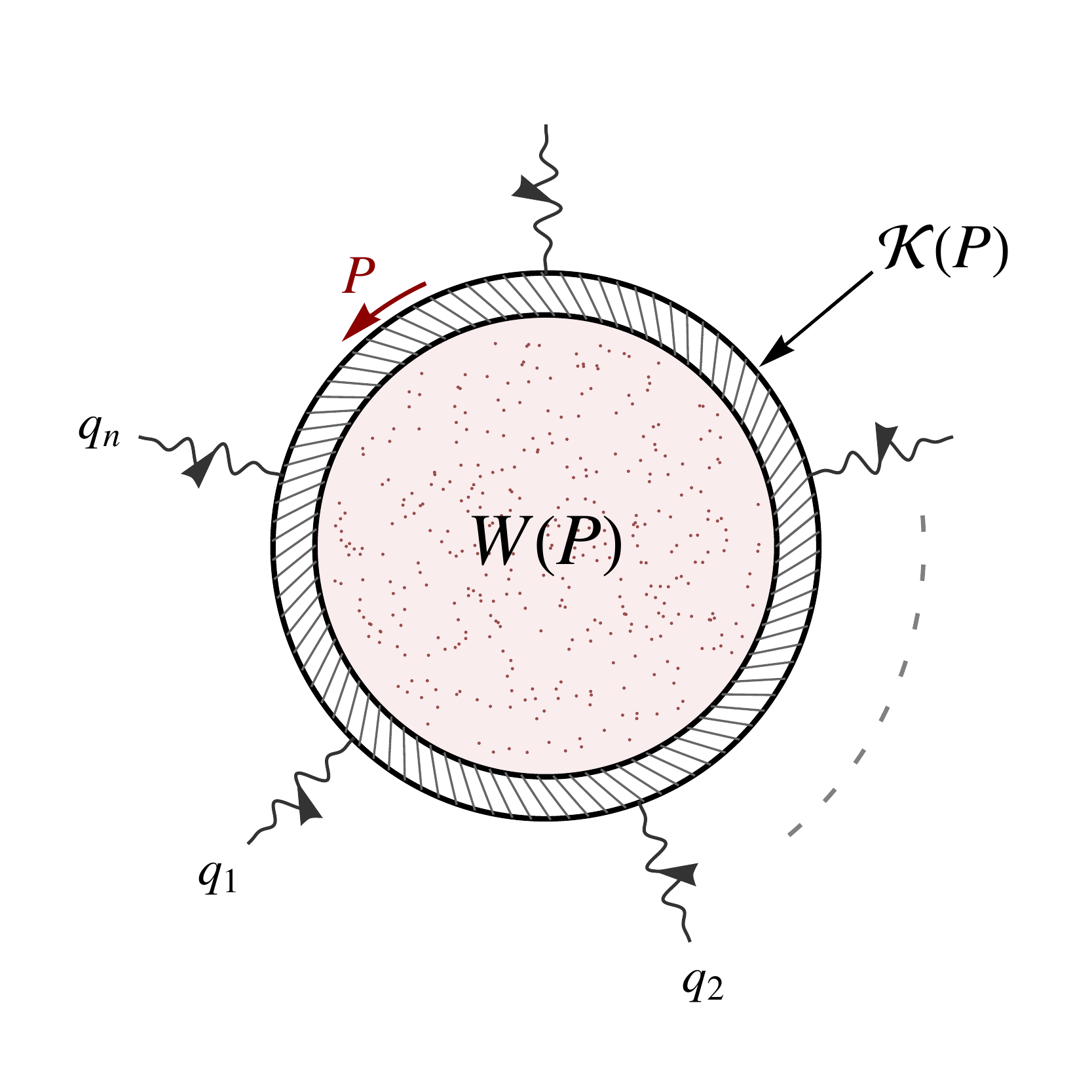}
        \caption{The momentum loop amplitude, $A(q_1,\dots q_n)$ with the inside part of the wheel corresponding to Momentum loop $W[P]$, and the outer rim corresponding to Dirac path amplitude $\mathcal K[P]$. The paths $P(t)$  are random, interacting with inner geometry of the string surface represented by amplitude $W[P]$.}
        \label{fig:FeynmanWheel}
    \end{figure}
\subsection{The Momentum loop equation}
The MM equation  \eqref{MMeq}, when transformed into the momentum loop equation (MLE) by functional Fourier transform, also has a simple algebraic structure ( expanded in  so-called Magnus invariant forms \cite{Magnus1954})
\EQ{
\label{MagnusExp}
&W[P] = \sum_n \mathcal{W}^{(n)}_{\alpha_1,\dots\alpha_n} \Omega^{n}_{\alpha_1\dots\alpha_n}[P];\\
&\Omega^{n}_{\alpha_1\dots\alpha_n}[P]=\hat T\int \prod \dd \theta_k P'_{\alpha_k}(\theta_k)
}
where $\hat T\int$ stands for the ordering of the integration angles on the circle.
The MLE becomes an algebraic differential equation in terms of Magnus forms (a.k.a iterated path integrals):
\EQ{
\label{MLENonsing}
 &Q_\nu[P] W[P] = \int_{0}^{2\pi} \dd \theta \ff{P_\nu(\theta)} \lrb{W[P_{0,\theta}]W[P_{\theta,2\pi}]};\\
 & Q_\nu[P] = T^{\alpha\beta\gamma}{}_\nu \Omega^{3}_{\alpha\beta\gamma}[P];\\
 \label{Tdef}
&T^{\alpha\beta\gamma}{}_\nu = \delta_{\alpha\beta}\delta_{\gamma\nu}+ \delta_{\gamma\beta}\delta_{\alpha\nu}-2\delta_{\alpha\gamma}\delta_{\beta\nu};
}

The algebraic expression on the left side emerged as a Fourier transformation $\ff{\dot C} \Ra \I P$ of the operator 
\EQ{
\hat L_\nu(0) &= \pd_\mu \ff{\sigma_{\mu\nu}(0)}\br
&=T^{\alpha\beta\gamma}{}_\nu \frac{\delta^3}{\delta \dot C_\alpha(\theta-0)\delta \dot C_\beta(\theta)\delta \dot C_\gamma( \theta+0)}
\label{LnuCspace}
}
The  delta function on the right (which caused all the perturbative singularities in $W[C]$)  disappeared in the Fourier integral (see original papers \cite{MLDMig86,SecQuanM95, Mig98Hidden} for details). 

Here is how this happens in our new framework, with the closed loop  described by the position field $C_\nu(\theta)$ subject to the closure constraint $C(2\pi)=C(0)$. 
The linear functional measure in loop space can be written in one of two equivalent forms ( with $v(\theta) = C'(\theta)$)
\EQ{
\label{linmeasure}
\D C&\propto \delta^4\lrb{C(2\pi)-C(0)} \prod_{\theta=0}^{2\pi} d^4 C(\theta) \br
&\propto \delta^4\lrb{\oint v(\theta) d \theta} \prod_{\theta=0}^{2\pi} d^4 v(\theta)
}
It is factorizable at self-intersections, which leads to the factorization in the momentum loop equation, with the delta function absorbed by a measure
\EQ{
&\delta^4\lrb{C(2\pi)-C(0)} \delta^4\lrb{C(\theta_1)-C(\theta_2)}\prod_{\theta=0}^{2\pi} d^4 C(\theta) \br
&= \delta^4\lrb{C(\theta_1)-C(\theta_2)} \prod_{\theta' =\theta_1}^{\theta_2} d^4 C(\theta') \br
&\delta^4\lrb{C(\theta_2)-C(\theta_1+2\pi)} \prod_{\theta'' =\theta_2}^{\theta_1+2\pi} d^4 C(\theta'') \br
&\implies \D C\delta^4\lrb{C(\theta_2) - C(\theta_1)} = \D C_{12} \D C_{21}\br
&\implies \int \D C \hat L_\nu(\theta_1) W[C] \exp{\I \oint P \cdot C'} = \br
&   \int \D C_{12} \int d\theta_2 \stackrel{\longleftrightarrow}{C'_\nu(\theta_2)} \exp{\I \int_{\theta_1}^{\theta_2} P \cdot C'} W[C_{1 2}] \br
&\int\D C_{21}  \exp{\I \int_{\theta_2}^{\theta_1+2\pi} P \cdot C'} W[C_{21}]\br
&\implies  L_\nu W[P_{1 1}] = \int d\theta_2 \stackrel{\longleftrightarrow}{\ff{P_\nu(\theta_2)}} W[P_{1 2}] W[P_{2 1}]
}
Here $$\stackrel{\longleftrightarrow}{f(x)} = \frac{f(x+0) + f(x-0)}{2}$$
After conformal transformation mapping the unit disk to the upper semiplane, we map the circle origin $\theta = \pm \pi$ by $x = \tan (\theta/2)$ to $ x = \pm \infty$. 

For our final local solution, however, we shall use a different gauge condition for diffeomorphisms, which is more adequate for string theory. The gauge ambiguity of the parametrization of the linear measure \eqref{linmeasure} leads to the overcounting of every loop by an infinite volume of the gauge orbit. This volume must be factored out of the measure to avoid overcounting. This refactoring  seems incompatible with the above factorization of the loop measure over the self-intersecting loops, but upon closer inspection, the paradox disappears. The measure must be factored by the total volume of its isometries, which, in the case of a simple loop, reduces to the volume of diffeomorphisms of a unit circle  $\text{Vol}(\text{Diff}(S^1))$ ; however, in the case of a self-intersecting loop, there are two periodic parts; therefore, such a loop \emph{maps twice} the unit circle to $\R^4$. The volume of these isometries is $\lrb{\text{Vol}(\text{Diff}(S^1))}^2$, which matches the product of linear measures on the right side of the factorization identity. So, the physical measure, with single counting of each loop, factorizes in the MM equation in momentum loop space.

\subsection{Equivalence of loop diffusion operator to the third Magnus form}
In momentum loop space, the loop diffusion operator $L_\nu$ from \eqref{LnuCspace} is recast as a trilinear product of the momentum field $P(\theta)$:
\EQ{
&T^{\alpha\beta\gamma}{}_\nu\I^3 P_\alpha(\theta_1) P_\beta(\theta_2)P_\gamma(\theta_3);\\
& \theta_1 = 2 \pi -\epsilon, \theta_2 = 0, \theta_3 = +\epsilon;\quad \epsilon\to +0;
}
By utilizing the integral representation
\EQ{
P(\theta) = \oint d \tau P'(\tau) \Theta(\theta-\tau)
}
where $\Theta(\theta-\tau)$ denotes the periodic theta function on the circle, the trilinear product is transformed into a general triple integral of the momentum derivative $P'$. The rigorous mathematical foundation for expanding loop functionals in such iterated integrals is provided by Chen's theorem \cite{Chen1977}. Chen established that the algebra of iterated path integrals separates points in loop space, meaning that the infinite hierarchy of these integrals forms a  complete functional basis. This guarantees that our formal expansion in terms of Magnus forms is exhaustive.

Through direct tensor contraction, the kinematic tensor $T^{\alpha\beta\gamma}{}_\nu$ implements the  Dynkin projector \cite{Dynkin1947} onto the nested double commutator $[\hat{D}_\alpha, [\hat{D}_\alpha, \hat{D}_\nu]]$. Here, $\hat D = \partial + A$ represents the covariant derivative operator within the gauge field representation of the Wilson loop. This mapping identifies the associative word structure of the integrated momentum field with its unique Lie algebra representation. 

Consequently, by Ree’s Theorem \cite{Ree1958} for Free Lie Algebras \cite{Reutenauer1993}, this operator identically annihilates the symmetric shuffle ideal associated with the triple integral. As the point-split coordinates $\theta_i$ collapse to the origin, the periodic step-functions define a  cyclically ordered simplex. This unique parametric-invariant Lie projection ensures that the functional converges to the third-order Magnus form $\Omega^{(3)}$. Direct evaluation of this triple integral, mediated by the $T$ projector, confirms this algebraic correspondence and yields the canonical normalization coefficient $1/6$.
\subsection{Algebraic recurrence for the Momentum Loop Equation}

The Momentum Loop Equation (MLE) exhibits a significant simplification when the momentum functional is represented as a Magnus-ordered exponential:
\EQ{
W[P] = \tr\left\{\hat{\mathcal{T}}\exp{\int d\theta\,\hat{X}_{\mu}P_{\mu}^{\prime}(\theta)}\right\}
}
where $\hat{X}_\mu$ represents the operator position of the string endpoint in spacetime. Expanding this trace in Magnus forms $\Omega^n$, we obtain universal tensor coefficients $\mathcal{W}_{\alpha_1\dots\alpha_n}^{(n)} = \mathrm{tr}\{\hat{X}_{\alpha_1}\dots\hat{X}_{\alpha_n}\}$. The functional derivative acts algebraically on this structure, bringing down  commutator insertions:
\EQ{
\label{MomentumArDer}
&\frac{\delta}{\delta P_{\mu}(\theta)}\mathrm{tr}\left\{\hat{\mathcal{T}}\exp{\int d\theta^{\prime}\hat{X}_{\alpha}P_{\alpha}^{\prime}}\right\} = \br
&P_{\nu}^{\prime}(\theta)\mathrm{tr}\left\{[\hat{X}_{\mu},\hat{X}_{\nu}]\hat{\mathcal{T}}\exp{\int_{\theta}^{\theta+2\pi}\hat{X}_{\alpha}P_{\alpha}^{\prime}}\right\}
}
By applying this exact derivative formula, the MM loop equation sheds its coordinate-space singularities and collapses into a finite, recursive combinatorial algebra. On the left-hand side, the kinematic operator $Q_\nu$ acts via a shuffle product between the 3-index $T^{\alpha\beta\gamma}_\nu$ tensor and the lower-order $\mathcal{W}^{(n-3)}$ coefficients. On the right-hand side, the integration over $\theta$ seamlessly sews the split traces back together, yielding an algebraic sum over commutator insertions without requiring cyclic sums or residual functional calculus.
These relations  are depicted in Fig.\ref{fig:MLE}
\begin{figure}
    \centering
    \includegraphics[width=1.\linewidth]{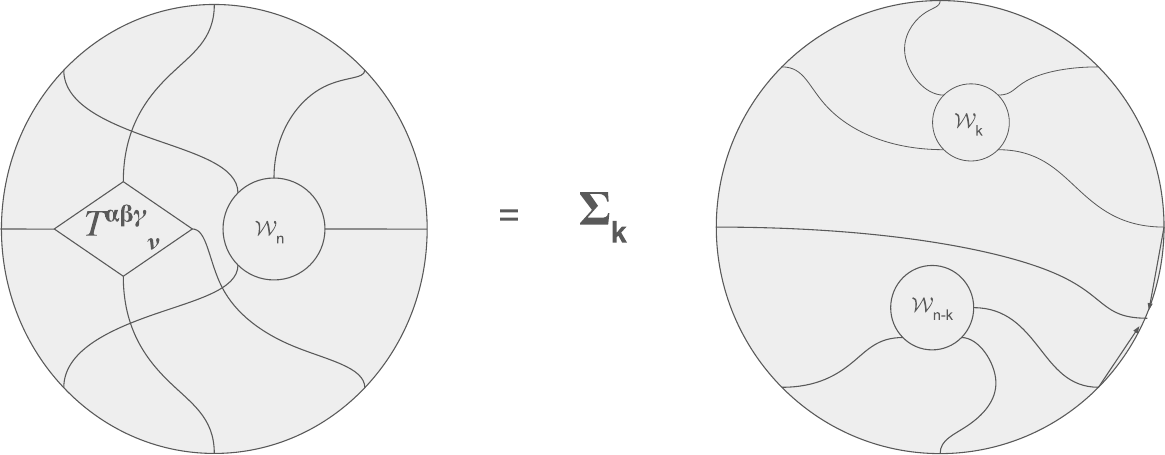}
    \caption{The recurrent equation for the coefficient tensor parameters $\mathcal{W}^{n}$ in the Magnus expansion \eqref{MagnusExp}. The external circle represents the unit circle where the angular parameter belong. The inner round blobs with wavy legs landing on a unit circle represent the $\mathcal{W}^{n}$  tensors, the rhombus with four wavy legs represents the 4-tensor $T$ in \eqref{Tdef}. Two arrows on the right side correspond to functional derivatives $\ff{P_\nu}$  bringing  commutators by  momentum area derivatives \eqref{MomentumArDer}. These equations relate the higher  $\mathcal{W}^{n}$  tensors to the lower ones, which allows the solution depending on some gauge parameters, undetermined by the recurrent relations.}
    \label{fig:MLE}
\end{figure}
Because the physical momentum loop is closed ($\oint P^\prime d\theta = 0$), the first-order form identically vanishes: $\Omega^1_\alpha=0$. This topological constraint generates a formal ``shuffle ideal'' that relates higher-order Magnus forms. We must therefore equate the left and right sides of the MLE modulo this ideal. The null space of this ideal natively gives rise to symmetric gauge ambiguities that dynamically decouple from physical observables.

\subsection{Exact solution up to $\mathcal{O}(P'^7)$ and non-analyticity of the Taylor expansion}
\label{sec:exact_MLE}

To construct the exact solution to the Momentum Loop Equation (MLE) in the perturbative regime of small loop momenta, we employ the functional Taylor--Magnus expansion of $\mathcal{W}[P]$. Because the MLE evaluates the functional derivative on closed loops, the geometric ansatz must natively reflect this exact topological closure.

When the functional area derivative acts, it splits the integration domain into two subloops on the right-hand side of the equation. To analytically enforce the closure of these subloops within the continuous path integral, we introduce a modified momentum derivative containing a boundary Dirac delta function:
\EQ{
P'_{\text{closed}}(\theta') = P'(\theta') + \Delta P \, \delta(\theta' - \theta_0)
}
where $\Delta P = \int P' d\theta$ is the boundary gap of the subloop, and $\theta_0$ is the position of the gap ($\theta_0 = \theta$ and $2\pi$ for the respective right-hand side subloops).

When substituted into the iterated integrals of the Magnus expansion, the $\delta$-function explicitly inserts the macroscopic gap vector $\Delta P_\mu \equiv \Omega^{(1)}_\mu$ precisely at the boundaries of the integration domain. Because $\Delta P$ corresponds to the first-order Magnus form, this topological closure modifies the $n$-th order form by subtracting the product $\Omega^{(1)} \Omega^{(n-1)}$. By the integration-by-parts identity of the Magnus expansion (the shuffle relations), this product translates to the sum of all interleaved operator positions. The loop-closing transformation thus yields the modified Magnus form:
\EQ{
\widetilde{\Omega}^{(n)} = \Omega^{(n)} - \sum_{k=1}^{n-1} \Omega^{(n)}(\text{interleaved})
}
This subtraction mathematically enforces the closed-loop Shuffle Ideal, identically restoring the cyclic symmetry of the invariant tensors $\mathcal{W}^{(n)}$. Operating in the pure closed-loop vacuum, all odd-parity traces identically vanish ($\mathcal{W}^{(2k-1)} = 0$). The functional $\mathcal{W}[P]$ is therefore expanded in a basis of cyclic symmetric, invariant tensor polynomials constructed entirely from the Kronecker metric $\delta_{\mu\nu}$.

Substituting this closed-loop basis into the MLE, we dynamically equate the left-hand side kinematic shuffle products against the right-hand side discrete commutator series. Starting from the perturbative vacuum $\mathcal{W}^{(0)}=1$, the algebraic system  yields finite solutions for the lowest-order tensors. The fundamental string tension enters at $\mathcal{W}^{(2)}$, and the kinematic stress generated by the loop derivative constrains the cyclic symmetric pairings of $\mathcal{W}^{(4)}$ and $\mathcal{W}^{(6)}$. (The explicit analytic formulas for these invariant tensors are provided in \ref{sec:appC} for small $n$. Crucially,  \ref{sec:appC} also contains the exact finite-dimensional linear algebra diagnostics (Table \ref{tab:rank_diagnostics}) and the explicit Fredholm inconsistency proof demonstrating the absolute lack of solutions at $\mathcal{W}^{(8)}$. The general algorithmic implementation is provided in \cite{MBMLEAlgebraic}.)

\subsection{The Reasons for the Failure of the Magnus Expansion for $W[P]$}
The general reason for the breakdown of the expansion at this order is that the linear space of $\mathcal{W}^{(8)}$ invariant tensors---which is  constrained by the Shuffle Ideal and cyclic symmetry---possesses a significantly smaller matrix rank than the highly specific multilinear stress generated by the crossing Kronecker deltas of $\mathcal{W}^{(4)} \times \mathcal{W}^{(4)}$. 

This is not a heuristic observation, but a  quantified algebraic contradiction. As detailed in \ref{sec:appC} (and summarized in Table \ref{tab:rank_diagnostics}), extracting the  linear system at the 8th order yields 379 geometric constraints but only 212 available parameters (which exhaustively includes the full $\mathcal{W}^{(8)}$ basis, all residual lower-order free parameters, and all gauge/ideal freedoms). The resulting system explicitly violates the Fredholm alternative condition, yielding a  positive rank deficiency ($\Delta\text{rank} = 1$) and a non-zero projection of the inhomogeneous stress onto the left-null space of the kinematic matrix.

Consequently, it is mathematically impossible for any choice of linear coefficients, or the remaining lower-order terms, to absorb and cancel this rigid cross-term stress. This yields an irreducible algebraic contradiction in the tensor matching.

The physical meaning of this failing Taylor--Magnus expansion is of paramount importance. It demonstrates that the MLE is finite, free of UV divergences, and expandable in a functional Taylor series up to $\mathcal{O}(P'^6)$. However, the geometric contradiction at the 8th order  proves that the  solution $W[P]$ is \textbf{not an analytic functional} of the bounding loop $P(\theta)$. It cannot be indefinitely expanded in a continuous Taylor--Magnus series.

\begin{remark}[The Vector vs.\ Scalar Dimensionality Mismatch]
The fundamental root of this non-analyticity is that the loop equation is inherently a \textbf{vector} equation,
\EQ{
    \hat{L}_\nu W = \int \frac{\delta}{\delta P_\nu} (W \times W),
}
governing a \textbf{scalar} functional $W[P]$. The number of independent tensor structures available for a scalar functional grows  slower than those required for a free-index vector equation (growing approximately four times slower in $d=4$). This is why, at some critical level of the Magnus expansion, there will inevitably be more vector equations than available scalar parameters. 

This is a foundational feature of the loop equation: it descends directly from the vector Yang--Mills equations of motion, which dictate that a local vector loop equation (evaluated at a fixed point on the contour) must be satisfied. Resolving this overdetermined system demands a mathematical ``conspiracy'' among the internal degrees of freedom within $W[P]$. The generic Taylor--Magnus expansion simply lacks the internal geometric structure to orchestrate this conspiracy. 

(Interestingly, the same vector-versus-scalar mismatch is fatal to all naive bosonic string theories of QCD: the loop equations are  vector equations, but the Nambu--Goto string Laplace operator is a scalar, meaning a scalar area-derivative constraint fundamentally cannot capture the full vector loop equation without an anomaly.)
\end{remark}

However, \textbf{our  theory natively provides this required conspiracy!} In the previous sections, we derived  \textbf{vector} equations for the Elfin determinant on the Hodge-dual minimal surface, and these vector equations are satisfied across all vector components. This property is made possible by Hodge duality and the planar factorization of the Majorana determinant on this Hodge dual surface.

The  physical reasons for this non-analyticity will become fully apparent in later sections, when we construct the continuum solution and reduce it to the  twistor string path integral. In that formulation, the formal Taylor expansion in $P(\theta)$ corresponds to the perturbative expansion of the  exponential phase factor:
\EQ{
    \exp{  2\I \int d\theta \, P_\alpha(\theta) \Im(z \lambda \sigma_\alpha \mu)\big|_{z=e^{i\theta}} }.
}
For such a functional Taylor series to converge, the boundary twistors $\lambda(z)$ and $\mu(z)$ would need to be  bounded. The algebraic contradiction at the 8th order  proves this is not the case; the path integral is not convergent in a perturbative Taylor expansion.

As we will discuss in detail when deriving the twistor string path integral, the integration over the momentum loops is not dominated by smooth, small $P$ variations at all. When integrated alongside the Dirac target-space product $\mathrm{tr} \, \hat{\mathcal{T}} \exp{ -\I \int d\theta \, \gamma_\mu P_\mu(\theta) + m_0 }$, we encounter  ultra-local integrals. At every angle $\theta$, we integrate out the local value $P(\theta)$. Because there are no worldline kinetic derivatives $\partial_\theta P$ in the action, the path integral completely factorizes point-by-point. As a result, the dominant trajectories are neither continuous nor dominated by small $P$ variations. We therefore conclude that the MLE is completely finite and anomaly-free, but its true continuum solution is inherently non-perturbative and cannot be Taylor-expanded beyond the 6th order.

\subsection{The General Master Field $\hat{X}$ and the Word Space}
\label{sec:master_field_words}
To fully grasp the physical and mathematical implications of the geometric breakdown at $\mathcal{W}^{(8)}$, it is instructive to re-examine the loop expansion through the lens of the Master Field construction.

A natural mathematical objection arises when considering the breakdown of the Taylor-Magnus expansion: Does the representation of the momentum loop functional as the trace of a path-ordered exponential,
\EQ{
W[P] = \mathrm{tr}\left\{\hat{\mathcal{T}}\exp{i\int dP_{\mu}\hat{X}_{\mu}}\right\}
}
restrict the generality of the solution? If $\hat{X}_{\mu}$ were assumed to be finite $N\times N$ matrices, the Cayley-Hamilton theorem and finite polynomial trace identities would artificially constrain the higher-order Magnus tensors $\mathcal{W}^{(n)}$, destroying the generality of the expansion.

However, in the planar limit ($N_{c}\rightarrow\infty$), the Master Field operates in a  infinite-dimensional Hilbert space. As established by the \textit{Master Field Theorem} proven by the author in 1994 \cite{SecQuanM95, Mig98Hidden}, any generic parametric-invariant loop functional expandable in iterated path integrals (guaranteed to be topologically complete by Chen's theorem \cite{Chen1977}) can be mapped  onto the non-commutative moment space of Free Probability and represented as a Magnus-ordered exponential path integral over a non-commutative word space.

The true non-perturbative Fock space of the string endpoint is the discrete space of ``words'' generated by $d=4$ Cuntz algebra operators:
\EQ{
a_{\mu}a_{\nu}^{\dagger}=\delta_{\mu\nu}, \quad a_{\mu}|0\rangle=0
}
In this space, the position operator $\hat{X}_{\mu}$ is uniquely constructed as an infinite, cascading sum of creation operators (the Voiculescu expansion \cite{Voiculescu1992}):
\EQ{
\hat{X}_{\mu}=a_{\mu}+\sum_{k=1}^{\infty}Q_{\mu,\mu_{1}\dots\mu_{k}}a_{\mu_{1}}^{\dagger}\dots a_{\mu_{k}}^{\dagger}
}
By the Free Moment-Cumulant Theorem \cite{Speicher1994}, this single, order-independent operator contains an infinite tower of algebraically unconstrained parameters (the free planar cumulants $Q$). Because the free Cuntz algebra contains no finite trace identities, our Master Field Theorem guarantees that the trace of the path-ordered exponential is mathematically unrestricted. It exhaustively spans the continuous functional space of any closed loop without truncation.

These $Q$ tensors are related to the invariant $\mathcal{W}$ tensors by a free cumulant expansion, uniquely solvable term by term for $Q(\mathcal{W})$, assuming the $\mathcal{W}$ tensors are finite. Therefore, the breakdown at the eighth order is definitively not an artifact of a restrictive algebraic ansatz. Rather, this lack of a solution for $\mathcal{W}^{(8)}$  implies that the assumption of an analytic functional $W[P]$ is violated beyond the sixth order of the Magnus expansion. The physical resolution is that the true solution must be a singular functional possessing a non-local dependence on the momentum loop $P$.

The theory of minimal surfaces provides a precise geometric precedent for this behavior: the Douglas functional for the area of a minimal surface bounded by a loop $C$ \cite{Douglas1931}. As established by Douglas, the area becomes a singular functional of the loop only after it is minimized over all possible boundary parametrizations.

Before this minimization, the area is simply a quadratic Dirichlet-Hilbert functional \ref{HilbertTrans} governed by the singular principal-value kernel $\mathcal{H}$. However, the strict minimization over parametrizations transforms it into a fundamentally singular functional because the optimal (isothermal) parametrization becomes implicitly and non-locally tied to the geometric shape of the loop. Any local change to the shape of the loop (for instance, tracking the normalized momentum $\tilde{P}' = P' / \int |P'| d\theta$) forces a global readjustment of this parametrization. This continuous global geometric shift modifies the analytic structure (though not the position) of the pole at $\theta=\theta'$ of the Hilbert kernel on the unit circle. Such shape-dependent, non-local singularities are fundamentally incompatible with a continuous, analytic Taylor-Magnus expansion in terms of iterated path integrals.

Our  continuum solution—the fermion determinant evaluated on the rigid Hodge-dual minimal surface—belongs to this class of singular geometric functionals. The required reparametrization invariance is dynamically enforced by the null-twistor (Virasoro) constraint, which acts as the  algebraic mechanism implementing the Douglas minimization over parametrizations. The algebraic inconsistency at $\mathcal{W}^{(8)}$ is thus the  mathematical signature of the string geometry asserting its non-local rigidity and justifying the transition from a 1D algebraic Master Field to the 4D continuous geometry of the rigid twistor string.

\section{The twistor parametrization of the loop space measure}
\label{sec:twistorParametrization}
The construction of the momentum loop path integral \eqref{WPdef} in planar QCD begins with a linear measure \eqref{linmeasure} over the space of closed loops. However, after gauge fixing the reparametrization symmetry ($\text{Diff}(S^1)$) via the Virasoro constraint, the correct measure on the reduced phase space becomes nonlinear. In this section, we derive the Jacobian $J = \sqrt{\det Q}$ arising from the twistor parametrization of the metric in loop space. In addition to this determinant, there is a Faddeev-Popov Jacobian arising from the gauge condition that fixes the residual local rescaling invariance of the twistor parametrization.

\begin{remark}{\textbf{Measure Transformation via Algebraic Jacobians.}}

As emphasized in Remark 1.2, we are not evaluating an abstract functional measure over a fluctuating 2D surface. Therefore, the measure we derive here does not encounter the ambiguities of Teichmüller space. Our starting point is the well defined, linear measure of the 1D quark-loop phase space: $\int \mathcal{D}C' \mathcal{D}P$. The transition to twistor space is  a local (in $\theta$) change of variables, mapping the bounding curve velocity $C^{\prime}(\theta) \propto \text{Im}(z\lambda\sigma\mu)|_{z=e^{i\theta}}$ directly to the boundary twistors $\lambda, \mu$. Because the twistors are localized at the boundary, this transformation involves no functional anomalies. Once the zero modes are fixed, we derive the corresponding twistor measure via standard, explicitly calculable algebraic matrix Jacobians.
\end{remark}

\subsection{Gauge Symmetry and the Need for Gauge Fixing}
The linear integration measure $\mathcal{D}C$ \eqref{linmeasure} is redundant due to the reparametrization invariance of the Wilson loop $W[C]$. To define the path integral without ambiguity, we must fix this gauge freedom and factor the measure by the volume of diffeomorphisms of the unit circle.

For the computation of the momentum loop $W[P]$, we choose the \textbf{conformal gauge}, where the loop parametrization is identified with the boundary value of a holomorphic null curve $f_\mu(z)$ in the bulk. The tangent vector of this complex curve factorizes into spinors (referred to as twistors in this context):
\EQ{
\label{fpimlambdamu}
f'_\mu(z) \sigma_\mu = \lambda(z) \otimes \mu(z)
}
The physical loop $C_\mu(\theta)$ is recovered as the real part of the boundary value of $2f_\mu(e^{\I\theta})$. The Virasoro constraint $(f')^2=0$ imposed by this ansatz does not restrict the \emph{shape} of the loop in $\mathbb{R}^4$; rather, it fixes the parametrization to be isothermal (where $|\dot{C}| = |H[\dot{C}]|$).

The original measure $\mathcal{D}C$ is invariant under the infinite-dimensional diffeomorphism group $\text{Diff}(S^1)$. Physically, this is a redundancy: different parametrizations describe the same geometric loop. To obtain a well-defined path integral, we must factor out this gauge volume:
\EQ{
\mathcal{D}C = \mathcal{D}C_{\text{phys}} \, \text{Vol}(\text{Diff}(S^1))
}
where $\mathcal{D}C_{\text{phys}}$ is the measure on the space of unparametrized loops. Imposing the Virasoro constraint is equivalent to choosing a specific gauge slice. In the twistor language, the physical degrees of freedom are encoded in the fields $(\lambda(z), \mu(z))$, modulo residual gauge symmetries.
\begin{remark}
    The scalar products between tilde spinors and spinors are defined as
\EQ{
&\tilde \lambda \lambda = \tilde \lambda_1 \lambda_1 + \tilde \lambda_2 \lambda_2;\\
&\tilde \mu \mu = \tilde \mu_1 \mu_1 + \tilde \mu_2 \mu_2;
}
These products are $O(4)$ -invariant, unlike products 
\EQ{
 v_\alpha = \lambda \sigma_\alpha \mu  = \sum_{a, \dot b} \lambda_{a} \sigma_\alpha^{a \dot b} \mu_{\dot b}
}
which are normal matrix-spinor products and transform as four- vectors. By design, these are null-vectors
\EQ{ 
(v_\alpha)^2 =0
}
\end{remark}
\subsection{The residual $U(1)$ gauge symmetry and normalized twistors}
A residual symmetry remains in the general twistor parametrization of the solution to the Virasoro constraint. The parametrization \eqref{fpimlambdamu} is invariant under the gauge transformation with a complex function $u(z)$:
\EQ{
\label{gaugeU1}
 \delta\lambda(z) = u(z) \lambda(z);\quad \delta\mu(z) = - u(z) \mu(z)
}
This transformation leaves the product $\lambda \otimes \mu$ (and thus the velocity $v$) invariant. To fix this redundancy, we impose the normalization constraint \textbf{ at the boundary}:
\EQ{
\label{NullTwistor}
 \lrb{\bar\lambda\lambda - \bar \mu\mu}_{|z|=1} =0
}
This constraint fixes the real part of the local rescaling (where $u(z)$ is real). Considering a variation $\delta r(z)$ corresponding to this real scaling:
\EQ{
&  \delta \lambda(z) = \delta r(z) \lambda(z);\quad \delta \mu(z) = - \delta r(z) \mu(z);\\
& \delta \lrb{\bar\lambda\lambda - \bar \mu\mu} = 2\delta r(z) \lrb{\bar\lambda\lambda + \bar \mu\mu}
}
This gauge constraint requires a Faddeev-Popov Jacobian. Since the transformation is local, the Jacobian is diagonal in the position basis, leading to an extra local factor in the measure:
\EQ{
\label{dOmegaFP}
d\Omega_{FP}(\lambda,\mu) &= \D\lambda\D\mu \, \delta\lrb{\bar\lambda\lambda - \bar \mu\mu}  \det{\lrb{\bar\lambda\lambda + \bar \mu\mu}}\br
&= \prod_{|z|=1} d \lambda d \mu \, \delta\lrb{\bar\lambda\lambda - \bar \mu\mu} \lrb{\bar\lambda\lambda + \bar \mu\mu}
}
\begin{remark}
    It is important to distinguish between the bulk variables on the disk and their boundary values. Our holomorphic twistor fields $\lambda(z),\mu(z)$ are defined on the whole disk, depending on complex variable $z,\; |z| \le 1$. The conjugate twistors $\bar\lambda(z),\bar\mu(z)$ depend on $\bar z$. The loop velocity $v = C'$ is the boundary value $v = 2 \Re \pd_{\theta}f(e^{\I \theta})$. Thus, the integration measure covers \emph{the boundary data} of the twistors, which locally parametrize the loop velocity. The bulk values are reconstructed via analyticity (Hilbert transform), so the path integral is defined  over the boundary measure.
\end{remark}

\subsection{Euclidean Self-Intersection vs. Minkowski Helicity}

In Euclidean signature, minimal surfaces of finite area bounded by a loop satisfy embedding theorems (e.g., the Meeks-Yau theorem) that preclude self-intersections. In Minkowski twistor space, however, the surface behaves like a helicoid. As the trajectory winds around the twistor pole $\lambda(z) \propto 1/(z-z_0)$ or $\mu(z) \propto 1/(z-z_0)$, it moves to a "different sheet" in the complexified coordinate space. This is why the surface can have an infinite area in the Minkowski sense—it is effectively a Riemann surface with an infinite number of sheets: the coordinate map function $ f_\mu(z) \propto \log(z-z_0)$.

This multiple-valued coordinate is what allows the poles to exist without the "self-intersection" inconsistency. The surface doesn't intersect itself for the same reason a spiral staircase doesn't intersect the floor below it: it has moved in an orthogonal dimension. The "infinite area" is simply the total area of all these sheets, and the monodromy is the "height" gained between each floor of the staircase, which we quantize as:
\EQ{\Delta S_{eff} = \int_{sheet_{n}}^{sheet_{n+1}} \mathcal{L} , d\tau = 2 \pi n}

In this complexified geometry, the 'infinite area' is not a divergence of the physical action, but the sum over the topological winding numbers that localizes the energy poles $E_n$ at the discrete values where the monodromy satisfies the phase coherence condition (see the Section \ref{sec:catastropheSpectrum}).

\subsection{Twistor-induced metric in loop space}
We treat the linear measure in loop space \eqref{linmeasure} as a measure in the space of velocities $v_\mu = C'_\mu$, subject to the periodicity constraint:
\EQ{
\D C = \delta^4\lrb{\oint d \theta v(\theta) } \prod_{\theta=0}^{2\pi} d^4 v(\theta)
}
The boundary condition provides a local relation between the loop velocity and twistor boundary values
\EQ{
v_{\alpha}(\theta) = 2 \Re \lrb{ \I z\lambda(z) \sigma_\alpha\mu(z) }_{z = e^{\I\theta}}
}
This projection induces a metric on the tangent space of the twistors. The norm of a variation in loop space is:
\EQ{
&\norm{\delta v}^2 = \oint d \theta  (\delta v_{\alpha})^2 \propto\br
&\oint d \theta \left|z\delta\lambda \sigma_\alpha\mu + z\lambda \sigma_\alpha\delta\mu - \text{c.c.}\right|^2
}
Expanding the square and using Fierz identities, this norm can be written in a quadratic form acting on the spinor variations $\delta \Lambda = (\delta\lambda, \delta \mu)^T$:
\EQ{
\norm{\delta v}^2 \propto \oint d \theta\, \delta \Lambda^\dag \cdot \hat Q \cdot\delta \Lambda
}
The kernel is the $4 \times 4$ block matrix (in spinor indices):
\EQ{
\label{Qmat}
\hat Q = \begin{pmatrix}
    (\bar\mu \mu) \1_2 &\, \lambda \bar{\mu} \\
    \mu \bar{\lambda} &\, (\bar\lambda \lambda) \1_2 \\
\end{pmatrix}
}
where $\lambda \bar{\mu}$ denotes the dyadic product (rank-1 matrix). To determine the measure, we analyze the eigenvalues of $\hat{Q}$. Let $\Lambda = \bar{\lambda}\lambda = \bar{\mu}\mu$ on the constraint surface. The eigenvalues are:
\begin{enumerate}
    \item $\omega_1 = 0$: The \textbf{zero mode}, corresponding to the gauge symmetry $\delta \lambda = \lambda, \delta \mu = -\mu$.
    \item $\omega_2 = 2\Lambda$: The dilatation mode $\delta \lambda = \lambda, \delta \mu = \mu$.
    \item $\omega_{3,4} = \Lambda$: Rotation modes orthogonal to the spinors.
\end{enumerate}
The determinant of $\hat{Q}$ vanishes due to the zero mode, which reflects the redundancy fixed by the constraint \eqref{NullTwistor}. Following the Faddeev-Popov procedure, the correct Jacobian is the square root of the determinant restricted to the physical subspace (orthogonal to the gauge orbit). This is given by the product of the non-zero eigenvalues:
\EQ{
\D v = \sqrt{\det{}' Q} \, d \Omega_{FP}(\lambda,\mu)
}
Evaluating the determinant on the physical subspace yields the factor:
\EQ{
\label{TwistorJacobian}
\sqrt{\det{}' Q}  = \sqrt{2 \Lambda^3} 
}

\subsection{Normalization of the Zero Mode and the Full Measure}
The Jacobian computed in \eqref{TwistorJacobian}, $\sqrt{\det{}' Q} = \sqrt{2}\Lambda^{3/2}$, represents the volume element of the physical configuration space (the subspace orthogonal to the gauge orbits). To obtain the correct measure for the full spinor space before dividing by the gauge volume, we must include the normalization of the zero mode itself.

The zero mode corresponds to the generator of the real rescaling symmetry. The variation along this gauge orbit is given by:
\EQ{
\delta \lambda = \delta r \lambda, \quad \delta \mu = - \delta r \mu
}
where $\delta r$ is the infinitesimal parameter of the transformation. The norm of this tangent vector in the flat spinor space is:
\EQ{
\norm{\delta_{\text{gauge}}}^2 &= |\delta \lambda|^2 + |\delta \mu|^2 \br
&= (\delta r)^2 \bar{\lambda}\lambda + (\delta r)^2 \bar{\mu}\mu
}
Imposing the constraint $\bar{\lambda}\lambda = \bar{\mu}\mu = \Lambda$, this simplifies to:
\EQ{
\norm{\delta_{\text{gauge}}}^2 = 2\Lambda (\delta r)^2
}
Thus, the metric factor (or "length") associated with the gauge direction is:
\EQ{
\label{GaugeNorm}
\sqrt{\det G_{\parallel}} = \sqrt{2\Lambda}
}
The total measure on the spinor space is the product of the measure on the physical subspace and the measure along the gauge orbit. Combining the physical Jacobian \eqref{TwistorJacobian} with the gauge normalization \eqref{GaugeNorm}, we recover the full scaling factor:
\EQ{
\label{TotalJacobian}
&J_{\text{total}} = \sqrt{\det{}' Q} \times \sqrt{\det G_{\parallel}} \br
&= \left(\sqrt{2}\Lambda^{3/2}\right) \times \left(\sqrt{2\Lambda}\right) = 2 \Lambda^2
}
This confirms that the effective measure element on the twistor space, prior to the Faddeev-Popov division of the gauge group volume, scales as $2\Lambda^2$. This factor will later cancel with another local scale factor coming from the momentum loop measure.

\subsection{Summary and Physical Interpretation}
To summarize:
\begin{itemize}
    \item The original linear measure $\mathcal{D}C$ is overparameterized due to reparameterization invariance.
    \item Gauge fixing via the Virasoro constraint and twistor parametrization reduces the integration to the physical moduli space.
    \item The physically correct path integral measure is defined by the Faddeev-Popov Jacobian for the null twistor constraint \eqref{NullTwistor}. This ensures a unique counting of twistor states.
    \item Additionally, the change of variables from velocity $C'(x)$ to twistors introduces the Jacobian \eqref{TotalJacobian}, representing the volume of the physical tangent space.
    \item While this measure covers the moduli space uniformly, there remains a local $U(1)$ gauge invariance: $(\lambda,\mu) \Ra (e^{\I\phi} \lambda, e^{-\I\phi} \mu) $. The corresponding volume $|U(1)| = 2\pi$  will be simply factored out of the measure for the normalized spinors $\xi,\eta, \bar \xi \xi = \bar \eta\eta =1$. These spinors vary on $S^3\times S^3$, therefore, the space becomes $S^3\times S^3/S^1$ after this final factorization.
\end{itemize}

\subsection{Local limit of $W[P]$ using effective action of Elfin theory}
With the non-singular functional $W[P]$, we can utilize the local limit of our continuum solution. Parametrizing the loop $C$ by twistors $\lambda, \mu$, we compute the Fourier integral for $W[\hat P]$ (with $\hat P =P_\alpha \sigma^\dag_\alpha$):
\EQ{
\label{WPlambdamu}
&W[P] = \int \D \lambda \D \mu \, 2\Lambda^3 \, \delta\lrb{\bar \lambda \lambda - \bar \mu \mu}\exp{ - \int d^2 z L(z,\bar z)}\br
&\times \int d^4 k\exp{-2\I \oint \dd \theta  \Im\lrb{ z \lambda (\hat P + \hat k)\mu}}_{z = e^{\I \theta}}
}
where the combined measure factor $2\Lambda^2 \cdot \Lambda \sim \Lambda^3$ (depending on normalization conventions) is absorbed into the effective Lagrangian:
\EQ{
\label{Leffect}
L(z,\bar z) = \frac{\sigma}{2} \Lambda^2 + \frac{1}{12\pi} \left | \pd_z\lrb{\log \Lambda} \right|^2
}
The extra integration over the "zero mode" $k_\mu$ reproduces the delta function $\delta(\oint v d \theta)$ for the periodicity of the loop ($C(\8) = C(-\8)$ in the upper plane map). This results in a nonlinear theory involving spinor fields $\lambda(z), \mu(z)$ holomorphic inside the unit circle.

\section{From null twistor to Liouville coupled to $S^1\mapsto (S^3\otimes S^3)$ sigma model}
\label{sec:nulltwistorliouville}
\subsection{Polar Coordinate Parametrization of the Null Twistor Measure}

Consider the twistor variables $(\lambda_\alpha, \mu^{\dot\alpha})$ on a unit circle $|z| =1$, subject to the null constraint:
\EQ{
\bar\lambda^\alpha \lambda_\alpha = \bar\mu_{\dot\alpha} \mu^{\dot\alpha}
}
We parametrize the spinors in polar coordinates as:
\EQ{
&\lambda_\alpha = u\, \xi_\alpha, \br
&\mu^{\dot\alpha} = v\, \eta^{\dot\alpha}
}
where $u, v \in \mathbb{R}_+$ are positive real scales, and $\xi, \eta$ are normalized spinors:
\EQ{
\bar\xi^\alpha \xi_\alpha = \bar\eta_{\dot\alpha} \eta^{\dot\alpha} = 1
}
The constraint implies $u=v$. The measure on the null twistor space in these coordinates transforms as:
\EQ{
\D\lambda\, \D\mu\, \delta(\bar\lambda \lambda - \bar\mu \mu) \propto u^5\, du\, d\Omega_\xi\, d\Omega_\eta
}
where $d\Omega_\xi$ and $d\Omega_\eta$ are the Haar measures on $S^3$.

There is one subtle issue left to settle. The local gauge $U(1)$ transformation \eqref{gaugeU1} leaves the physical variables invariant. This $U(1)$ subgroup acts diagonally on the spinors and should be factored out, defining the angular manifold as the coset space:
\EQ{
&\xi, \eta \in \frac{S^3 \times S^3}{U(1)};\\
& d \Omega_{\xi\eta} = \frac{d\Omega_\xi d\Omega_\eta}{2\pi}
}

\subsection{Reduction to $u$, $\xi$, $\eta$ Variables}

The path integral over null twistors reduces to:
\EQ{
&\int \D\lambda\, \D\mu\, \delta(\bar\lambda \lambda - \bar\mu \mu)\, F(\lambda, \mu) \br
&= \int_0^\infty u^5 du \int_{(S^3 \times S^3)/U(1)} d\Omega_{\xi\eta}\, F(u\xi, u\eta)
}
This makes explicit the separation between the radial scale $u$ and the angular geometry.

\subsection{Transformation to Liouville Field}

We define the Liouville field $\rho$ by the logarithmic map:
\EQ{
\rho = 2\log u \qquad \Rightarrow \qquad u = e^{\rho/2}
}
The radial measure transforms as:
\EQ{
u^5 du = e^{5\rho/2} (e^{\rho/2} d\rho) = e^{3\rho} d\rho
}
Consequently, the path integral measure becomes:
\EQ{
&\int \D\lambda\, \D\mu\, \dots = \int_{-\infty}^{\infty} d\rho\, e^{3\rho} \int d\Omega_{\xi\eta}\, \dots
}
\subsection{Phase space path integral for quark loop amplitudes}

The physical amplitudes in \eqref{Amplitude} require one final functional integration: the path integral over the momentum field $P_\mu(\theta)$ (defined on the boundary).

The parametric invariance $\theta \to \phi\theta)$ applies to the complete phase space integral. This invariance is broken only by the trace of the ordered path exponential of the Dirac operator. To restore it fully in the effective theory, we introduce an einbein $e(\theta) > 0$ multiplying the Dirac operator:
\EQ{
&\D e \D P \tr \mathcal{T} \exp{- \oint \dd \theta \, e(\theta) (\I \gamma_\mu P_\mu(\theta) + m_q)} \br
&= \int \prod_\theta d^4 P(\theta) de(\theta) \exp{- \oint \dd \theta \, e(\theta) (\I \gamma P(\theta) + m_q)}
}
While the conventional gauge choice is $e(\theta) = \text{const}$, we have already fixed the reparametrization gauge via the Virasoro constraint on the holomorphic vector $f_\mu(z)$. Since the reparametrization acts simultaneously on $P$ and $C$, and we have factored out the volume of $\text{Diff}(S^1)$, there are no more gauge fixing to be done. We are left with integration over the arbitrary positive local einbein field $e(\theta)$.

This integration yields the inverse matrix structure. 
\EQ{
\int_0^\8 \dd e \, \exp{ - e \, d\theta \, (\I \gamma_\mu P_\mu + m_q) } \propto \frac{1}{\I \gamma_\mu P_\mu(\theta) + m_q}
}
Taking the limit of zero quark mass (see the remark \ref{quarkMass} for the definition of our effective quark mass) and focusing on the local factor dependent on $P$, we compute the integral over the local momentum variable $q_\mu(\theta) = P_\mu(\theta) d \theta$:
\EQ{
& I(\tau) = \int d^4 q \, \frac{\exp{\I q_\alpha \tau_\alpha}}{\I \gamma \cdot q}
}
where the source term is given by the twistor bilinear:
\EQ{
\tau_\alpha = \Re \lrb{ \I z \xi \sigma_\alpha \eta} e^{\rho}
}
This Fourier integral is calculable. Up to normalization, we obtain:
\EQ{
 I(\tau) \propto \frac{\I \gamma_\mu \tau_\mu}{(\tau^2)^2}
}
Using the normalization of the spinors $\xi, \eta$, and the Fiertz identity, the square of the source vector simplifies:
\EQ{
\tau^2 &= \left( \Re \lrb{ \I z\xi \sigma_\alpha \eta} \right)^2 e^{2\rho} \br
&= \frac{-1}{4} \left(z \xi \sigma \eta - \bar z\overline{\xi \sigma \eta} \right)^2 e^{2\rho} = \frac{1}{4} (4) e^{2\rho} = e^{2\rho}
}
Substituting this back into the result:
\EQ{
 I(\tau) \propto \frac{\I \gamma_\alpha \Re \lrb{ \xi \sigma_\alpha \eta} e^{\rho}}{(e^{2\rho})^2} = \exp{- 3 \rho} \left[ \I \gamma_\alpha \Im\lrb{ z\xi \sigma_\alpha \eta} \right]
}
Remarkably, the factor $e^{-3\rho}$ from the momentum integration precisely cancels the measure factor $e^{3\rho}$ derived from the twistor transformation, rendering the effective local volume element scale-invariant.

\begin{remark}[Scale Invariance of the Phase Space Measure]
The  cancelation of the Liouville factors, $e^{3\rho} \cdot e^{-3\rho} = 1$, is a necessary consequence of the symplectic geometry of the loop phase space. The canonical path integral measure $\mathcal{D}P \mathcal{D}C$ is invariant under canonical transformations, including the scaling dilatation $C_\mu \to s C_\mu$ and $P_\mu \to s^{-1} P_\mu$.
In our construction, the twistor transformation introduces a specific scaling of the coordinate velocity $v \sim e^{\rho}$. The Jacobian of this transformation ($e^{3\rho}$) represents the compression of the coordinate measure. However, the integration over the conjugate momentum $P_\mu$ acts as an inverse scaling operator (a Fourier transform), generating the counter-factor $e^{-3\rho}$. The resulting scale invariance of the effective volume element ensures that the theory remains conformal at the classical level, consistent with the dimensionless nature of the phase space volume element $d P \wedge d C \sim \hbar$.
\end{remark}
\subsection{The Holographic Liouville Theory}

The effective action now reduces to a Liouville-type term for the field $\rho$:
\EQ{
\label{Liouville}
&S_{\text{Liouville}} = \int d^2z\, \left( \frac{1}{3\pi} |\partial_z \rho|^2 + \sigma e^{2\rho}  \right);\\
& \rho(z,\bar z) = \frac{\log( \bar \lambda(\bar z) \lambda(z)) + \log( \bar \mu(\bar z) \mu(z))}{2};
}
and the amplitude is proportional to
\EQ{
&\D \Omega_{FP}[\lambda,\mu]\exp{-S_{\text{Liouville}}} \br
&\exp{-2\I\sum_k q^\alpha_k \Im \oint\dd \theta  \hat Q_k(\theta)  z\lambda \sigma_\alpha \mu} ;\\
& Q_k(\theta) = \lrb{ \Theta(\theta-\alpha_k)-\Theta(\theta-\alpha_{k+1})}
}
The closure of the loop: $\oint d \theta v_\mu =0$ is satisfied identically in our parametrization
\EQ{
\oint\dd \theta  z\lambda \sigma_\alpha\mu =  \Im \oint \dd z \lambda \sigma_\alpha\mu =0
}
This contour integral over a unit circle of a holomorphic function vanishes by the Cauchy theorem (no singularities inside the circle).

It is important to distinguish this construction from the conventional Liouville field theory encountered in 2D quantum gravity. Here, the Liouville field $\rho(z, \bar{z})$ in the bulk is not an independent fluctuating quantum field; rather, it is rigidly determined by the boundary data via the analytic continuation of the underlying holomorphic twistors $\lambda(z), \mu(z)$. Thus, this is \textbf{not} the standard dynamical Liouville theory. Instead, it represents a hidden one-dimensional theory of twistors on the boundary, disguised as a 2D Liouville theory—a simpler, albeit less familiar, mathematical object.

In our framework, the path integral measure over loop space is constructed through a systematic procedure: we begin with a linear measure, impose the Virasoro constraint via twistor parametrization, and fix the residual local scaling symmetry. The resulting measure is manifestly real and positive. Crucially, this measure does not contain an exponential of the KKS symplectic form—a hallmark of conventional string quantization and Conformal Field Theory (CFT). We emphasize that our approach does \emph{not} involve quantizing a fundamental string in the Nambu-Goto sense, but rather constitutes a faithful implementation of the solution to the MM loop equations in momentum loop space.

\subsection{Block Matrix Structure and Diagonalization via Staggered Spinors}

The effective action involves the path-ordered product of the matrix-valued velocity operator along the loop. The fundamental link variable is the matrix $\mathcal{M}(x) = i \not{v}(x) = i \gamma_\alpha v_\alpha(x)$. In the chiral (Weyl) representation, the Dirac matrices have the off-diagonal structure:
\EQ{
\gamma_\alpha = \begin{pmatrix} 0 & \sigma_\alpha \\ \bar{\sigma}_\alpha & 0 \end{pmatrix}
}
where $\sigma_\alpha = (1, \vec{\sigma})$ and $\bar{\sigma}_\alpha = (1, -\vec{\sigma})$ in Euclidean signature (or $\sigma_\alpha$ are the standard Pauli matrices). The velocity vector is given explicitly by the projection of the loop momentum onto the spinor dyadics:
\EQ{
v_\alpha = i z (\xi \sigma_\alpha \eta) - i \bar{z} (\eta^\dagger \sigma_\alpha \xi)
}
Substituting this into the link matrix $\mathcal{M}$ and using the Fierz completeness relation $(\psi^\dagger \sigma_\alpha \chi) \sigma^\alpha = 2 \chi \psi^\dagger$, the operator takes the specific block form:
\EQ{
\mathcal{M} = i \begin{pmatrix} 
0 & \mathcal{P} \\ 
\mathcal{P} & 0 
\end{pmatrix}
}
where $\mathcal{P}$ is the Hermitian operator on the 2-component spinor space, explicitly dependent on the loop coordinate $z = e^{i\theta}$:
\EQ{
\label{Projectorlambdamu}
\mathcal{P}(\theta) = 2\I \left[ z \eta \xi^\dagger - \bar{z} \xi \eta^\dagger \right]_{z = e^{\I\theta}}
}
This off-diagonal structure implies a ``chirality hopping'' mechanism: the operator maps a left-handed spinor at point $\theta$ to a right-handed spinor at $\theta+d\theta$, and vice versa. To compute the trace of the ordered product efficiently, we employ a staggered spinor transformation (analogous to the Kawamoto-Smit transformation in lattice gauge theory).

We discretize the loop into $2N$ points and define a transformed spinor basis $\Psi'_n$ related to the original basis $\Psi_n$ by a site-dependent rotation $\Omega_n$:
\EQ{
\Psi_n = \Omega_n \Psi'_n, \quad \Omega_n = \begin{cases} \1_4 & \text{for even } n \\ \gamma_0 & \text{for odd } n \end{cases}
}
where $\gamma_0 = \begin{pmatrix} 0 & \1_2 \\ \1_2 & 0 \end{pmatrix}$ is the chirality-flipping matrix. The effective link matrix between sites $n$ and $n+1$ transforms as $\tilde{\mathcal{M}}_{n} = \Omega_{n+1}^\dagger \mathcal{M}_n \Omega_n$.

Due to the bipartite nature of the 1D loop, there are two alternating cases:
\begin{enumerate}
    \item \textbf{Even to Odd ($n \to n+1$):}
    \EQ{
    \tilde{\mathcal{M}} = \gamma_0^\dagger (i \not{v}) \1 = \begin{pmatrix} 0 & \1 \\ \1 & 0 \end{pmatrix} i \begin{pmatrix} 0 & \mathcal{P} \\ \mathcal{P} & 0 \end{pmatrix} = i \begin{pmatrix} \mathcal{P} & 0 \\ 0 & \mathcal{P} \end{pmatrix}
    }
    \item \textbf{Odd to Even ($n \to n+1$):}
    \EQ{
    \tilde{\mathcal{M}} = \1^\dagger (i \not{v}) \gamma_0 = i \begin{pmatrix} 0 & \mathcal{P} \\ \mathcal{P} & 0 \end{pmatrix} \begin{pmatrix} 0 & \1 \\ \1 & 0 \end{pmatrix} = i \begin{pmatrix} \mathcal{P} & 0 \\ 0 & \mathcal{P} \end{pmatrix}
    }
\end{enumerate}
The transformation completely diagonalizes the operator in the Dirac space, decoupling the 4-component spinor into two identical 2-component sectors.

The trace of the path-ordered product along the closed loop splits into two equal $2\times 2$ traces:
\EQ{
\label{TraceProjectors}
T =\tr_4 \prod_{\theta} (\I \not{v}(\theta)) = 2 \tr_2 \prod_{\theta} (i \mathcal{P}(\theta))
}
This reduction from $4\times 4$ to $2\times 2$ matrices significantly reduces the computational complexity for numerical evaluation. The trace of the remaining product of our projection operators is real (by the charge conjugation symmetry of Dirac traces), but it is not positive definite, and in general, it oscillates. It is this oscillatory nature of the path integral that mandates its  analytical evaluation via Picard-Lefschetz theory and Catastrophe Theory.

\section{Twistor Holography and the Rigid Twistor String}
\label{Twistorholography}
\subsection{Twistor Holography (Gauge Holography)}
\label{sec:twistorholography}
In our framework, \textbf{twistor holography} refers to the geometric duality in which the solution to the Makeenko–Migdal loop equation is realized as a rigid, Hodge-dual minimal surface embedded in flat $\mathbb{R}^3 \times \mathbb{R}^4$ (or, more generally, a flat 12-dimensional space). This surface is uniquely determined by the boundary Wilson loop and does not fluctuate quantum mechanically. The holographic principle here projects the internal geometry of the 12D target space onto the observable boundary, with the physical Wilson loop acting as the gauge-invariant observable. Unlike AdS/CFT gravity holography, this construction is entirely geometric and does not involve a gravitational bulk or fluctuating worldsheet metrics.

\subsection{Analytic Twistor String vs. Topological Twistor String}
The use of spinor variables and null twistors in this framework is closely related to the Penrose transform, which encodes solutions to spacetime field equations in terms of cohomological data on twistor space. By imposing the specific gauge condition on boundary values of twistor map: $\bar{\lambda}\lambda = \bar{\mu}\mu$, we reduce the theory to a Liouville field interacting with a one-dimensional sigma model on $S^3 \times S^3 / S^1$ on the boundary. The bulk values of the Liouville field depend on the twistor map. This parametrization makes the structure transparent: the minimal surface in the bulk is entirely encoded by the holomorphic data (the twistors), while the physical observables are determined by the boundary curve.

A crucial distinction must be made regarding the nature of the resulting bulk action. In the context of \textbf{Witten's celebrated Twistor String theory} (and related formulations such as the Berkovits model), the extension of boundary fields into the bulk typically generates a Wess-Zumino-Witten (WZW) term. The WZW functional is topological: it measures a winding number and is invariant under smooth deformations of the extension inside the disk. 

In contrast, our bulk functional is \textit{not} topological. While it is determined by the boundary data, its value depends explicitly on the unique analytic continuation of the holomorphic twistors into the disk. In this sense, our framework is closer in spirit to Penrose's \textbf{Non-Linear Graviton} construction than to topological string theory. In the Non-Linear Graviton, the curved geometry of the bulk spacetime is encoded by the global complex structure of twistor space, which arises as the unique holomorphic extension compatible with boundary (patching) data. Similarly, our action represents the geometric energy (akin to a Kähler potential) of the specific holomorphic disk filling the boundary loop. This non-topological dependence is essential: it encodes the geometric rigidity of the minimal surface, leading to the emergence of the area law and confinement in the rigid twistor string.

Because the path integral over the twistor target space is highly oscillatory, one might assume that extracting the physical spectrum requires massive stochastic lattice simulations. However, by analyzing the holomorphic structure of the effective action, we find that the discrete mass spectrum can be extracted analytically. Our solution to Planar QCD is not a theory of quantum fluctuations in any space-time dimensions; it is a problem of classical  geometry.

\section{Analytic continuation of Twistor String Path integral}\label{sec:ComplexIntegral}

The Euclidean path integral of Twistor String theory, as formulated so far, is similar to a partition function of a statistical system with positive definite energy , given by the Liouville bulk integral. There are, however, some pre-exponential factors that violate this statistical interpretation. First, there is a "vertex operator" factor $$ \exp{-2\I\sum_k q^\alpha_k \Im \oint\dd \theta  \hat Q_k(\theta)  z\lambda \sigma_\alpha \mu}.$$ Second, there is a trace factor \eqref{TraceProjectors}. Both of these factors are not positive, and they oscillate as functionals of our dynamical variables: i.e., the complex curve $\lambda(z),\mu(z)$ in the Twistor space $\mathcal CP^3$. These oscillations become even more obvious when the path integral is analytically continued in  Minkowski space, where the effective action is imaginary, and the exponential decay of the statistical weight becomes a quantum oscillation of the Feynman-Dirac weight $\exp{\I S/\bar h}$.

In ordinary quantum mechanics, this phenomenon is well known: the path integral, when formulated in terms of physical, real trajectories in phase space, is exponentially small as a result of the interference of the classical trajectories in the Feynman path integral. However, when reformulated as a complex path integral by analytic continuation, it is dominated by complex saddle points, where there is no interference along the steepest descent path. The spectrum of certain quantum mechanical systems can be exactly computed by finding the dominant complex trajectory winding around the complex branch point of the analytically continued effective Action. This is a mechanism for the conversion of the Bohr-Sommerfeld quantization condition from the WKB approximation to an exact equation for the spectrum \cite{Dashen1974, Gutzwiller1971, Duistermaat1982}.

We are looking for a similar phenomenon in our Twistor String theory, and the first step towards such a solution is the analytic continuation of the path integral into the complex plane in parameter space.

\subsection{Complexification of the parameter space}
To ensure the strict analyticity of the path integral, we must remove complex conjugation from the effective action and its parameter space. We achieve this by complexifying the target manifold and analytically continuing the action integral itself. This ensures that the drift forces are partial derivatives of a globally holomorphic function. Then the path integral can be regarded as a complex integral in multi-dimensional space of holomorphic functions  in Twistor space.

We replace the original real and Hermitian variables with a set of independent holomorphic variables. The angular coordinates $\alpha_k$ are mapped to the complex plane via $z_1 = e^{i\alpha_1}, \; z_2 = e^{i\alpha_2}$. The spinor fields and their conjugates are treated as independent complex vectors. The full set of dynamical variables for the discretized loop is:
\EQ{
    \mathcal{V} = \{\{z_1, z_2\},\{ \rho_k, \xi_k, \eta_k, \tilde{\xi}_k, \tilde{\eta}_k, \gamma_{1,k}, \gamma_{2,k} \}_{k=1}^N \}.
}
Here,  $\tilde{\xi}$ and $\tilde{\eta}$ replace $\bar{\xi}$ and $\bar{\eta}$ as independent variables. The geometric constraints are enforced by the complex Lagrange multipliers $\gamma_{1,k}$ and $\gamma_{2,k}$, leading to the holomorphic constraint terms in the effective action:
\EQ{
    S_{\text{constr}} = i \sum_k \left[ \gamma_{1,k}(\tilde{\xi}_k \xi_k - 1) + \gamma_{2,k}(\tilde{\eta}_k \eta_k - 1) \right].
}

\subsection{Spectral Interpolation of Fields}

To evaluate the action integrals analytically, we construct global holomorphic polynomials that interpolate the values at the vertices $z_k$.

\subsubsection{Spinor Fields}
Let $\Lambda^{(k)} = e^{\rho_k/2}(\xi_k, \eta_k)$ denote the spinor data  $\Lambda(z) = \{\lambda(z), \mu(z)\}$ at vertex $k$. The Lagrange interpolation polynomial on the spectral unit circle is:
\EQ{ \label{Interpol}
    &\Lambda(z) = \frac{1}{N} \sum_{k=0}^{N-1} \Lambda^{(k)} K\lrb{z e^{-\I \theta_k}};\\
    & K(z) =\frac{1 - z ^N}{1 - z };
}
Using the Fast Fourier Transform (FFT), we extract the spectral coefficients $L_n, M_n$ (and $\tilde{L}_n, \tilde{M}_n$ from the independent $\tilde{\xi}, \tilde{\eta}$ variables) such that:
\EQ{
   & \lambda(z) = \sum_{n=0}^{N-1} L_n z^n, \quad \tilde{\lambda}(z) = \sum_{n=0}^{N-1} \tilde{L}_n z^n.\\
   &  \{L_n,M_n\} = \frac{1}{N}\sum_{k=0}^{N-1} \Lambda^{(k)}e^{-\I \theta_k}
}

\begin{remark}{\textbf{Reduction to Pole Moduli and the Algebraic Jacobian}}

While the path integral is formulated using the discrete boundary values at the roots of unity $\theta_k = 2\pi k/N$, the physical saddle points and their monodromies are governed by the meromorphic structure of the twistors—specifically, the positions $z_i$ and residues $R_i$ of their poles inside the unit disk.The transformation of the integration measure from the boundary values to these pole moduli is exact and purely algebraic. For a twistor field parameterized by a finite number of poles, its evaluation on the boundary circle $z = e^{i\theta}$ yields a geometric series:
\EQ{\lambda(e^{i\theta}) = \sum_i \frac{R_i}{e^{\I\theta} - z_i} = - \sum_i \frac{R_i}{z_i} \sum_{n=0}^\infty z_i^{-n} e^{\I n\theta}}
Comparing this to our spectral expansion, the Fourier coefficients $L_n$ are determined as algebraic superpositions of the pole parameters:
\EQ{L_n = - \sum_i \frac{R_i}{z_i^{n+1}}}
Consequently, transforming the functional measure from the $N$ discrete boundary values to the physical moduli $(R_i, z_i)$ involves two finite, calculable matrix Jacobians:1. The constant Jacobian of the Discrete Fourier Transform matrix (the Fourier matrix $e^{2 \pi \I k n/N}$), which maps the boundary values to the Fourier coefficients $L_n$.2. The purely algebraic Jacobian mapping the Fourier coefficients to the pole parameters $(R_i, z_i)$.We do not explicitly compute these algebraic Jacobians here, because they contribute  to the pre-exponential fluctuation determinant of the path integral. For the extraction of the geometric mass spectrum, the relevant physics is governed entirely by the stationary phase (the saddle-point equations) and the topological monodromies of the effective action. As we will demonstrate, the S-matrix poles emerge from the constructive interference of the geometric series over the topological winding number $k$. In this resonant limit, the multi-valued action scales as $k \Delta S_{eff}$. Consequently, the transverse quantum fluctuations around the saddle point are dynamically damped by a factor of $1/k$. This  exponential part dominates over the pre-exponential factor, rendering the explicit form of the Jacobian irrelevant to the exact localization of the S-matrix poles. However, the explicit existence of this algebraic mapping mathematically guaranties that our twistor measure is finite and constructible, firmly rooting the evaluation of the path integral in standard finite-dimensional complex analysis rather than abstract string-theoretic functional geometry.
\end{remark}

\subsubsection{Liouville Field}
The Liouville field $\rho(z,\bar z)$ is extended inside the unit circle via an  formula
\EQ{
\rho(z,\bar z) = \frac{\log \tilde \lambda(\bar z) \lambda(z) + \log \tilde \mu(\bar z) \mu(z) }{2}
}
where ${\lambda(z),\mu(z)}$ is extended by interpolation \eqref{Interpol}.
The derivatives of the above kernel $K'(z)$ involved in the Liouville Lagrangian with this $\rho$ were computed in the previous sections.

These expressions provide the values of the fields and derivatives at any complex $z$, not just at the vertices. These expressions also depend upon complex parameters in $\mathcal V$ through the Fourier coefficients $\Lambda(k), \tilde \Lambda(k)$. This dependence is exponential for $\rho_k$ and polynomial for the rest of variables in $\mathcal V$, which allows for required analytic continuation of the CLE.

\subsection{The Boundary Terms in the action}

The boundary part of the action involves integrals of the form $\Im \int (\lambda \times \mu) e^{i\theta} d\theta$. Substituting the spectral expansions, the integrand becomes a sum of monomials $z^n z^m \cdot z = z^{n+m+1}$. We define the elementary holomorphic integral function:
\EQ{
    G_{p}(z_1, z_2) \equiv \int_{z_1}^{z_2} z^{p} \frac{dz}{iz} = \frac{z_2^{p} - z_1^{p}}{i p}.
}
With the restored measure factor, the power is $p = n+m+1$. Since $n, m \ge 0$, the denominator $n+m+1 \ge 1$ is  positive, ensuring there are no singularities in the complex plane.

The contribution to the action of external momentum $q$ injected into the quark loop from the arc between $z_1$ and $z_{2}$ is:
\EQ{
\label{Sq}
    &S_{q} = -\I q_\alpha\br
    & \sum_{n=0}^{N-1} \sum_{m=0}^{N-1} \left[ (L_n \sigma_\alpha M_m) - (\tilde{L}_n \sigma^\dag_\alpha \tilde{M}_m) \right] G_{n+m+1}(z_1, z_{2}).
}
Then, there are terms related to the constraints on the $\xi, \eta $ variables. These terms are just discrete sums
\EQ{
S_{\gamma}= \I\sum_{k=0}^{N-1} \gamma_{1,k} \lrb{\tilde \xi_k \xi_k -1} + \gamma_{2,k} \lrb{\tilde \eta_k \eta_k -1}
}

In addition, there are terms related to the trace \eqref{TraceProjectors}. Our complexification reduces this trace to
\EQ{
& S_{T} = -\log T(\xi,\eta,\tilde \xi, \tilde \eta);\\
&T(\xi,\eta,\tilde \xi, \tilde \eta) = \tr \prod_{k=0}^{N-1}\I \mathcal P(\theta_k);\\
& \mathcal{P}(\theta) = 2\I \left[ z \eta \tilde \xi^T - \bar{z} \xi \eta^T \right]_{z = e^{\I\theta}}
}

Finally, our reweighting adds one more complex term to the action, 
\EQ{
&S_{O} = \log \mathcal{O};\\
&\mathcal{O} = \pbyp{S_{q}}{q_4}= \br
& \sum_{n=0}^{N-1} \sum_{m=0}^{N-1} \left[ (L_n  M_m) - (\tilde{L}_n \tilde{M}_m) \right] G_{n+m+1}(z_1, z_{2}).
}

All the above expressions are  purely holomorphic functions of all the variables in $\mathcal{V}$.

\subsection{Stochastic Quantization and Complex Langevin Dynamics}

The  mathematical evaluation of this highly oscillatory complexified path integral is achieved via Stochastic Quantization. The quantum statistical evolution of the twistor string is governed by the Complex Langevin Equation (CLE) with holomorphic drift forces derived from the total spectral action $S_{total} = S_q + S_\gamma + S_T + S_\rho + S_{\mathcal{O}}$. We emphasize that the CLE is utilized here not as a numerical simulation technique, but as an  continuous mathematical reformulation of the quantum path integral.

\begin{enumerate}
    \item \textbf{Coordinate Evolution ($z_k$):} The drift force on the boundary points $z_k$ arises from the boundary terms of the $G$ functions in $S_q$ and the integration limits in $S_{bulk}$:
    \EQ{
        \frac{d z_k}{d\tau} = - z_k^2 \frac{\partial S_{total}}{\partial z_k} + \I z_k \nu_{z_k}
    }
    (The factor $z_k^2$ arises from the metric of the map $\alpha = -\I \log z$, and $\nu$ represents complex Gaussian white noise).

    \item \textbf{Spinor Evolution ($\xi_k, \tilde{\xi}_k$, etc.):} The drift forces are computed via the chain rule through the spectral coefficients $L_n$. For example:
    \EQ{
        \frac{d \xi_k}{d\tau} = - \frac{\partial S_{total}}{\partial \xi_k} - \I \gamma_{1,k} \tilde{\xi}_k + \nu_{\xi_k}
    }
    Crucially, the noise terms $\nu_{\xi}$ and $\nu_{\tilde{\xi}}$ are independent complex Gaussian variables representing the  quantum fluctuations of the twistor fields.

    \item \textbf{Liouville Evolution ($\rho_k$):} The force on $\rho_k$ is nonlocal, mediated by the interpolation kernel $K(z)$ appearing in both the bulk quadrature points and the spectral coefficients of the boundary term:
    \EQ{
        \frac{d \rho_k}{d\tau} = - \frac{\partial S_{total}}{\partial \rho_k} + \nu_{\rho_k}
    }

    \item \textbf{Constraint Evolution ($\gamma_k$):} The multipliers evolve to enforce the  algebraic constraints:
    \EQ{
        \frac{d \gamma_{1,k}}{d\tau} = -\I (\tilde{\xi}_k \xi_k - 1) + \nu_{\gamma_{1,k}}
    }
\end{enumerate}

In this mathematical framework, the deterministic drift terms ($-\nabla S$) represent the  holomorphic gradient flow in complexified phase space. In Picard-Lefschetz theory, the union of \textit{all} these deterministic flow lines originating from a complex saddle point constructs a middle-dimensional real manifold known as a Lefschetz thimble. If the noise $\nu$ were zero, the system would trivially collapse onto a 1D classical trajectory (the WKB approximation). However, the multidimensional stochastic noise $\nu$ dynamically forces the probability distribution to completely explore the transverse width of these $N$-dimensional thimbles,  incorporating all  quantum fluctuations (the ``vicinity'' of the saddle points). 

The phenomenon of exact S-matrix quantization we are looking for corresponds to a topological bifurcation of this stochastic system. At certain critical resonance energies, the complex Hessian develops a null space and the classical restoring drift  vanishes along a continuous topological valley. Unimpeded by any restoring gradient force, the stochastic noise $\nu$ drives the complex trajectory to infinity along this flat zero mode. This infinite flat-valley integration geometrically generates the exact macroscopic S-matrix pole of the path integral, governed by Catastrophe Theory. Before studying this phenomenon in general form, we consider the simple solvable limit: the WKB quantization corresponding to the large rotational momentum of our rigid string.

\section{Catastrophe Theory and the Geometric Mass Spectrum}\label{sec:catastropheSpectrum}
\subsection{Twisted Boundary Conditions, Symmetrization, and Spin Projections}

Before solving the  spectral equations, we justify the choice of the
twisted geometry from the perspective of the semiclassical quark path integral.
In Planar QCD, the effective Minkowski action involves a sum
over quark trajectories weighted by the area of the confining minimal surface.

To extract the spectrum of states with macroscopic angular momentum $J$, we
impose twisted boundary conditions. We require the whole string to undergo 
a physical spatial rotation by an angle $\alpha$ over
the period of the motion, with the angular momentum $J$ acting as the
thermodynamic conjugate to this twist. The minimization of the boundary length
for a rotating particle yields a helical trajectory in Minkowski spacetime.
The minimal surface spanning two twisted helical boundaries is the helicoid,
as illustrated in Figure \ref{fig:helicoid}.

\begin{figure}[htbp]
\centering
\includegraphics[width=0.8\columnwidth]{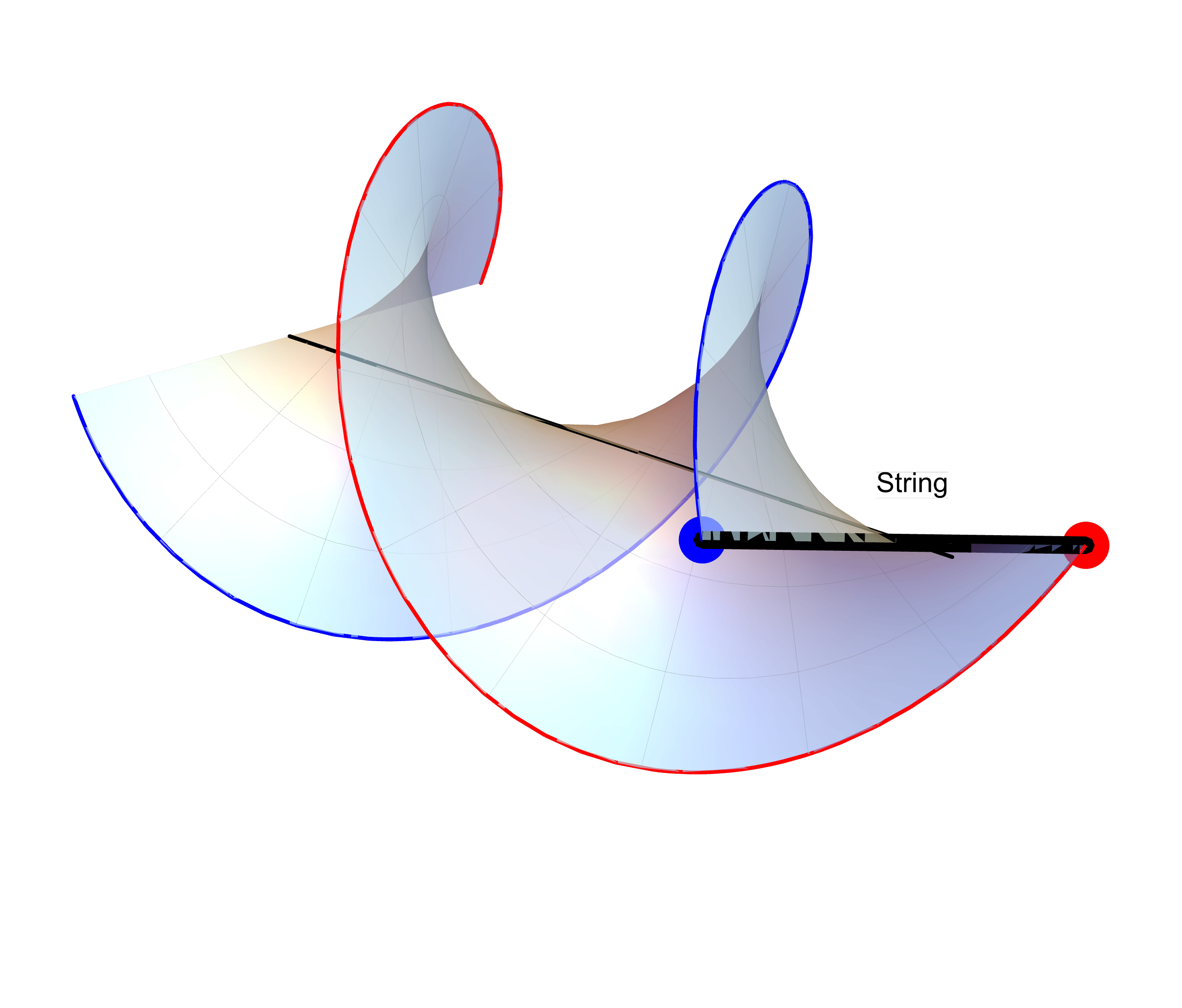} % Adjust filename if needed
\caption{The helicoid spanned by a rotating $\overline{q}q$ pair connected by a
rigid stick (string). This minimal surface bounded by a double helix, discovered
by Meusnier in 1785, provides the macroscopic physical motivation for the
twisted boundary conditions.}
\label{fig:helicoid}
\end{figure}

Thus, the helicoid is not merely an ansatz; it is the simplest geometric solution
mandated by the kinematics of angular momentum in the semiclassical limit. To
extract the physical spin states of the meson spectrum, we must couple the
internal worldsheet geometry to this macroscopic angular momentum $J$ of the
target space. 

In our twistor string formulation, we project this twisted kinematics directly
into the boundary twistor data. This is achieved by introducing the $SU(2)_L$
and $SU(2)_R$ projections, integrating over the target space rotation angle
$\alpha$ with the canonical spin weight:
\EQ{
\label{SpinProjector}
\hat P_J A = \sum_k \oint d\alpha \exp{-\I k\alpha J} A_k(\alpha)
}
Here $A_k(\alpha)$ is the  amplitude after $k$ winding around the quark loop, 
with the boundary condition that the rotation angle $\alpha$ is introduced at every full circle around the quark loop.
When the macroscopic boundary of the string rotates by an angle $\alpha$, the
boundary twistors must geometrically twist. To enforce this, we introduce a
fractional winding parameter $a$ corresponding to twisted boundary conditions
for $\lambda(z), \mu(z)$.

To analytically generate the perfect resonance leading to the exact WKB saddle
point, we must consider the limit of an infinitely large number of geometric
windings $k\rightarrow\infty$. As the integration variable traverses the flat
valley $k$ times, the accumulated action scales  linearly with $k$.
Crucially, this means the transverse Hessian scales as $\mathcal{H}_k \propto k$.
In the asymptotic resonance limit $k\rightarrow\infty$, the transverse restoring
force becomes  infinite. The stochastic variance of the quantum
fluctuations scales as $1/k\rightarrow0$. The path integral dynamically
localizes to the exact WKB saddle point, completely shielding the topological
monodromy from quantum noise.
These heuristic arguments will be made explicit and quantitative in this section, which will end with \emph{exact} WKB quantization  in a topological sector with only one pole of the twistor field. 

\subsection{Geometric Quantization via Degenerate Lefschetz Thimbles}\label{16.1.}
In the semiclassical (WKB) limit, the path integral over the twistor phase space is dominated by the saddle points $\mathcal{V}_*(E)$ of the complexified effective action. Physical bound states (mesons) correspond to macroscopic divergences in this amplitude. Mathematically, these occur at critical energies $E = E_n$ where the classical stability matrix (the Hessian $\mathcal{H}$) develops a null space. 

Because the integration variables are coupled complex holomorphic twistors, the vanishing of a single complex eigenvalue intrinsically mandates the simultaneous flattening of two real directions on the Lefschetz thimble (a strict corank-2 degeneracy). This complex holomorphic geometry natively upgrades the standard square-root branch cut into a  simple pole, precisely generating the discrete S-matrix mass poles of the meson spectrum without ad hoc approximations.

\subsection{Catastrophes, flat valleys, and topological barriers}\label{16.2.}
To dynamically generate a simple mass pole $\int d^2 w \exp{- \bar w Q(E) w}\propto 1/Q(E) \propto 1/(E_n-E)$, the steepest descent trajectory must correspond to a non-compact flat complex zero-mode $w$ that extends infinitely far in configuration space. This  infinite flat valley is provided dynamically by the multi-valued monodromy of the holomorphic twistor fields as the boundary parameter winds around the enclosed twistor poles. 

The integrity of this topological winding number is  protected by an infinite topological action barrier. If a pole $z_0$ attempts to cross the unit circle from the inside ($|z_0| \to 1$), its conjugate mirror located at $1/\bar z_0$ must simultaneously approach the boundary from the outside. This creates a pinch singularity,  trapping the integration contour and locking the path integral into a conserved topological sector.
We demonstrate that in \ref{app:pinch} on the example of a single pole in twistor field.

\section{Minkowski Minimal Surface and Meson Spectrum}
\label{sec:minkowski_minimal_surface}
To implement this flat valley scenario and compute the spectrum of massive mesons, we construct the minimal surface directly in four-dimensional Minkowski space $\mathbb{R}^{1,3}$. This is more subtle than just a straightforward analytic continuation of the twistor parametrization of a Euclidean minimal surface.
\subsection{Additive Factorization in Minkowski Space}

To construct a surface in Minkowski space, we utilize the generalized Weierstrass representation of Konopelchenko and Landolfi (K\&L Theorem 5.1) \cite{Konopelchenko1999}. The target-space coordinate differential, represented as the Hermitian matrix $d\mathbf{X} = dX^0 I + dX^1 \sigma_1 + dX^2 \sigma_2 + dX^3 \sigma_3$, is parameterized by an additive sum of two independent, 2-component complex spinors:

\EQ{
    d\mathbf{X} = \phi(\xi) \phi^\dagger(\xi) d\xi + \psi(\eta) \psi^\dagger(\eta) d\eta.
}

Here, the worldsheet light-cone coordinates are defined as $\xi = \tau + \theta$ and $\eta = \tau - \theta$, where $\tau$ is the periodic time on the fundamental torus and $\theta$ is the spatial radial parameter. Because $d\mathbf{X}$ is constructed as the sum of rank-1 Hermitian matrices, the target-space coordinates $X^\mu$ are real.

We use the $\psi, \phi$ notations instead of Euclidean $\lambda, \mu$. The dependence of the left/right spinors $\psi/\phi$ on a single coordinate $\tau \pm \theta$ is the Minkowski analog of Euclidean holomorphicity. The classical minimal surface equations in Minkowski reduce to d'Alembert equations $\partial_+ \partial_- X = 0$ combined with Virasoro constraints $\det d\mathbf{X} = 0$, equivalent to the null vector condition $d X_\mu^2 = 0$ in Minkowski space. Both of these conditions are satisfied in the above twistor representation.

\subsection{Complex Eigenvectors and Target-Space Monodromy}

To incorporate the rotation of the meson, we require the coordinate matrix $d\mathbf{X}$ to be unitary transformed by a spatial rotation matrix when the time coordinate winds around its fundamental cycle on the torus ($\tau \to \tau + 2\pi$).

Because the left-moving ($\xi$) and right-moving ($\eta$) dependencies are decoupled, we achieve this target-space monodromy by defining the component spinors $\phi$ and $\psi$ as complex eigenvectors of the $SU(2)$ rotation group. To center the string at the origin, we introduce a constant $\pi/4$ phase shift and scale the spinors by the macroscopic geometric scale $R$:
\EQ{
    &\phi(\xi) = \sqrt{\frac{R}{2}} \begin{pmatrix} \exp{i (a \xi + \pi/4)} \\ \exp{-i (a \xi + \pi/4)} \end{pmatrix}, \br
    &\psi(\eta) = \sqrt{\frac{R}{2}} \begin{pmatrix} \exp{i (a \eta - \pi/4)} \\ \exp{-i (a \eta - \pi/4)} \end{pmatrix}.
}

Under arbitrary temporal shift $\tau \to \tau + \Delta \tau$, the arguments shift as $\xi \to \xi + \Delta \tau$ and $\eta \to \eta + \Delta \tau$. The column vectors are left-multiplied by the diagonal rotation matrix:
\EQ{
   & \phi(\tau + \Delta \tau) = M(2 a\Delta \tau) \phi(\tau), \br
   &\psi(\eta + \Delta \tau) = M(2 a\Delta \tau) \psi(\eta) \br
   &\text{where  } M(\alpha) = \begin{pmatrix} \exp{\I\alpha/2} & 0 \\ 0 & \exp{-\I\alpha/2} \end{pmatrix}.
}

Substituting this into the additive factorization, the target-space coordinate differential rotates as:
\EQ{
    d\mathbf{X}(\tau + \Delta \tau) = M( 2a \Delta \tau) \, d\mathbf{X}(\tau) \, M^\dagger(2a \Delta \tau).
}
Note that this applies to arbitrary temporal shifts $\Delta \tau$, not just the fundamental period $2\pi$. Later, we discuss the implications of this extended symmetry. 
This twisted boundary condition will allow us to project the quark loop amplitude to a fixed angular momentum $J$ by means of the Fourier integral \eqref{SpinProjector} with $\alpha = 4 \pi a$.

\subsection{Real Coordinates and 2D Rotation}

Substituting the complex eigenvectors into the coordinate differentials, we evaluate the diagonal components:
\EQ{
    dX^0 + dX^3 &= |\phi_1|^2 d\xi + |\psi_1|^2 d\eta =  R \, d\tau. \\
    dX^0 - dX^3 &= |\phi_2|^2 d\xi + |\psi_2|^2 d\eta =  R \, d\tau.
}

This yields $dX^0 = R \, d\tau$ and $dX^3 = 0$. The center of mass is stationary. Evaluating the off-diagonal components yields the transverse spatial coordinates:
\EQ{
    &dX^1 - i dX^2 = \phi_1 \bar{\phi}_2 d\xi + \psi_1 \bar{\psi}_2 d\eta \br
    &= \frac{R}{2} \exp{i (2 a \xi + \pi/2)} d\xi + \frac{R}{2} \exp{i (2 a \eta - \pi/2)} d\eta \br
    &= \frac{i R}{2} \exp{2 i a \xi} d\xi - \frac{i R}{2} \exp{2 i a \eta} d\eta.
}

Integrating this differential produces:
\EQ{
    &X^1 - i X^2 = \frac{R}{4 a} ( \exp{2 i a \xi} - \exp{2 i a \eta} ) \br
    &= \frac{R}{4 a} \exp{2 i a \tau} ( \exp{2 i a \theta} - \exp{-2 i a \theta} ) \br
    &= \frac{i R}{2 a} \exp{2 i a \tau} \sin(2 a \theta).
}

Separating the real and imaginary parts yields real spatial coordinates $X^1$ and $X^2$. The spatial parameter $\theta$ acts as a signed radial coordinate originating from the stationary center of mass at $\theta=0$. The symmetric interval $\theta \in [-\theta_b, \theta_b]$ parameterizes the open string from the antiquark at $-\theta_b$, straight through the center of mass, to the quark at $+\theta_b$, precisely once without duplication. The string is a straight line spanning the transverse plane, sweeping a 2D circular rotation with a twist of $\alpha = 4\pi a$.

\subsection{The Vanishing Metric and Spatial Boundaries}

The induced metric on the worldsheet is determined by the determinant of the target-space coordinate differentials. In this formulation, the conformal factor evaluates to the absolute square of the cross-term determinant:
\EQ{
    \Omega^2 = |\phi_1 \psi_2 - \phi_2 \psi_1|^2.
}

Evaluating this for the given complex eigenvectors:
\EQ{
\label{Eigenvectors}
   &\phi_1 \psi_2 - \phi_2 \psi_1 \br
   &= \frac{R}{2} ( \exp{i a (\xi-\eta) + i \pi/2} - \exp{-i a (\xi-\eta) - i \pi/2} ) \br
    &= \frac{R}{2} ( i \exp{2 i a \theta} + i \exp{-2 i a \theta} ) = i R \cos(2a\theta).
}

Taking the absolute square $\Omega^2 = R^2 \cos^2(2a\theta)$, the induced metric on the worldsheet is:
\EQ{
\label{metric}
    ds^2 = R^2 \cos^2(2a\theta) ( d\tau^2 - d\theta^2 ).
}

In Minkowski space, the Lorentzian signature $(d\tau^2 - d\theta^2)$ is maintained.

This metric vanishes at the roots $\theta = \pm \frac{\pi}{4a}$, which correspond to the string endpoints moving at the speed of light ($ds^2 = 0$). For massive endpoint quarks ($m_q > 0$), the physical string is truncated at boundaries $\pm \theta_b$  inside this null radius ($0 \le \theta_b < \frac{\pi}{4a}$), ensuring a positive proper time.

\subsection{Classical Action Components over the Torus}

We evaluate the classical WKB action components integrated over the fundamental torus temporal period $\tau \in [0, 2\pi]$ and the bounded spatial domain $\theta \in [-\theta_b, \theta_b]$. We define the $a$-independent boundary phase parameter $\beta = 2a\theta_b$.

\paragraph{1. The Target Time Boundary Term ($E \Delta X^0$):}

The canonical energy $E$ conjugates to the total center-of-mass target time advance $\Delta X^0$. Integrating this over the temporal period gives:

\EQ{
    E \Delta X^0 = E \int_0^{2\pi} R \, d\tau = 2\pi E R.
}

\paragraph{2. The Minimal Area Term ($S_{\text{Area}}$):}

The area of the minimal surface integrates the conformal factor over the dynamically bounded spatial domain:

\EQ{
    S_{\text{Area}} &= \sigma R^2 \int_0^{2\pi} d\tau \int_{-\theta_b}^{\theta_b} \cos^2(2a\theta) \, d\theta \br
    &= 2\pi \sigma R^2 \int_{-\theta_b}^{\theta_b} \frac{1 + \cos(4a\theta)}{2} \, d\theta \br
    &= \frac{\pi \sigma R^2}{a} \left( \beta + \frac{1}{2}\sin(2\beta) \right).
}

\paragraph{3. The Proper Mass Integral ($S_{\text{mass}}$):}

The constituent quarks of bare mass $m_q$ propagate along the two boundaries $\pm \theta_b$. Evaluating the proper time $ds = R \cos(\beta) d\tau$ accumulated over the torus period for both endpoints yields:

\EQ{
    S_{\text{mass}} = 2 m_q \int_0^{2\pi} R \cos(\beta) \, d\tau = 4\pi m_q R \cos(\beta).
}
\paragraph{4. The Liouville term:}
To complete the evaluation of the geometric phase, we compute the contribution from the Liouville anomaly on the minimal surface. In the conformal gauge, the Liouville action is given by the worldsheet integral
\EQ{
S_{\rm Liouv} = \frac{1}{12\pi} \int d\tau d\theta \left( (\partial_\theta \rho)^2 - (\partial_\tau \rho)^2 \right).
}
Using the conformal metric scale factor induced by our twistor parameterization in Minkowski space, $e^{2\rho} = R^2 \cos^2(2a\theta)$, the Liouville field is purely spatial, $\rho(\theta) = \ln R + \ln \cos(2a\theta)$. Thus, the proper time derivative $\partial_{\tau} \rho$ vanishes, and the spatial gradient squared evaluates to $(\partial_\theta \rho)^2 = 4a^2 \tan^2(2a\theta)$. 

Integrating this over one full period of the proper time ($\Delta\tau = 2\pi$) and symmetrically across the spatial boundaries of the string ($\theta \in [-\theta_b, \theta_b]$), and accounting for the overall minus sign of the measure, we obtain the anomaly defect:
\EQ{
\Delta S_{\rm Liouv} &= - \frac{1}{12\pi} (2\pi) \int_{-\theta_b}^{\theta_b} 4a^2 \tan^2(2a\theta) \, d\theta \br
&= - \frac{2a^2}{3} \int_{-\theta_b}^{\theta_b} \left( \sec^2(2a\theta) - 1 \right) d\theta.
}
Evaluating this integral and substituting the dynamical boundary velocity parameter $\beta = 2a\theta_b$ yields:
\EQ{
\Delta S_{\rm Liouv} = -\frac{2a}{3} (\tan \beta - \beta).
}

\subsection{Path-Ordered Dirac Trace and Spin Holonomy}

To relate the continuous twistor geometry to the discrete meson spin states, we compute the path-ordered product of the Dirac projectors along the spatial boundary of the quark loop. Following the staggered spinor diagonalization of Section 12.6, the fundamental link matrix evaluated in Minkowski space is $\mathcal{M}(\tau) = i \gamma_\mu v^\mu(\tau)$, where $v^\mu = \partial_\tau X^\mu$ is the target-space velocity.

Substituting the twistor coordinate differentials yields the velocity components at the boundary $\theta_b$:
\EQ{
    &v^0 = R, \quad v^3 = 0, \br
    &v^1 = -R \sin(\beta) \cos(2a\tau), \br
    &v^2 = R \sin(\beta) \sin(2a\tau).
}

Defining the transverse velocity fraction $S = \sin(\beta)$, the invariant squared velocity is $v_\mu v^\mu = R^2(1 - S^2) = R^2 \cos^2(\beta)$. In the chiral Weyl representation, the Dirac matrix projection takes the off-diagonal form $\mathcal{M}(\tau) = i \begin{pmatrix} 0 & \bar{V}(\tau) \\ V(\tau) & 0 \end{pmatrix}$, with the $2 \times 2$ blocks defined as:

\EQ{
    &V(\tau) = R \begin{pmatrix} 1 & -S \exp{2 i a \tau} \\ -S \exp{-2 i a \tau} & 1 \end{pmatrix}, \br
    &\bar{V}(\tau) = R \begin{pmatrix} 1 & S \exp{2 i a \tau} \\ S \exp{-2 i a \tau} & 1 \end{pmatrix}.
}

Applying the staggered transformation $\Omega_n$ alternating between $I_4$ and $\gamma_0$, the product of two adjacent links at step $j$ assumes the block-diagonal form:

\EQ{
    \tilde{\mathcal{M}}_{j+1} \tilde{\mathcal{M}}_j = - \begin{pmatrix} \bar{V}(\tau_{j+1})V(\tau_j) & 0 \\ 0 & V(\tau_{j+1})\bar{V}(\tau_j) \end{pmatrix}.
}

Factoring out the scalar invariant $v^2$, the continuous limit of the path-ordered exponential separates into two independent $2 \times 2$ evolution operators, $W(\tau)$ and $\bar{W}(\tau)$. For the upper block, expanding $\bar{V}(\tau+d\tau)V(\tau)$ yields the evolution equation $W'(\tau) = A(\tau) W(\tau)$, governed by the connection matrix:

\EQ{
    A(\tau) = \frac{2 i a S}{1-S^2} \begin{pmatrix} -S & \exp{2 i a \tau} \\ -\exp{-2 i a \tau} & S \end{pmatrix}.
}

This system is solved by transforming to a co-rotating frame. Defining $W(\tau) = U(\tau) \tilde{W}(\tau)$ with $U(\tau) = \text{diag}(\exp{i a \tau}, \exp{-i a \tau})$, the differential equation reduces to $\tilde{W}'(\tau) = i a M \tilde{W}(\tau)$, with the constant coefficient matrix:

\EQ{
    M = \begin{pmatrix} -\frac{1+S^2}{1-S^2} & \frac{2S}{1-S^2} \\ -\frac{2S}{1-S^2} & \frac{1+S^2}{1-S^2} \end{pmatrix}.
}

The eigenvalues of $M$ are determined by its trace and determinant:

\EQ{
    \text{tr} M = 0, \quad \det M = - \left( \frac{1+S^2}{1-S^2} \right)^2 + \left( \frac{2S}{1-S^2} \right)^2 = -1.
}

The eigenvalues of the holonomy generator $i a M$ are $\pm i a$, independent of the boundary parameter $S$. The discretized quark loop consists of $2L$ Dirac matrices, each corresponding to an angular step $\Delta \tau = 2\pi / (2L)$. The staggered transformation groups these into $L$ adjacent pairs. The continuous evolution matrix $M$ emerges after the pairing of two adjacent matrices, so it is multiplied $L$ times around the closed loop. The total accumulated parameter $\tau$ evaluates to $L \Delta \tau = \pi$. Consequently, the accumulated phase is twice smaller. The  solution over $k$ fundamental torus periods evaluates with the half-angle $\pi a k$:

\EQ{
    \tilde{W}(\pi k) = \cos(\pi a k) I + i \sin(\pi a k) M.
}

Transforming back to the original frame with $U(\pi k)$ and computing the trace yields:

\EQ{
    &\text{tr} W(\pi k) \br
    &= \exp{i \pi a k} \left( \cos(\pi a k) - i \sin(\pi a k) \frac{1+S^2}{1-S^2} \right) \br
    &\quad + \exp{-i \pi a k} \left( \cos(\pi a k) + i \sin(\pi a k) \frac{1+S^2}{1-S^2} \right) \br
    &= 2 \cos^2(\pi a k) + 2 \sin^2(\pi a k) \frac{1+S^2}{1-S^2}.
}

Substituting $S = \sin(\beta)$, the scalar factor reduces to $(1+S^2)/(1-S^2) = 1 + 2\tan^2(\beta)$. The lower block $\bar{W}(\pi k)$ yields an identical trace. The total trace of the $4 \times 4$ spin holonomy evaluates to:

\EQ{
    T_4(k) &= \text{tr}_4 \left( \prod \limits_j i \gamma_\mu v^\mu(\tau_j) \right) \br
    &= 4 \left( 1 + 2 \sin^2(\pi a k) \tan^2(\beta) \right).
}

\subsection{Meson Vertex Operators and Topological Sectors}

Taking the logarithm of the oscillating trace $T_4(k)$ to include it in the effective action would introduce complex saddle points into the path integral. Instead, treating the bare trace $T_4(k)$ as a pre-exponential factor, we expand it analytically into geometric phase factors using the identity $2\sin^2(\pi a k) = 1 - \frac{1}{2}( e^{2\pi i a k} + e^{-2\pi i a k} )$:

\EQ{
    &\frac{1}{4} T_4(k) = 1 + \tan^2(\beta) - \cos(2\pi a k) \tan^2(\beta) \br
    &= \sec^2(\beta)- \frac{1}{2} \tan^2(\beta)\left(e^{2\pi i a k} + e^{-2\pi i a k} \right).
}

This expansion resolves the monodromy analytically. The constant coefficients $\sec^2(\beta)$ and $-\frac{1}{2}\tan^2(\beta)$ are real; they dictate the transition amplitudes (residues) of the states but do not shift the complex geometric phase.

To isolate the physical meson states, we insert the corresponding interpolating vertex operator $\Gamma$ ($\gamma_5$ for the pseudoscalar pion, $\gamma_\mu$ for the vector $\rho$) into the trace:

\EQ{
    T_\Gamma(k) = \text{tr}_4 \left( \Gamma \prod_j i \gamma_\nu v^\nu(\tau_j) \right).
}

The pseudoscalar insertion $\gamma_5$ isolates the constant coefficient $\sec^2(\beta)$, defining the transition amplitude for the unshifted branch ($q=0$).

The vector insertion $\gamma_\mu$ isolates the oscillating phase factors $e^{\pm 2\pi i a k}$. When combined with the macroscopic angular momentum phase of the action ($e^{-4\pi i a J k}$), the total phase shift in the winding sum becomes:

\EQ{
    -4\pi a J k \pm 2\pi a k = -4\pi a \left( J \mp \frac{1}{2} \right) k.
}

This shifts the target angular momentum in the quantization condition to $u = J + q/2$, where $q \in \{0, -1\}$. This mathematically generates the  half-integer spin-orbit splittings for the respective meson Regge trajectories.

\subsection{The saddle point equation}
 Assembling the classical geometric action, we obtain the total monodromy of Minkowski action:
\EQ{
\label{DeltaSgeom}
&\Delta S_{\text{geom}}(R, a, \beta) = 2\pi E R - \frac{\pi \sigma R^2}{a} \left( \beta + \frac{\sin(2\beta)}{2} \right) \br
&\quad - 4\pi m_q R \cos(\beta) - 4\pi a J - \frac{2a}{3}(\tan\beta - \beta).
}

Extremizing this action with respect to $\beta$ isolates the classical boundary condition balancing the string tension against the centrifugal force of the massive endpoints:
\EQ{
&\frac{\partial \Delta S_{\text{geom}}}{\partial \beta} \br
&= - \frac{2\pi \sigma R^2}{a} \cos^2(\beta) + 4\pi m_q R \sin(\beta) - \frac{2a}{3}\tan^2(\beta) = 0 \br
&\implies \sigma R^2 \cos^2(\beta) - 2 m_q a R \sin(\beta) + \frac{a^2}{3\pi}\tan^2(\beta) = 0.
}
The bound states of the spinning meson emerge as the meromorphic poles where these geometric series diverge: $\Delta S_{geom} \pm 2\pi a q = 2\pi n$.
\subsection{Stationary metric and exact spin projection.}

The unique property of the exponential solution \eqref{Eigenvectors} for twistors is the stationary induced metric \eqref{metric}, which is strictly independent of the proper time $\tau$. We initially requested only the vanishing variation of the effective action over the temporal period $\Delta \tau = 2\pi$, but the exact solution we have found is, in fact, strictly stationary. Not only does its monodromy vanish, but the metric itself is entirely independent of the proper time $\tau$.

Moreover, the branch point degree $a$ (related to the macroscopic target-space rotation angle by $\alpha = 4\pi a$) was left completely arbitrary by the saddle point equations for the remaining moduli. Previously, we considered this continuous degeneracy as a flat valley required to dynamically generate a resonance pole in the steepest descent integral, which would necessitate conjecturing that all higher functional derivatives vanish. Now, we can finalize that line of  thought.

To extract a physical meson state of definite angular momentum $J$, we introduced the Fourier projector \eqref{SpinProjector} of our twisted amplitude $A(\alpha)$ onto the given $J$ state:
\EQ{
&A_J = \int_0^{2 \pi} d\alpha \,\sum_k e^{-\I k \alpha J} A_k(\alpha)\br
&=\sum_{k=1}^\infty \int_0^{2\pi} d \alpha \exp{\I k \alpha \Phi_\star(E, J)/(4\pi)} \br
&\propto \frac{4 \pi \I}{ \Phi_\star(E, J)} \sum_{k=1}^\infty \frac{1 - \exp{\I k\Phi_\star(E, J)/2}}{k} 
}
Here $\Phi_\star(E, J)$ is the value of $\Phi(E, J, \mathcal K, \beta)$ at its extremum as described by above saddle point equations for moduli $\mathcal K, \beta$.
The last sum logarithmically diverges at large windings $k$, which justifies our WKB approximation used in estimating the Hessian and neglecting deviations from this classical trajectory. The coefficient in front of this logarithmically divergent factor is
\EQ{
A_J \propto \frac{4\pi \I  \log k_{\text{max}}}{\Phi_\star(E, J)}
}
\begin{remark}{\textbf{Hessian estimate and Quantum Fluctuations}}

    A more accurate estimate of the divergent sum over windings must account for the transverse quantum fluctuations of the trajectory around this flat classical valley. Because the classical orbit length scales linearly with the winding number $k$, the stability matrix (the Hessian $H$) is directly proportional to $k$. 
    We presume that the fluctuating part of the twistor fields is strictly periodic, so that it will not contribute to the monodromy The fluctuating part of the action will, as usual , start with quadratic terms (the linear terms vanishing because we minimize over parameters of the classical solutions we have considered above). 
    Therefore, the $k-$fold integral over the period $\Delta \tau = 2 \pi$  for this fluctuative part will be independent of $a$ and linear in $k$. Let us elaborate. The effective action for the fluctuative part will no longer be constant in $\tau$, but it will be periodic, which will result in linear $k-$ dependence of the integral over multiple periods $0 < \tau < 2 \pi k$.
 
 The scaling factor for the integration measure can be extracted via the $\zeta$-regularization of the functional determinant of this Hessian. Using the canonical value $\zeta(0) = -1/2$ for the trace over the 1D loop boundary modes, this yields the fluctuation weight:
    \EQ{ 
    \frac{1}{\sqrt{ \det (k H)}} \propto \exp{ - \oh \tr \log (k H) }\propto k^{-\oh \zeta(0)} = k^{\frac{1}{4}}
    }
    Crucially, the Fourier projection integral over the twist angle $\alpha$ inherently generates a kinematic factor of $1/k$:
    \EQ{
    \int_0^{2\pi} d\alpha \, e^{i k \alpha \Phi_\star} = \frac{e^{2\pi i k \Phi_\star} - 1}{i k \Phi_\star} \propto \frac{1}{k \Phi_\star}
    }
    Combining this $1/k$ kinematic factor with the $k^{1/4}$ fluctuation weight, the terms in the macroscopic winding sum scale as $k^{1/4} / k = k^{-3/4}$. Summing this up to a macroscopic cutoff $k_{\text{max}}$ makes the divergence algebraically stronger than a logarithm, but preserves the fundamental geometric existence of the simple mass pole:
    \EQ{
    A_J \propto \frac{1}{\Phi_\star(E, J)} \sum_{k=1}^{k_{\text{max}}} k^{-\frac{3}{4}} \propto \frac{(k_{\text{max}})^{\frac{1}{4}}}{\Phi_\star(E, J)}
    }
    Strictly speaking, the  integral remains finite at finite $k_{\text{max}}$ when $\Phi_\star(E, J) \to 0$. However, this happens in the infinitesimal vicinity, when $\Phi_\star(E, J) \sim 1/k_{\text{max}}$.  Therefore, with proper renormalization of the string path integral, we can tend $k_{\text{max}} \to \infty$, after which we find the desired pole at $\Phi_\star(E, J)=0$ .
\end{remark}
\subsection{Comparison with Experimental Meson Spectrum}
In this section, we compare the planar Regge trajectories derived from the twistor string effective action with the experimental meson mass spectrum from the Particle Data Group (PDG). The bound states are governed by the unified parametric equations derived in \ref{sec:twistorSpectral}. Parameterized by the spatial boundary phase $\beta \in [0, \pi/2)$ and the dimensionless renormalized boundary mass $x = m / \sqrt{\sigma}$, the energy $E = M$ and angular momentum $J$ of the states are (see Remark \ref{quarkMass} for the definition of the renormalized boundary mass):
\EQ{
\label{FinalTrajectory_app}
&\frac{M(\beta)}{\sqrt{\sigma}} =  \mathcal K(\beta,x) \left(\beta + \frac{\sin 2\beta}{2}\right) + 2 x \cos\beta, \br 
&J(\beta) = \frac{\mathcal K(\beta,x)^2}{4} \left(\beta + \frac{\sin 2\beta}{2} \right)  - \frac{1}{6\pi}(\tan\beta - \beta)  - \frac{q}{2}, \br
&\mathcal K(\beta,x) = \frac{\sin \beta}{\cos^2 \beta} \left( x + \sqrt{x^2 - \frac{1}{3\pi}} \right) \equiv \frac{\sin \beta}{\cos^2 \beta} \tilde{x}. 
}
Here, the topological trace phase $q=0$ identifies the unshifted unnatural parity ($\pi, K, D$) trajectories, while $q=-1$ shifts the natural parity ($\rho, K^*, D^*$) trajectories by $+1/2$. To simplify the asymptotic expansions, we have introduced the dynamically dressed dimensionless mass parameter $\tilde{x}$, which enforces the chiral existence bound $x \ge 1/\sqrt{3\pi}$.
\begin{remark}{\textbf{The Renormalized Boundary Mass.}}
\label{quarkMass}
The chiral existence bound $x \ge 1/\sqrt{3\pi}$ forbids the limit of a massless string boundary. The geometric drag generates a minimum inertial mass to balance the anomaly at the boundaries. The parameter $m$ appearing in the string effective action is not the bare QCD quark mass, nor is it the standard phenomenological constituent mass. It is a \textbf{renormalized boundary mass}. It represents the bare quark mass additively renormalized by the macroscopic perimeter term of the worldsheet fermion determinant, inducing the Planar QCD in our theory.
\end{remark}
\begin {remark}{\textbf{Choosing the lightest particles (ground state).}}
  In general, there are several resonances in the PDG tables with the same quantum numbers $I^G(J^C)$ in each meson sector (given strangeness $S$ and charm $C$). We always selected the lowest mass particle with a constant set of quantum numbers at given variable $J$ to place on the same Regge trajectory $J = \alpha(M^2)$. Presumably, higher resonances in PDG tables can also be explained by twistor monodromies, but with more than one branching point. 
We leave this problem for a future extended investigation.
\end {remark}
\textbf{Universal String Tension and the Global Fit.}
Experimentally, the asymptotic slopes of the meson trajectories differ by less than $10\%$. Within the planar limit ($N_c \to \infty$) of Geometric QCD, there exists only one universal gluon vacuum and a single, universal confining string tension $\sigma$. At leading order in $1/N_c$, there is no mechanism to produce different macroscopic slopes for different flavors. The observed $<10\%$ splitting is an $\mathcal{O}(1/N_c^2)$ non-planar effect driven by finite resonance widths, sea-quark loop insertions, and string-string interactions. This falls within the expected theoretical accuracy of the planar approximation.

Consequently, we perform a global joint optimization across the \textbf{thirteen} meson families ($\pi, \rho, K, K^*, D, D^*, D_s, D_s^*, B, B^*, B_s, B_s^*, B_c$). Instead of an arbitrary least-squares fit on masses, we minimize the fundamental action monodromy $\sum (\Delta S)^2$ across all \textbf{thirty-six} states. This global 5D optimization extracts exactly five fundamental parameters: the universal planar string tension $\sqrt{\sigma}$, and the four constituent boundary masses of the quarks ($m_u, m_s, m_c, m_b$). The asymmetric heavy-light and heavy-heavy trajectories mathematically utilize the arithmetic mean of their respective constituent boundary masses.

The global fit yields a universal string tension $\sqrt{\sigma} \approx 417$ MeV and constituent boundary masses $m_u \approx 136$ MeV, $m_s \approx 219$ MeV, $m_c \approx 1.60$ GeV, and $m_b \approx 5.07$ GeV. The fact that the $D_s$, $B_s$, and $B_c$ trajectories are completely locked by the relation $\bar{m}_{q_1 q_2} = (m_{q_1} + m_{q_2})/2$ without introducing any new free parameters proves that the boundary masses physically correspond to additive constituent quarks in the string endpoints. The ability of a single geometric tension to seamlessly govern the spectrum from the $140$ MeV pion all the way up to the $6.27$ GeV bottom resonances validates the absolute geometric universality of the twistor string.

The resulting non-linear trajectories are presented against the PDG data in Figure \ref{fig:MassiveReggeTrajectories}. 

\begin{figure}[htpb]
    \centering
    \includegraphics[width=0.5\textwidth]{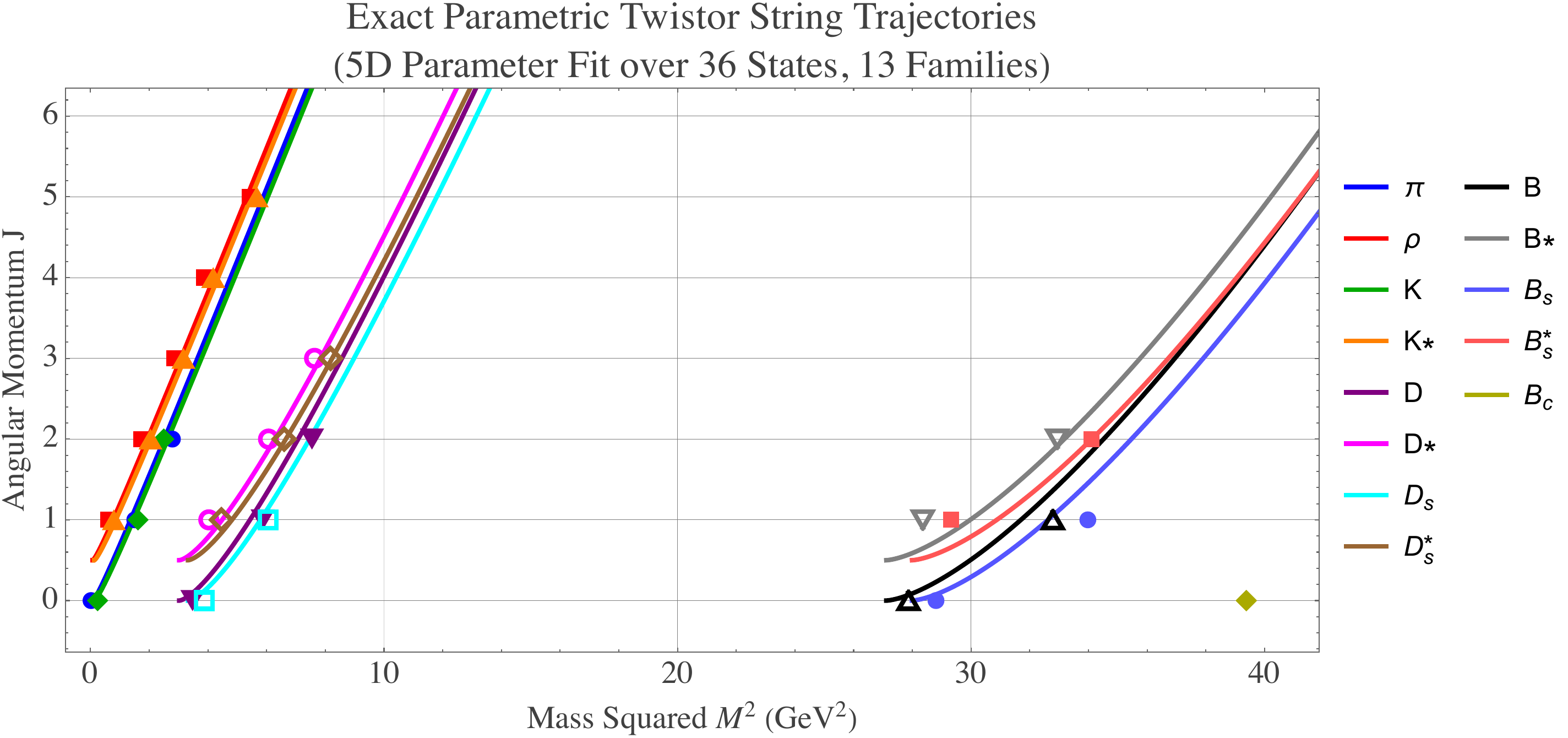}
    \caption{The geometric Regge trajectories plotted against the PDG meson spectrum. A single universal planar string tension $\sqrt{\sigma} \approx 417$ MeV anchors all 13 families. The $\pi$ and $\rho$ trajectories generate an $m_u \approx 136$ MeV renormalized boundary mass. The heavier trajectories accommodate the strange, charm, and bottom quarks via the arithmetic means of their dynamically extracted constituent boundary masses: $m_s \approx 219$ MeV, $m_c \approx 1.60$ GeV, and $m_b \approx 5.07$ GeV. The topological spin shift $q=-1$ elevates the natural parity branches. Experimental deviations in local slopes ($\sim 10\%$) are consistent with expected $\mathcal{O}(1/N_c^2)$ non-planar corrections.}
    \label{fig:MassiveReggeTrajectories}
\end{figure}

\textbf{The Ultra-Relativistic Asymptote (Large $J$).}
To analytically evaluate the high-energy limit, we expand the parametric equations as the boundaries approach the speed of light ($\beta \to \pi/2$). Defining the small parameter $\epsilon = \pi/2 - \beta \to 0$, the trigonometric functions expand as $\cos\beta \approx \epsilon$ and $\sin\beta \approx 1 - \epsilon^2/2$. The geometric equations expand to leading order as:
\EQ{
    E(\epsilon) &\approx \frac{\pi \tilde{x} \sqrt{\sigma}}{2 \epsilon^2} \implies \frac{1}{\epsilon} \approx \sqrt{\frac{2E}{\pi \tilde{x} \sqrt{\sigma}}}, \br
    J(\epsilon) &\approx \frac{\pi \tilde{x}^2}{8 \epsilon^4} - \frac{\pi \tilde{x}^2}{24 \epsilon^2} - \frac{1}{\epsilon} \left( \frac{\tilde{x}^2}{6} + \frac{1}{6\pi} \right) + \frac{1}{12} - \frac{q}{2}.
}
Because the squared energy expands as $\frac{E^2}{2\pi\sigma} \approx \frac{\pi \tilde{x}^2}{8\epsilon^4} - \frac{\pi \tilde{x}^2}{24 \epsilon^2} + \frac{x\tilde{x} - \tilde{x}^2/3}{\epsilon}$, substituting the inverted $1/\epsilon$ relation into the angular momentum equation cancels the divergent extensive terms, revealing the high-energy Regge trajectory:
\EQ{
    J(E) \approx \frac{E^2}{2\pi \sigma} - \left( x \tilde{x} - \frac{\tilde{x}^2}{6} + \frac{1}{6\pi} \right) \sqrt{ \frac{2E}{\pi \tilde{x} \sqrt{\sigma}} } + \frac{1}{12} - \frac{q}{2}.
}
This expansion modifies the linear Nambu-Goto expectation. The constant conformal intercept stabilizes to $+1/12$, recovering the L\"uscher zero-point energy of the open bosonic string in $D=4$. However, the Liouville anomaly mandates a finite effective boundary mass $\tilde{x}$, which acts as a geometric drag, generating a negative $\mathcal{O}(\sqrt{E})$ curvature. At large $J$, relativistic effects act on the massive endpoints, and this geometric inertial lag curves the trajectories downward. This departs from the straight lines of Nambu-Goto models and reproduces the slope-softening observed in physical high-spin resonances.

\textbf{The Non-Relativistic Hook (Small $J$).}
While the high-energy limit is dominated by relativistic drag, the finite renormalized boundary mass $m = x\sqrt{\sigma}$ generates a geometric non-linearity confined to the low-$J$ region. Near $J \to 0$, the string is short, the endpoints move non-relativistically, and the boundary angle shrinks ($\beta \to 0$). Taylor-expanding the parametric equations around $\beta = 0$ yields:
\EQ{
    E(\beta) &\approx 2m + \beta^2 \sqrt{\sigma} \left( 2\tilde{x} - x \right), \br
    J(\beta) &\approx \beta^3 \left( \frac{\tilde{x}^2}{2} - \frac{1}{18\pi} \right) - \frac{q}{2}.
}
Because the kinetic energy scales quadratically ($\Delta E \propto \beta^2$) while the angular momentum scales cubically ($J \propto \beta^3$), isolating $\beta$ from the energy equation and substituting it into $J$ produces the non-relativistic trajectory:
\EQ{
    J(E) \approx \left( \frac{\tilde{x}^2}{2} - \frac{1}{18\pi} \right) \left( \frac{E - 2m}{\sqrt{\sigma}(2\tilde{x} - x)} \right)^{3/2} - \frac{q}{2}.
}
Due to the fractional $3/2$ power, the macroscopic slope of the Regge trajectory at the threshold evaluates to $dJ/dE \propto \sqrt{E - 2m} \to 0$. This demonstrates that the trajectories form a geometric hook, entering the physical parameter space with a flat derivative before accelerating into the linear regime. This non-relativistic hook is beautifully realized in the heavy charmed ($D, D^*$) and bottom ($B, B_s, B_c$) meson trajectories,  which sit deep in the non-relativistic regime for low spins before curving upward towards the universal linear asymptote.

\textbf{Phenomenological Synthesis.}
The global geometric optimization of the dimensionless action monodromy target ($\sum(\Delta S)^2 = 0$) fits the experimental PDG masses and spins across all \textbf{thirteen meson families}. By evaluating the parametric equations of the twistor string rather than linear approximations, several phenomenological features of QCD are recovered.

First, the trajectories demonstrate the dynamical generation of a renormalized boundary mass gap. Unlike classical Nambu-Goto strings, which yield linear trajectories passing through the origin, the integration of the Liouville anomaly bounds the dimensionless mass to $x \ge 1/\sqrt{3\pi}$, forbidding the massless string limit. The theoretical trajectories terminate at finite rest masses. The geometric pressure of the anomaly forces the string endpoints to generate a renormalized boundary mass of $m_u \approx 136$ MeV for the light sector. For the heavier sectors, the 5D optimization dynamically extracts the constituent boundary masses $m_s \approx 219$ MeV, $m_c \approx 1.60$ GeV, and $m_b \approx 5.07$ GeV. The asymmetric heavy trajectories mathematically utilize the arithmetic mean of these respective scales, naturally isolating the physical flavor-symmetry breaking scales across the entire Standard Model meson spectrum.

Second, the topological intercept distinguishes the vector from the pseudo-scalar mesons. Setting the topological vector index $q=-1$ for the for the vector families ($\rho, K^*, D^*, D_s^*, B^*, B_s^*$)  shifts their respective trajectories upward without arbitrary additive constants. The zero-point rotational quantum phase converts the generated constituent rest mass into the necessary kinetic energy, predicting the physical $J=1$ vector ground states.

Third, the high-energy asymptotic regime reveals the effect of the conformal anomaly. The boundary anomaly term $-\frac{1}{6\pi}(\tan\beta - \beta)$ behaves as a geometric drag. As the massive string endpoints approach the speed of light ($\beta \to \pi/2$), this inertial lag curves the trajectories downward, reproducing the slope-softening observed in physical resonances ($M^2 > 4 \text{ GeV}^2$), departing from the straight lines of Nambu-Goto models.

Finally, the systematic variations of the experimental data points around the theoretical curves demonstrate the model's assumption of a universal underlying string tension $\sigma$. In physical QCD, explicit chiral symmetry breaking and hyperfine spin interactions cause the effective Regge slopes to vary slightly between different families. By enforcing a universal vacuum tension, the theoretical target $\sum (\Delta S)^2 = 0$ balances this physical variance. The resulting curves represent a universal tension of the strong vacuum that connects the distinct features of the hadronic spectrum.

\textbf{Numerical Mass Predictions and Fit Accuracy.}
\begin{table*}[t!]
\scriptsize
\centering
\renewcommand{\arraystretch}{1.1}
\setlength{\tabcolsep}{4pt}
\resizebox{\textwidth}{!}{%
\begin{tabular}{lcccc | lcccc}
\hline\hline
State & $J$ & $M_{\rm th}$ (GeV) & $M_{\rm exp}$ (GeV) & & State & $J$ & $M_{\rm th}$ (GeV) & $M_{\rm exp}$ (GeV) \\
\hline
\multicolumn{4}{c}{\textbf{Light Pseudoscalars ($\pi$, $q=0$)}} & & \multicolumn{4}{c}{\textbf{Light Vectors ($\rho$, $q=-1$)}} \\
\hline
$\pi$ & 0 & 0.271 & 0.140 & & $\rho$ & 1 & 0.878 & 0.775 \\
$b_1$ & 1 & 1.168 & 1.229 & & $a_2$ & 2 & 1.394 & 1.318 \\
$\pi_2$ & 2 & 1.586 & 1.671 & & $\rho_3$ & 3 & 1.756 & 1.689 \\
 & & & & & $a_4$ & 4 & 2.052 & 1.967 \\
 & & & & & $\rho_5$ & 5 & 2.308 & 2.330 \\
\hline
\multicolumn{4}{c}{\textbf{Strange Pseudoscalars ($K$, $q=0$)}} & & \multicolumn{4}{c}{\textbf{Strange Vectors ($K^*$, $q=-1$)}} \\
\hline
$K$ & 0 & 0.355 & 0.494 & & $K^*$ & 1 & 0.928 & 0.892 \\
$K_1$ & 1 & 1.212 & 1.272 & & $K^*_2$ & 2 & 1.435 & 1.430 \\
$K_2$ & 2 & 1.624 & 1.580 & & $K^*_3$ & 3 & 1.792 & 1.780 \\
 & & & & & $K^*_4$ & 4 & 2.085 & 2.045 \\
 & & & & & $K^*_5$ & 5 & 2.340 & 2.380 \\
\hline
\multicolumn{4}{c}{\textbf{Charm Pseudoscalars ($D$, $q=0$)}} & & \multicolumn{4}{c}{\textbf{Charm Vectors ($D^*$, $q=-1$)}} \\
\hline
$D$ & 0 & 1.739 & 1.865 & & $D^*$ & 1 & 2.118 & 2.008 \\
$D_1$ & 1 & 2.331 & 2.422 & & $D^*_2$ & 2 & 2.506 & 2.463 \\
$D_2$ & 2 & 2.660 & 2.747 & & $D^*_3$ & 3 & 2.799 & 2.763 \\
\hline
\multicolumn{4}{c}{\textbf{Charm-Strange ($D_s$, $q=0$)}} & & \multicolumn{4}{c}{\textbf{Charm-Strange Vectors ($D_s^*$, $q=-1$)}} \\
\hline
$D_s$ & 0 & 1.823 & 1.968 & & $D_s^*$ & 1 & 2.196 & 2.112 \\
$D_{s1}$ & 1 & 2.407 & 2.460 & & $D_{s2}^*$ & 2 & 2.580 & 2.569 \\
 & & & & & $D_{s3}^*$ & 3 & 2.869 & 2.860 \\
\hline
\multicolumn{4}{c}{\textbf{Bottom Pseudoscalars ($B$, $q=0$)}} & & \multicolumn{4}{c}{\textbf{Bottom Vectors ($B^*$, $q=-1$)}} \\
\hline
$B$ & 0 & 5.207 & 5.279 & & $B^*$ & 1 & 5.475 & 5.325 \\
$B_1$ & 1 & 5.630 & 5.726 & & $B_2^*$ & 2 & 5.760 & 5.739 \\
\hline
\multicolumn{4}{c}{\textbf{Bottom-Strange ($B_s$, $q=0$)}} & & \multicolumn{4}{c}{\textbf{Bottom-Strange Vectors ($B_s^*$, $q=-1$)}} \\
\hline
$B_s$ & 0 & 5.291 & 5.367 & & $B_s^*$ & 1 & 5.557 & 5.415 \\
$B_{s1}$ & 1 & 5.711 & 5.829 & & $B_{s2}^*$ & 2 & 5.840 & 5.840 \\
\hline
\multicolumn{4}{c}{\textbf{Charm-Bottom ($B_c$, $q=0$)}} & & \multicolumn{4}{c}{} \\
\hline
$B_c$ & 0 & 6.675 & 6.275 & & & & & \\
\hline\hline
\end{tabular}}
\caption{The theoretical twistor string masses versus the experimental PDG meson masses. All theoretical values are generated from a single global $\sum (\Delta S)^2 = 0$ optimization over 36 states spanning 13 meson families, extracting exactly 5 fundamental physical parameters: a universal planar tension $\sqrt{\sigma} \approx 417$ MeV, and four dynamically generated constituent boundary masses $m_u \approx 136$ MeV, $m_s \approx 219$ MeV, $m_c \approx 1.60$ GeV, and $m_b \approx 5.07$ GeV. The asymmetric heavy trajectories use the arithmetic mean of the respective constituent boundary masses. Therefore, the $D_s, B_s, B_c$ families are parameter-free geometric predictions. The pion falls below the string's mass threshold ($2m_u \approx 271$ MeV) due to explicit chiral symmetry breaking.}
\label{tab:MassSpectrum}
\end{table*}

To quantify the accuracy of the planar geometric fit, rather than relying on artificial least-squares statistics, we evaluate three physical root-mean-square (RMS) metrics across the experimental spectrum. The theoretical mass predictions for all \textbf{thirty-six} on-shell states are presented in Table \ref{tab:MassSpectrum}.

The most fundamental geometric metric for the fit is the residual of the action monodromy phase $\Delta \Phi = \Delta S / a$. In the Bohr-Sommerfeld quantization condition, sequential discrete states are separated by a phase gap of exactly $\Delta S_{\rm gap} / a = 2\pi$. Evaluating the Root-Mean-Square (RMS) of the action residual across all \textbf{36 states yields ${\rm RMS}(\Delta\Phi/2\pi) \approx 0.36$}. This demonstrates that the geometric phase error is merely $\sim 1/3$ of a single quantum level, ensuring that the physical states remain safely isolated within the $n=0$ topological winding sector without overlapping into radial excitations.

Furthermore, because the non-linear Regge trajectories are defined geometrically as $J = \alpha(M^2)$, a natural physical metric for their accuracy is the absolute spin deviation. Evaluating the theoretical angular momentum at the exact experimental masses yields an RMS deviation of \textbf{$\Delta J_{\rm RMS} \equiv \sqrt{\langle (J_{\rm exp} - \alpha(M_{\rm exp}^2))^2 \rangle} \approx 0.18$}. Since physical Regge states are separated by exactly $\Delta J = 1$, this confirms the twistor geometry correctly threads the mass spectrum to within less than one-fifth of an orbital quantum step.

Finally, we evaluate the relative mass deviations. Excluding the $140$ MeV pion (which acts as an explicit pseudo-Goldstone boson and sits naturally below the geometric chiral threshold $2m_u \approx 271$ MeV), the RMS of the relative mass error $(M_{\rm exp} / M_{\rm theor} - 1)$ across the remaining \textbf{35 massive states is $\approx 8.1\%$}. This firmly aligns with the expected $\mathcal{O}(1/N_c^2 \approx 10\%)$ theoretical variance of planar dynamics at $N_c=3$, verifying that the universal planar string tension accurately anchors the disparate bottom, charm, strange, and light flavors.

\section{Conclusion: A Solution of Planar QCD}

In this paper, we have presented an analytic solution of the planar MM loop equations in the continuum limit. Building on Part I, where the confining dressing factor $\exp{-\kappa S[C]}$ arises from the Hodge-dual additive minimal surface, we have constructed the dynamics without summing over fluctuating worldsheet metrics.

Our first result is structural. In momentum loop space, the coordinate-space contact singularities and cusp pathologies are absent. The Taylor--Magnus expansion of the vector momentum-loop equation is consistent up to $O(P'^6)$ but develops a rank deficiency at $W(8)$. This algebraic obstruction shows that a one-dimensional scalar loop functional does not contain sufficient degrees of freedom to represent a solution of the vector loop equations. This necessitates the introduction of four-dimensional boundary twistor variables, leading to the twistor-string representation of the Hodge-dual minimal surface.

Our second result is dynamical. The internal Majorana fermions (the ``elves'') on the Hodge-dual minimal surface provide the mechanism for planar factorization at intersections, in agreement with the right-hand side of the MM equations. The bulk Liouville term in the effective action arises from the cancelation of their zero-point fluctuations. This cancelation is needed to remove anomalies in the MM equations. The associated conformal anomaly yields the Lüscher phase ($-\pi/12$) in the effective string action, contributing to the low-$M^2$ curvature of the Regge trajectories.

Within the planar limit ($N_c \to \infty$), the solution corresponds to a common confining string tension $\sigma$ for all trajectories. There are nonlinear effects, which are not present in the Nambu-Goto string theory, but these nonlinear effects are numerically small. The observed $\sim 10\%$ variation between the macroscopic slopes of the \textbf{different flavor} trajectories is attributed to $O(1/N_c^2)$ corrections, including finite resonance widths, string interactions, and sea-quark effects, consistent with the expected accuracy of the planar approximation.
The resulting parametric Regge trajectories describe \textbf{thirteen meson families} spanning the light, strange, charm, and bottom flavors \textbf{($\pi, \rho, K, K^*, D, D^*, D_s, D_s^*, B, B^*, B_s, B_s^*, B_c$)}, reproducing the observed mass-spin relations across all \textbf{36 states} (see Figure 15 and Table 1).

We must add a remark regarding the physical boundaries of the current geometric framework. The spontaneous breaking of chiral symmetry and the vanishing of the pion mass in the strict chiral limit are not explained by our theory. As demonstrated, the geometric pressure of the Liouville anomaly bounds the dimensionless string parameter to $x \ge 1/\sqrt{3\pi}$, which generates a strictly positive minimum geometric mass threshold ($2m \approx 270$ MeV) for the $J=0$ ground state. The physical $140$ MeV pion falls below this threshold because its anomalously light mass is governed by the non-perturbative chiral condensate, a phenomenon distinct from the geometry of the confining flux tube. Fully capturing the pseudo-Goldstone nature of the pion would require a structural modification of the theory, making the Master Field explicitly accommodate and respect spontaneous chiral symmetry breaking. We view the synthesis of this geometric confinement mechanism with chiral symmetry breaking as a crucial direction for future theoretical development.

The mathematical origin of our nonlinear Regge trajectories is radically different from that of conventional String Theory.
The generation of S-matrix mass poles is associated with the degeneracies of the twistor-string effective action. After analytic continuation to Minkowski space, these poles correspond to energies at which the stability matrix develops a null space, producing non-compact directions in the associated Lefschetz thimbles.

These non-compact directions arise from the multi-valued monodromy of the holomorphic twistor fields. At the saddle point, the geometric phase satisfies $\Delta S_{\text{geom}} = 0$, and the summation over winding sectors produces a geometric series that diverges at the corresponding energies. The contribution of multiple windings suppresses transverse fluctuations, localizing the twistor moduli near the saddle point. The stability of this structure is maintained by a topological constraint: when a branch point approaches the integration contour, it encounters its conjugate image, leading to a pinch singularity that preserves the winding sector.

In this solution of the loop equations, the Master Field is realized as a trajectory in twistor space, analogous to spectral curves in Seiberg--Witten theory. Because the bulk geometry is determined by the boundary data rather than fluctuating dynamically, this construction does not involve a propagating graviton sector and instead provides a geometric realization of gauge holography.

In this framework, the quantum dynamics of planar QCD, encoded in the loop equations and reproduced by planar diagrams, are realized as classical dynamics of the dual geometric variables. The fermion determinant on the Hodge-dual minimal surface reproduces the quantum planar expansion, while the underlying geometry remains classical, reflecting the Master Field nature of the large-$N_c$ limit.

In this way, the QCD mass spectrum was computed by solving a problem of classical geometry.

\section*{Acknowledgements}

The author thanks N. Arkani-Hamed for the invitation to present this work at the IAS Particle Physics (Pizza) Seminar, and the participants of the seminar for detailed and stimulating discussions.

The author is also grateful to E. Witten for discussions that clarified the relation between this twistor-geometric construction and the QCD string observed in lattice studies, as well as for comments on its topological aspects.

\section*{Declaration of Generative AI in the Writing Process}

During the preparation of this manuscript, the author used the generative AI systems Gemini 3 Flash and ChatGPT 5.3 for technical assistance in editing and presentation. The tools were used for correcting typographical and grammatical errors, improving the clarity of exposition, and standardizing notation and LaTeX formatting.

All scientific content, derivations, and conclusions were developed and verified by the author, who takes full responsibility for the manuscript.

\section*{Data Availability.}
No data was created in this paper. The analytic computations of the Magnus expansion of the Momentum loop equations were performed in the \Mathematica{} notebook published on the Wolfram Cloud\cite{MBMLEAlgebraic}. The computations of the Regge trajectories and fitting the slopes to the meson mass data are done in \Mathematica{} notebook published on the Wolfram Cloud\cite{MBReggeNonlinear}.

\bibliographystyle{elsarticle-num}
\bibliography{bibliography} 
\appendix
\section{Conformal anomaly in determinants}

\subsection{Laplace operator}
Let us consider the logarithm of the determinant of the scalar Laplace operator:
\begin{smalleq}\begin{align} \label{A.1}
\ln \det \hat L = \tr \ln \hat L.
\end{align}\end{smalleq}
We apply the standard Pauli-Villars regularization
\begin{smalleq}\begin{align} \label{A.2}
(\ln \hat L)_{reg} = \ln \hat L - \sum c_i \ln(\hat L + M_i^2).
\end{align}\end{smalleq}
Now, at finite regulator masses $M_i$ the trace converges provided the constants $c_i$ satisfy the sum rules
\begin{smalleq}\begin{align} \label{A.3}
\sum c_i M_i^{2k} = \delta_{k0}, \quad k = 0, 1, ...
\end{align}\end{smalleq}
Consider now the conformal variation
\begin{smalleq}\begin{align} \label{A.4}
\delta\hat L &= -\delta e e^{-1} \hat L,\\
\label{A.5}
\delta \ln \hat L &=\delta\hat L  \hat L^{-1} = -\delta e e^{-1};\\
 \label{A.6}
\delta \ln(\hat L + M^2) &= -\delta e e^{-1} (1 - M^2(\hat L + M^2)^{-1}).
\end{align}\end{smalleq}
The variation of the regularized trace (A.2) reads-
\begin{smalleq}\begin{align} \label{A.7}
\delta \tr(\ln \hat L)_{reg} = -\sum c_i \tr(\delta e e^{-1} M_i^2 (\hat L + M_i^2)^{-1}).
\end{align}\end{smalleq}
Now only the regulator terms contribute, so that we may apply the WKB expansion
\begin{smalleq}\begin{align} \label{A.8}
\hat L = -\frac{1}{e}(\pd_\xi)^2 \to -\frac{1}{e_0}\lrb{1 + \frac{e_0 R_0}{4} (\xi-\xi_0)^2}(\pd_\xi)^2
\end{align}\end{smalleq}
where $R_0$ is the local curvature. Here $\xi_0$ is the running point in the integral
\begin{smalleq}\begin{align} \label{A.9}
&\tr \delta e e^{-1} M^2 (\hat L + M^2)^{-1} \br
&= \int \dd^2\xi_0 \delta e_0 e_0^{-1} M^2 \VEV{\xi_0 | (\hat L + M^2)^{-1} | \xi_0}.
\end{align}\end{smalleq}
Using the Fourier expansion at the tangent plane,
\begin{smalleq}\begin{align} \label{A.10}
&\VEV{\xi_0 | (\hat L + M^2)^{-1} | \xi_0} \br
&\to e_0 \int \frac{\dd^2p}{(2\pi)^2} (M^2 + p^2 - R_0/4 (\pd_p)^2)^{-1}\br
&\to e_0 \int \frac{\dd^2p}{(2\pi)^2}(M^2 + p^2)^{-1} \lrb{1 + R_0/4  (\pd_p)^2 \frac{p^2}{M^2 + p^2}}
\end{align}\end{smalleq}
and calculating the Fourier integral, we find [by virtue of \eqref{A.3}]
\begin{smalleq}\begin{align} \label{A.11}
&\delta \tr(\ln \hat L)_{reg}\br
&= \int \frac{\dd^2\xi_0}{4\pi} \delta e_0 (\sum c_i M_i^2 \ln M_i^2 - 1/6 R).
\end{align}\end{smalleq}
This equation can be easily integrated backwards:
\begin{smalleq}\begin{align} \label{A.12}
\tr(\ln \hat L)_{reg} = a_0 + \int \dd^2\xi \lrb{a_1 e - \frac{(\pd_a \ln e)^2 }{48\pi}},
\end{align}\end{smalleq}
with
\begin{smalleq}\begin{align} \label{A.13}
a_1 = \frac{1}{4\pi} \sum c_i M_i^2 \ln M_i^2,
\end{align}\end{smalleq}
and arbitrary constant $a_0$.
\subsection{Dirac operator}
In case of the square of the Dirac operator $D^2$ the WKB expansion starts as follows:
\begin{smalleq}\begin{align} \label{A.14}
D^2 \approx \1 (\hat L + R/4 ) + \frac{\I \sigma_3}{4} R \epsilon_{ab} \xi_a \pd_{\xi_b}.
\end{align}\end{smalleq}
The definition of the Dirac operator was given in the text. The same line of argument as with the Laplace operator yields
\begin{smalleq}\begin{align} \label{A.15}
&\sum_{\alpha = 1,2}\VEV{\alpha\xi_0 | (D^2 + M^2)^{-1} |\alpha \xi_0} \br
&\to 2e_0 \int \frac{\dd^2p }{(2\pi)^2} (M^2 + p^2)^{-1}\br
&\lrb{1 + 1/4 R_0 (-1 + (\pd_p)^2 p^2) (M^2 + p^2)^{-1})},\\
 \label{A.16}
&\delta \tr(\ln D^2)_{reg} \br
&= -2 \int \frac{\dd^2\xi_0}{4 \pi} \delta e_0 [\sum c_i M_i^2 \ln M_i^2 - R/12 ],\\
\label{A.17}
&\tr(\ln D^2)_{reg} = b_0 + \int \dd^2\xi [b_1 e + \frac{(\pd \ln e)^2 }{ 12\pi}].
\end{align}\end{smalleq}
The additional factor of 2 in these relations comes from two spin states $a = 1, 2$. The term with the $\sigma_3$ matrix in (A.14) did not contribute.

Note the remarkable relation
\begin{smalleq}\begin{align} \label{A.18}
\det D^2 \det \hat L = const \cdot \exp{\const{} Area}
\end{align}\end{smalleq}
which follows from the above formulas.

\section{Spinor Factorization and Analytical Structure of Minimal surface}
\label{GenTheory}

The mathematical problem of minimization of the Dirichlet functional with Virasoro constraint in four dimensions is a well known problem in modern mathematics. Let us briefly summarize the existing theory.

\subsection{Twistor Framework and the Klein Quadric}\label{twistorparams}
The fundamental constraint of the theory is the null condition on the holomorphic tangent vector of the minimal surface, $(f')^2 = \sum_{\mu=1}^4 (f'_\mu)^2 = 0$. Geometrically, this defines the \textit{Klein Quadric} in the complexified tangent space $\mathbb{CP}^3$.
Using the isomorphism $SO(4, \mathbb{C}) \cong SL(2, \mathbb{C})_L \times SL(2, \mathbb{C})_R$, we map the vector $f'_\mu$ to a $2 \times 2$ matrix $f'_{a\dot{b}} = f'_\mu (\sigma_\mu)_{a\dot{b}}$. The null condition corresponds to $\det(f') = 0$, ensuring the matrix has rank 1. Thus, the tangent vector factorizes globally into a product of two spinors:
\begin{smalleq}
\begin{align}
\label{twistornullvector}
f'_{a\dot{b}}(z) = \lambda_a(z) \mu_{\dot{b}}(z)
\end{align}
\end{smalleq}
where $\lambda_a$ and $\mu_{\dot{b}}$ are meromorphic sections of spinor bundles over the worldsheet. This factorization linearizes the quadratic Virasoro constraint, replacing it with the problem of determining the analytic properties of the spinors.

\subsection{Gauge Invariance and Birkhoff Factorization}
The physical vector $f'$ is invariant under the local complex gauge transformation:
\begin{smalleq}
\begin{align}
\lambda_a(z) \to w(z) \lambda_a(z), \quad \mu_{\dot{b}}(z) \to w^{-1}(z) \mu_{\dot{b}}(z)
\end{align}
\end{smalleq}
where $w(z)$ is a meromorphic function. This redundancy is governed by the Birkhoff-Grothendieck theorem on the splitting of vector bundles over $\mathbb{P}^1$. The gauge function $w(z)$ allows us to distribute the zeroes and poles between $\lambda$ and $\mu$, classifying solutions by integer topological indices (partial indices) related to the winding number of the gauge field required to regularize the spinors at infinity.
\subsection{Analytic Structure via Plemelj Projection}
The connection to the physical loop $C(\sigma)$ (where $\sigma = e^{\I\theta}$) is achieved via the Plemelj-Sokhotski decomposition. We expand the loop coordinates into positive and negative frequency modes:
\begin{smalleq}
\begin{align}
C(\sigma) = C_+(\sigma) + C_-(\sigma) = \sum_{n \ge 0} c_n \sigma^n + \sum_{n < 0} c_n \sigma^n
\end{align}
\end{smalleq}
If the loop is algebraic, $C_-(\sigma)$ extends to a rational function $C_-(z)$ in the complex plane $|z|>1$. The poles of the spinor solution $f'(z)$ are dictated by the singularities of $C_-(z)$.
Analytically, the problem is to find the "closest" null vector $f'$ to the derivative $\partial_z C_+$. This is equivalent to finding a subspace $W$ in the Sato Grassmannian that satisfies the isotropic condition and matches the boundary data.

\subsection{Explicit Algebraic Solutions}
The general theory of the spinor factorization predicts that rational choices for the sections $\lambda_a(z)$ and $\mu_{\dot{b}}(z)$ generate algebraic minimal surfaces. These correspond to "Finite Gap" solutions where the algebraic curve has a finite genus. The distribution of poles between the spinors determines the topology and boundary behavior of the surface.

We illustrate this with four classical examples, reinterpreted in our spinor language:
\textbf{1. The Enneper Surface (Polynomial Solution).}
This is the simplest solution with trivial topology (genus zero, one boundary at infinity), corresponding to polynomial spinors with a single pole at $z=\infty$.
\begin{smalleq}
\begin{align}
    \lambda(z) = \begin{pmatrix} 1 \\ z \end{pmatrix}, \quad \mu(z) = \begin{pmatrix} 1 \\ -z \end{pmatrix}
\end{align}
\end{smalleq}
The resulting null vector $f'(z) \sim (1-z^2, \I(1+z^2), 2z, 0)$ integrates to a cubic polynomial map. This surface, discovered by Enneper in 1864 \cite{enneper1864}, represents the fundamental "ground state" of the algebraic solutions.
\textbf{2. The Catenoid (Rational Solution with Poles).}
To generate a surface with the topology of an annulus (the only minimal surface of revolution), one must introduce simple poles into the spinors.
\begin{smalleq}
\begin{align}
    \lambda(z) = \begin{pmatrix} z^{-1/2} \\ z^{1/2} \end{pmatrix}, \quad \mu(z) = \begin{pmatrix} z^{-1/2} \\ -z^{1/2} \end{pmatrix}
\end{align}
\end{smalleq}
Using the gauge freedom $\lambda \to w \lambda, \mu \to w^{-1} \mu$ with $w=z^{-1/2}$, we can redistribute the poles to find the standard rational form. Euler \cite{euler1744} originally proved the minimality of this surface, which corresponds to the physical shape of a soap film stretched between two rings.
\textbf{3. The Henneberg Surface (Non-Orientable).}
A more complex rational ansatz generates the Henneberg surface \cite{henneberg1875}, which is a non-orientable minimal surface (a realization of the projective plane in 3D space). It arises from a rational map of degree 4:
\begin{smalleq}
\begin{align}
    \lambda(z) = \begin{pmatrix} 1-z^{-2} \\ z - z^{-1} \end{pmatrix}, \quad \mu(z) = \begin{pmatrix} 1 \\ -1 \end{pmatrix}
\end{align}
\end{smalleq}
\textbf{4. The Helicoid (Transcendental Solution).}
The Helicoid is the locally isometric conjugate surface to the Catenoid. In the spinor formalism, it arises from the same meromorphic data but with a phase shift that exposes the period of the complex logarithm.
\begin{smalleq}
\begin{align}
    \lambda(z) = e^{\I\pi/4} \begin{pmatrix} z^{-1/2} \\ z^{1/2} \end{pmatrix}, \quad \mu(z) = e^{\I\pi/4} \begin{pmatrix} z^{-1/2} \\ -z^{1/2} \end{pmatrix}
\end{align}
\end{smalleq}
The resulting null vector possesses a pole with purely imaginary residue, $f'(z) \sim \I/z$. Integration yields a logarithmic branch cut $f(z) \sim \I \ln z$, corresponding to the multi-valued height function of the double helix ($X_3 \propto \Im(\ln z) = \theta$). 

\section{Explicit invariant tensor solutions for the MLE}
\label{sec:appC}
\begin{table*}[t]
\centering
\renewcommand{\arraystretch}{1.2} % Adds a tiny bit of vertical padding so the math breathes
\begin{tabular}{|c|c|c|c|c|c|c|}
\hline
Order $n$ & $N_{eq}(n)$ & $N_{unk}(n)$ & $\text{rank}(A_n)$ & $\dim \text{LeftNull}(A_n)$ & $\Delta\text{rank}$ & Solvable? \\
\hline
4 & 4 & 6 & 4 & 0 & 0 & yes \\
6 & 31 & 30 & 26 & 5 & 0 & yes \\
8 & 379 & 212 & 82 & 297 & 1 & no \\
\hline
\end{tabular}
\caption{Linear algebra diagnostics for the exact Vector Momentum Loop Equation. Here $\Delta\text{rank} := \text{rank}([A_n|b_n]) - \text{rank}(A_n)$. A strict positivity, $\Delta\text{rank} > 0$,  proves the system is mathematically inconsistent.}
\label{tab:rank_diagnostics}
\end{table*}
In this appendix, we provide the explicit polynomial solutions for the finite functional Taylor--Magnus expansion of the Momentum Loop Equation, valid up to the non-analytic boundary at $\mathcal{W}^{(8)}$.
\subsection{Solving for the lowest terms $W^{(n <8)}$}
The vacuum is normalized to $\mathcal{W}^{(0)} = 1$. With the topological closure of the boundary gaps enforced via the subtraction of interleaved permutations, all odd traces identically vanish, $\mathcal{W}^{(2k-1)}=0$. 

The fully symmetric even-parity invariant tensors are constructed from all possible pair-wise contractions of the Kronecker delta $\delta_{\mu\nu}$. The lowest-order non-trivial tensors, parameterized by the unconstrained fundamental mass scales, evaluate to:
\begin{align}
\mathcal{W}^{(2)}_{\mu_1 \mu_2} &= \delta_{\mu_1 \mu_2} \br
\mathcal{W}^{(4)}_{\mu_1 \mu_2 \mu_3 \mu_4} &= \frac{1}{6} \delta_{\mu_1\mu_3}\delta_{\mu_2\mu_4} \br
&+ c_{4,1} \lrb{\delta_{\mu_1\mu_2}\delta_{\mu_3\mu_4} + \delta_{\mu_1\mu_4}\delta_{\mu_2\mu_3}}
\end{align}
The tensor $\mathcal{W}^{(6)}_{\mu_1 \dots \mu_6}$ is completely determined by the geometric stress cascading from the left-hand side loop derivative acting on the lower-order terms,  absorbing the kinematic constraints without contradiction. Its explicit expansion in terms of the 15 independent 6-point Kronecker pairings is uniquely fixed by the lower-order parameters $c_{2,1}\equiv 1$, $c_{4,1}$, and $c_{4,2}$. 

At $\mathcal{W}^{(8)}$, this linear absorption fails. The expansion truncates due to the irreducible non-linear $\mathcal{W}^{(4)} \times \mathcal{W}^{(4)}$ stress, which exceeds the tensor rank of the pure cyclic Shuffle Ideal basis,  proving the non-analyticity of the functional $\mathcal{W}[P]$.

We compute these terms in \cite{MBMLEAlgebraic}, where we prove lack of solution for $\mathcal{W}^{(8)}$.
\subsection{Counting of unknowns/equations and Fredholm inconsistency at $\mathcal{W}^{(8)}$}

To  demonstrate the breakdown of the Taylor-Magnus expansion at the 8th order ($\mathcal{W}^{(8)}$), we cast the Momentum Loop Equation at each order $n$ into a finite linear system. After imposing the closed-loop shuffle-ideal reduction (loop-closing prescription), cyclic symmetry, and parity constraints (all odd orders vanish), the  functional matching reduces to:
\EQ{
A_n x_n = b_n
}
where the vector of unknowns $x_n$ is explicitly decomposed as:
\EQ{
x_n = (c^{(n)}; \; \alpha^{(<n)}; \; g^{(n)})
}
Here, $c^{(n)}$ are the coefficients of the invariant-tensor basis of $\mathcal{W}^{(n)}$, $\alpha^{(<n)}$ are the residual free parameters surviving from lower-order solutions (e.g., those left unconstrained in $\mathcal{W}^{(4)}$ and $\mathcal{W}^{(6)}$), and $g^{(n)}$ are the gauge parameters associated with solving modulo the closed-loop shuffle ideal. 

The total number of unknown variables solved for at order $n$ is therefore $N_{unk}(n) = N_{\mathcal{W}}(n) + N_{free}(<n) + N_{gauge}(n)$. We analyzed the exact linear algebra of these systems using a specialized symbolic script. The  matrix ranks and subsystem counts are summarized in Table \ref{tab:rank_diagnostics}.

The finite-dimensional system $A_n x_n = b_n$ is solvable if and only if it satisfies the Fredholm alternative consistency condition:
\EQ{
u^T b_n = 0 \quad \forall u \in \text{LeftNull}(A_n)
}
At $n=8$, the matrix $A_8$ possesses 297 left-null vectors. Projecting the inhomogeneous right-hand side $b_8$ (the geometric stress generated by the discrete non-linear cross-terms of $\mathcal{W}^{(4)} \times \mathcal{W}^{(4)}$) onto these null directions yields explicit, non-zero rational obstructions. For example, selecting the first canonical left-null basis vector $u_1$ yields the  rational projection:
\EQ{
u_1^T b_8 = \frac{1}{36} \neq 0
}
This establishes a  mathematical contradiction. Two critical robustness statements follow directly from this matrix representation:
\begin{itemize}
    \item Because the vector of unknowns $x_8$ explicitly includes all residual lower-order parameters $\alpha^{(<8)}$, this inconsistency proves that \textit{no possible choice} of lower-order free parameters can restore solvability.
    \item Setting the physical scale parameter to zero does not alleviate the structural rank deficiency.
\end{itemize}
The full basis construction, shuffle-ideal reduction, and  linear-algebra verification---including the generation of the matrices $A_n$ and the explicit null-vector projections---are provided in the executable Mathematica notebook \cite{MBMLEAlgebraic} (see Data Availability).

\section{The Pinch Singularity and the Topological Barrier}
\label{app:pinch}
To mathematically illustrate the appearance of the conjugate pole at $1/\bar{z}_0$ and the resulting impenetrable topological barrier at the unit circle, consider the evaluation of the Liouville kinetic anomaly term in the effective action. In the case of a meromorphic twistor field with one  pole $z_0$, the conformal factor generates an area integral of the form:
\EQ{
I(z_0) = \int_D d^2z \frac{1}{(z-z_0)(\bar{z}-\bar{z}_0)} 
}
In order to continue this integral as an analytic function of $z_0$ we rewrite the integrand as a total derivative with respect to $\bar{z}$:
\EQ{
I(z_0)= \frac{-\I}{2}\int_{z \bar z<1} dz d\bar{z} \, \partial_{\bar{z}}\lrb{\frac{\log(\bar{z}-\bar{z}_0)}{z-z_0}}
}
By applying Stokes' theorem, the two-dimensional bulk integral over the disk $D$ reduces to a one-dimensional contour integral over the boundary unit circle $z \bar z=1$:
\EQ{
I(z_0) =\frac{-\I}{2} \int_{z \bar z=1} dz \frac{\log(\bar{z}-\bar{z}_0)}{z-z_0}
}
Crucially, on the unit circle, the complex conjugate is  given by the geometric inversion $\bar{z} = 1/z$. Substituting this into the logarithm yields the  boundary representation of the action:
\EQ{
I(z_0) =\frac{-\I}{2} \oint_{|z|=1} dz \frac{\log(1/z - \bar{z}_0)}{z-z_0}
}
This  formula explicitly reveals the analytic structure governing the boundary dynamics. The integrand possesses a simple pole at $z = z_0$ (originating from the twistor itself) and a logarithmic branch point at $z = 1/\bar{z}_0$ (originating from the boundary inversion of the conjugate twistor). 
Because the physical domain restricts the pole to the interior of the disk ($|z_0| < 1$), its conjugate mirror  resides outside the disk ($|1/\bar{z}_0| > 1$). However, if any continuous Langevin fluctuation attempts to push the twistor pole across the boundary ($|z_0| \to 1$), its conjugate mirror must simultaneously approach the boundary from the outside. 

At the  moment of crossing, these two singularities  collide directly on the physical integration contour $|z|=1$, trapping the integration path between them. This creates a severe, non-integrable pinch singularity, causing the action integral to macroscopically diverge. This geometric divergence acts as an infinite energetic barrier, protecting the twistor pole from escaping and locking the string into its quantized topological sector.
\section{Analytical and numerical study of the Twistor String equation}
\label{sec:twistorSpectral}

The physical states of the twistor string are governed by the total action monodromy $\Delta S$ accumulated over one period of rotation. The Bohr-Sommerfeld quantization condition requires $\Delta S = 2\pi n$. As derived in Section 16, factoring out the overall twistor scale parameter $a$, we can write $\Delta S = a \Phi$, where the phase functional is
\EQ{
\Phi(K, \beta) &= 2\pi E K - 4\pi \left(J + \frac{q}{2}\right) - \pi \sigma K^2 \left(\beta + \frac{\sin 2\beta}{2}\right) \br
&- 4\pi m K \cos\beta - \frac{2}{3}(\tan\beta - \beta).
}
Here, $K = R/a$ is the kinematic radius, and $q$ represents the topological vector index ($q=0$ for the pseudo-scalar families, and $q=-1$ for the vector families).

\subsection{Significant reduction of the spectrum: only the $n=0$ level is left}

The classical string trajectory is determined by the saddle-point equations with respect to the dynamical variables $a, K$, and $\beta$. 
Setting $\partial_a \Delta S = 0$, we get the first equation:
\EQ{
\label{Phi0}
\Phi(K, \beta) = 0,
}
which implies that there are \textbf{no levels with $n \neq 0$} in the general Bohr-Sommerfeld quantization condition:
\EQ{
\Delta S = 2 \pi n, \quad n = 0, \pm 1, \pm 2, \dots
}
This important conclusion follows from the fact that our $\Delta S$ depends linearly on $a$ at fixed $\beta$ and $K = R/a$. Therefore, by changing our variables to $a, K, \beta$, we completely eliminate $a$ from the saddle-point equations and also obtain the extra equation \eqref{Phi0} for $\beta$ and $K$. 

Next, setting $\partial_K \Phi = 0$ yields the energy equation:
\EQ{
E = \sigma K \left(\beta + \frac{\sin 2\beta}{2}\right) + 2m \cos\beta. \label{eq:E_param}
}
Setting the boundary variation $\partial_\beta \Phi = 0$ balances the boundary tensions and yields a quadratic equation for $K$:
\EQ{
\sigma \cos^2\beta \, K^2 - 2 m \sin\beta \, K + \frac{1}{3\pi} \tan^2\beta = 0.
}
Note that we have obtained three equations for three unknowns $E, K, \beta$, with $a$ decoupled. Thus, we found the two equations for the two parameters of the saddle point and the third equation for the energy spectrum, just as intended, with the important restriction that all the extra Bohr-Sommerfeld levels $n \neq 0$ are absent.

\subsection{Algebraic simplifications of the spectral equation}

To systematically analyze and numerically fit these equations, it is advantageous to cast them into a dimensionless form. We introduce the dimensionless renormalized boundary mass parameters $x=m/\sqrt{\sigma}$ for the light quarks and $x^{\prime}=m^{\prime}/\sqrt{\sigma}$ for the strange trajectories, where $m^{\prime} = (m+m_s)/2$. Solving the quadratic equation for the radius, we find the  analytical root:
\EQ{
\mathcal K(\beta, x) \equiv \sqrt{\sigma} K(\beta) = \frac{\sin \beta}{\cos^2 \beta} \left( x + \sqrt{x^2 - \frac{1}{3\pi}} \right).
}
For a physical string to exist, the classical radius $K$ must be real, which requires the discriminant to be non-negative. Thus, the Minkowski version of the Liouville anomaly imposes a rigid chiral existence bound on the dimensionless mass:
\EQ{
x \ge \frac{1}{\sqrt{3\pi}}.
}

Substituting $E$ from Eq.~\eqref{eq:E_param} back into $\Phi = 0$, the mass boundary terms cancel, yielding the parametric angular momentum:
\EQ{
J(\beta) = \frac{\sigma K^2}{4} \left(\beta + \frac{\sin 2\beta}{2}\right) - \frac{1}{6\pi} (\tan\beta - \beta) - \frac{q}{2}. \label{eq:J_param}
}

To fit the unknown fundamental parameters---\textbf{the dimensionless constituent quark masses ($x_u, x_s, x_c, x_b$)} and the universal string tension---to the experimental Particle Data Group (PDG) data, we utilize the action monodromy target $\Phi = 0$. Rather than performing an arbitrary least-squares fit on $M^2$ or $J$ independently, we mathematically minimize the fundamental physical constraint of the theory itself (see the remark \ref{quarkMass} for the definition of our effective quark mass).

For any given observed resonance with experimental mass $M$ and  spin $J$, we first invert Eq.~\eqref{eq:J_param} to find the unique theoretical angle $\beta(J)$ such that $J_{\rm param}(\beta) = J$. Substituting this angle back into the master phase equation, the terms corresponding to angular momentum  cancel out. Since the theoretical parameters natively satisfy $\Phi = 0$, the complicated area, boundary mass, and Liouville anomaly terms  collapse, leaving the  geometric phase mismatch:
\EQ{
\Phi(M, J) = 2\pi K(\beta) \left( M - E_{\rm param}(\beta) \right).
}
This target function has  physical and mathematical meaning. Because $K \propto 1/\sqrt{\sigma}$, the target $\Phi$ is naturally dimensionless. At high energies, where $K \propto E/\sigma$, the action monodromy asymptotically scales as $\Phi \propto M \Delta M / \sigma \propto \Delta(M^2) / (2\sigma)$, intrinsically generating the correct mass-squared weighting expected for Regge trajectories without imposing it ad hoc. Crucially, for the non-rotating ($J=0$) pseudo-Goldstone bosons ($\pi$ and $K$), the string shrinks to a point ($\beta \to 0$ and $K \to 0$), causing $\Phi$ to identically vanish. This naturally decouples the anomalously light chiral masses from the macroscopic string tension.

The global \textbf{5D} parameter optimization was performed by minimizing the sum of the squared geometric action residuals,
\EQ{
    \chi^2(m_u, m_s, m_c, m_b, \sigma) = \sum_{q_1, q_2}\sum_{J,M} \left( \Phi(M, J, x_{q_1 q_2}, \beta(J)) \right)^2,
\label{E.10}
}
across all \textbf{thirty-six} on-shell PDG states spanning the \textbf{thirteen} meson families, where $x_{q_1 q_2} = (x_{q_1} + x_{q_2})/2$ is the dynamically constructed average boundary mass for the given state.
\end{document}